\theoremstyle{plain}
\newtheorem{theorem}{Theorem}[section]
\newtheorem{lemma}[theorem]{Lemma}
\newtheorem{corollary}[theorem]{Corollary}
\newtheorem{observation}[theorem]{Observation}
\newtheorem{protocol}{Protocol}
\theoremstyle{definition}
\newtheorem{definition}[theorem]{Definition}
\newtheorem{example}[theorem]{Example}
\theoremstyle{remark}
\newtheorem{remark}[theorem]{Remark}
\DeclareMathOperator{\TD}{TD}
\DeclareMathOperator{\OA}{OA}
\newcommand{\G}{\ensuremath{\mathcal{G}}} %groups
\DeclareMathOperator{\STS}{STS}
\DeclareMathOperator{\SkS}{S}
\DeclareMathOperator{\AG}{AG}
\newcommand{\A}{\ensuremath{\mathcal{A}}}
\newcommand{\B}{\ensuremath{\mathcal{B}}}
\newcommand{\C}{\ensuremath{\mathcal{C}}}
\newcommand{\D}{\ensuremath{\mathfrak{G}}}
\title{\bf Additional Constructions to Solve the Generalized Russian Cards Problem using Combinatorial Designs\footnote{Much of this work appears in the PhD thesis of
the first author~\cite{SwansonPhD_13}.}}
\author{Colleen M. Swanson \thanks{This work was supported in part by the TerraSwarm Research Center, one of six centers supported by the STARnet phase of the Focus Center Research Program (FCRP) a Semiconductor Research Corporation program sponsored by MARCO and DARPA.}\\
\small Computer Science \& Engineering Division\\[-0.8ex]
\small University of Michigan\\[-0.8ex] 
\small Ann Arbor, MI 48109, USA.\\
\small\tt cmswnsn@umich.edu\\
\and
Douglas~R.~Stinson\thanks{Research supported by NSERC grant 203114-11}\\
\small David R. Cheriton School of Computer Science\\[-0.8ex]
\small University of Waterloo\\[-0.8ex]
\small Waterloo, Ontario, N2L 3G1 Canada\\
\small\tt dstinson@uwaterloo.ca
}
\begin{document}

\maketitle

% E-JC papers must include an abstract. The abstract should consist of a
% succinct statement of background followed by a listing of the
% principal new results that are to be found in the paper. The abstract
% should be informative, clear, and as complete as possible. Phrases
% like "we investigate..." or "we study..." should be kept to a minimum
% in favor of "we prove that..."  or "we show that...".  Do not
% include equation numbers, unexpanded citations (such as "[23]"), or
% any other references to things in the paper that are not defined in
% the abstract. The abstract will be distributed without the rest of the
% paper so it must be entirely self-contained.

\begin{abstract}
In the generalized Russian cards problem, we have a card deck $X$ of $n$ cards and three participants, Alice, Bob, and Cathy, dealt $a$, $b$, and $c$ cards, respectively. Once the cards are dealt, Alice and Bob wish to privately communicate their hands to each other via public announcements, without the advantage of a shared secret or public key infrastructure. Cathy, for her part, should remain ignorant of all but her own cards after Alice and Bob have made their announcements. Notions for Cathy's ignorance in the literature range from Cathy not learning the fate of any individual card with certainty (\emph{weak $1$-security}) to not gaining any probabilistic advantage in guessing the fate of some set of $\delta$ cards (\emph{perfect $\delta$-security}). As we demonstrate in this work, the generalized Russian cards problem has close ties to the field of combinatorial designs, on which we rely heavily, particularly for perfect security notions. Our main result establishes an equivalence between perfectly $\delta$-secure strategies and $(c+\delta)$-designs on $n$ points with block size $a$, when announcements are chosen uniformly at random from the set of possible announcements. We also provide construction methods and example solutions, including a construction that yields perfect $1$-security against Cathy when $c=2$. Drawing on our equivalence results, we are able to use a known combinatorial design to construct a strategy with $a=8$, $b=13$, and $c=3$ that is perfectly $2$-secure. Finally, we consider a variant of the problem that yields solutions that are easy to construct and optimal with respect to both the number of announcements and level of security achieved. Moreover, this is the first method obtaining weak $\delta$-security that allows Alice to hold an arbitrary number of cards and Cathy to hold a set of $c = \lfloor \frac{a-\delta}{2} \rfloor$ cards. Alternatively, the construction yields solutions for arbitrary $\delta$, $c$ and any $a \geq \delta + 2c$.

  % keywords are optional
 % \bigskip\noindent \textbf{Keywords:} graph reconstruction
 % conjecture; Broglington manifold; Pipletti's classification
\end{abstract}

\section{Introduction}
In the generalized Russian cards problem, we have a card deck $X$ and three participants, Alice, Bob, and Cathy. Once the cards are dealt, Alice and Bob wish to privately communicate their hands to each other via public announcements, without the advantage of a shared secret or public key infrastructure. Here we focus on protocols of length two, which allows us to consider only Alice's announcement. That is, Alice should make an \emph{informative} announcement, so that Bob learns the card deal. Bob, after hearing Alice's informative announcement, can always announce Cathy's hand. Cathy, for her part, should remain ignorant of all but her own cards after Alice and Bob have made their announcements. 

Notions for Cathy's ignorance in the literature range from Cathy not learning the fate of any individual card with certainty (\emph{weak $1$-security}) to not gaining any probabilistic advantage in guessing the fate of some set of $\delta$ cards (\emph{perfect $\delta$-security}), where here we are referring to cards not already held by Cathy. As we discuss in this work, the generalized Russian cards problem has close ties to the field of combinatorial designs, on which we rely heavily, particularly for perfect security notions.

If a scheme satisfies weak $1$-security, Cathy should not be able to say whether a given card is held by Alice or Bob (unless she holds the card herself). If a scheme satisfies perfect $1$-security, each card is equally likely to be held by Alice. When Alice's strategy is \emph{equitable} (in the sense that Alice picks uniformly at random from some set of possible announcements), we show an equivalence between perfectly secure strategies and sets of $2$-designs on $n$ points with block size $a$.
	
	Generalizing these notions of weak and perfect security, which focus on the probability that individual cards are held by Alice, we consider instead the probability that a given set of $\delta$ cards is held by Alice. If the probability distribution is uniform across $\delta$-sets, we say the scheme satisfies perfect $\delta$-security, and if the distribution is not uniform (but positive for every possible $\delta$-set), then we  have weak $\delta$-security. We consider equitable strategies and show an equivalence between perfectly $\delta$-secure strategies and $(c+\delta)$-designs on $n$ points with block size $a$. For equitable, informative, and perfectly $(a-c-1)$-secure strategies, we show $c=1$ and demonstrate an equivalence between these strategies and \emph{Steiner systems} $\SkS(a-1,a,n)$, a result first shown in Swanson and Stinson~\cite{cardsSwansonS12}, albeit with a much more complicated proof than we present here. 
	
	Building on results in Swanson and Stinson~\cite{cardsSwansonS12}, we show how to use a $t$-$(n,a,1)$-design to construct equitable $(a,b,c)$-strategies that are informative for Bob and perfectly $(t-c)$-secure against Cathy for any choice of $c$ satisfying $c \leq \min \{t-1,a-t\}$. In particular, this indicates that if an appropriate $t$-design exists, it is possible to achieve perfect security for deals where Cathy holds more than one card.  We present an example construction, based on \emph{inversive planes}, for $(q+1, q^2-q-2, 2)$-strategies which are perfectly 1-secure against Cathy and informative for Bob, where $q$ is a prime power. This example, first given in Swanson~\cite{SwansonPhD_13}, is among the first strategies presented in the literature that is informative for Bob and achieves perfect $1$-security against Cathy for $c > 1$. This example was found independently from the work of Cord\'{o}n-Franco et al.~\cite{CDFS13}, discussed later, which presents a protocol that for certain parameters achieves perfect 1-security against Cathy for $c=2$. Finally, our results allow us to draw on a known combinatorial designs in order to realize a perfectly $2$-secure $(8,13,3)$-strategy, which shows that it is possible, at least for some deals, to achieve perfect security for $c > 2$.

	Finally, we discuss a variation on the generalized Russian cards problem, where the card deck is first split into $a$ piles, and Alice and Cathy's hands consist of at most one card from each pile, with Bob receiving the remaining cards. This variant admits a nice solution using \emph{transversal designs} with $\lambda =1$ that achieves weak $(a-2c)$-security. In particular, this solution is easy to construct and is optimal with respect to both the number of announcements and level of security achieved. Moreover, this is the first method obtaining weak $\delta$-security that allows Alice to hold an arbitrary number of cards and Cathy to hold a set of $c = \lfloor \frac{a-\delta}{2} \rfloor$ cards. Alternatively, the construction yields solutions for arbitrary $\delta$, $c$ and any $a \geq \delta + 2c$.

\section{Paper outline} 
After reviewing basic results from combinatorial designs in Section~\ref{sec:combinatorial_designs}, we review the basic framework for the generalized Russian cards problem and establish relevant notation in Section~\ref{subsec: Preliminary Notation and Examples}. In Section~\ref{sec:informative_strategies}, we study and define the notion of an informative strategy. We then move to a formal discussion of secure strategies in Section~\ref{sec:secure_strategies}. In Section~\ref{sec: Simultaneously Informative and Secure Strategies}, we explore strategies that are simultaneously informative and either weakly or perfectly $\delta$-secure, discussing construction methods and examples in Section~\ref{subsec: construction methods and examples}. In Section~\ref{sec: Variant} we discuss a variant of the generalized Russian cards problem and present a solution using transversal designs. We discuss related work in Section~\ref{sec: Related Work}. Finally, we give some concluding remarks in Section~\ref{sec: Conclusion}.

\section{Combinatorial designs}
\label{sec:combinatorial_designs}

In this section, we  present fundamental definitions and standard results from the theory of combinatorial designs needed in this paper. For general references on this material, we refer the reader to Stinson~\cite{S03} and Colbourn and Dinitz~\cite{CD}. All results stated in this section without proof can be found in~\cite{S03,CD}. 

\subsection{$t$-designs:}
\begin{definition}\label{t-design}
Let $v$, $k$, $\lambda$, and $t$ be positive integers with $v > k \geq t$.
A \emph{$t$-$(v,k,\lambda)$-design} is a pair $(X, \B)$ such that the following are satisfied:
	\begin{enumerate}
		\item $X$ is a set of $v$ elements called \emph{points}, 
		\item $\B$ is a collection (i.e., a multiset) of nonempty proper $k$-subsets of $X$, called \emph{blocks}, and
		\item every subset of $t$ distinct points from $X$ occurs in precisely $\lambda$ blocks. 
	\end{enumerate}
\end{definition}

\begin{definition} The design formed by taking $\lambda$ copies of every $k$-subset of a $v$-set as blocks is a $t$-$\left (v,k,\lambda \binom{v-t}{k-t} \right)$-design, called a \emph{trivial $t$-design}.
\end{definition}

\begin{definition}
A $t$-$(v,k,\lambda)$-design $(X, \B)$ is \emph{simple} if every block in $\B$
occurs with multiplicity one.
\end{definition}

\begin{remark} In the context of the generalized Russian cards problem, we will consider simple designs only, although we allow for multisets in Definition~\ref{t-design} for completeness.
\end{remark}

The following two theorems are standard results for $t$-designs:

\begin{theorem} \label{block.containing.Y} Let $(X,\B)$ be a $t$-$(v,k,\lambda)$-design. Let $Y \subseteq X$ such that $\lvert Y \rvert = s \leq t$. Then there are precisely \[\lambda_s = \frac{\lambda\binom{v-s}{t-s}}{\binom{k-s}{t-s}}\] blocks in $\B$ that contain $Y$.
\end{theorem}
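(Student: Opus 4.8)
The plan is to prove this by a double-counting (flag-counting) argument. Fix an arbitrary $Y\subseteq X$ with $\lvert Y\rvert = s$, and let $\mu$ denote the number of blocks of $\B$ that contain $Y$. A priori $\mu$ might depend on $Y$; part of what the theorem asserts is that it does not, and that it equals the stated value. Consider the set of ``flags''
\[
\mathcal{F} = \{\, (T,B) : Y \subseteq T \subseteq B,\ \lvert T\rvert = t,\ B \in \B \,\},
\]
and count $\lvert \mathcal{F}\rvert$ in two ways.

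First, count by choosing $T$ before $B$. Every $t$-subset $T$ with $Y\subseteq T\subseteq X$ is obtained by adjoining $t-s$ of the remaining $v-s$ points of $X$ to $Y$, so there are $\binom{v-s}{t-s}$ choices for $T$; by the defining property of a $t$-$(v,k,\lambda)$-design, each such $T$ lies in exactly $\lambda$ blocks, and any block containing $T$ automatically contains $Y$. Hence $\lvert\mathcal{F}\rvert = \lambda\binom{v-s}{t-s}$. Next, count by choosing $B$ before $T$. A block $B$ occurs in some flag only if $Y\subseteq B$, and there are $\mu$ such blocks; for each of them, a valid $T$ is obtained by adjoining $t-s$ of the $k-s$ points of $B\setminus Y$ to $Y$, giving $\binom{k-s}{t-s}$ choices (a well-defined nonzero quantity since $k\geq t$). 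Hence $\lvert\mathcal{F}\rvert = \mu\binom{k-s}{t-s}$.

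Equating the two counts gives $\mu\binom{k-s}{t-s} = \lambda\binom{v-s}{t-s}$, so $\mu = \lambda\binom{v-s}{t-s}/\binom{k-s}{t-s} = \lambda_s$. Since this expression has no residual dependence on $Y$, the number of blocks through $Y$ depends only on $s=\lvert Y\rvert$, which is exactly the claim.

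There is no serious obstacle here: the argument is routine. The only points requiring care are (i) applying the design axiom in the correct direction — to the $t$-subsets $T$, not to $Y$ itself (which is why the intermediate $t$-set is introduced) — and (ii) observing that the independence of $\lambda_s$ from the particular choice of $Y$ is an automatic consequence of the final formula rather than something needing a separate argument.
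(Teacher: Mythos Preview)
Your proof is correct; this is the standard double-counting argument. The paper itself does not prove this theorem, instead stating it as a known result with a reference to Stinson~\cite{S03} and Colbourn and Dinitz~\cite{CD}, so there is nothing further to compare.
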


\begin{theorem}\label{allY noZ thm t designs}
 Let $(X,\B)$ be a $t$-$(v,k,\lambda)$-design. Let $Y\subseteq X$ and $Z \subseteq X$ such that $Y \cap Z = \emptyset$, $\lvert Y \rvert = i$, $\lvert Z \rvert = j$, and $i + j \leq t$. Then there are precisely \[\lambda_i^j = \frac{\lambda\binom{v-i-j}{k-i}}{\binom{v-t}{k-t}}\] blocks in $\B$ that contain all the points in $Y$ and none of the points in $Z$.
\end{theorem}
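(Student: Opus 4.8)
The plan is to prove the formula by induction on $j$, using Theorem~\ref{block.containing.Y} for the base case and a simple partition of blocks for the inductive step. Throughout, I extend the notation $\lambda_i^j$ to denote the number of blocks containing a given $i$-set and disjoint from a given disjoint $j$-set, and I induct on $j$ over the statement ``for every $i$ with $i+j \le t$, this count equals $\lambda\binom{v-i-j}{k-i}/\binom{v-t}{k-t}$.''

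For the base case $j = 0$ we have $Z = \emptyset$, so we are counting blocks containing $Y$ with $\lvert Y \rvert = i \le t$; Theorem~\ref{block.containing.Y} gives this count as $\lambda\binom{v-i}{t-i}/\binom{k-i}{t-i}$, and it remains to verify that this equals $\lambda\binom{v-i}{k-i}/\binom{v-t}{k-t}$. This is the identity $\binom{v-i}{t-i}\binom{v-t}{k-t} = \binom{v-i}{k-i}\binom{k-i}{t-i}$, which follows from the subset-of-a-subset identity $\binom{m}{p}\binom{p}{q} = \binom{m}{q}\binom{m-q}{p-q}$ applied with $m = v-i$, $p = k-i$, $q = t-i$.

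For the inductive step, assume the formula for $j-1$, and let $Y, Z$ be disjoint with $\lvert Y \rvert = i$, $\lvert Z \rvert = j$, and $i + j \le t$. Fix $z \in Z$ and put $Z' = Z \setminus \{z\}$. Every block containing $Y$ and disjoint from $Z'$ either contains $z$ — in which case it contains the $(i+1)$-set $Y \cup \{z\}$ and avoids $Z'$ — or it does not, in which case it contains $Y$ and avoids $Z' \cup \{z\} = Z$; hence $\lambda_i^{j-1} = \lambda_{i+1}^{j-1} + \lambda_i^{j}$. The two terms on the right fall under the induction hypothesis because $i + (j-1) < i+j \le t$ and $(i+1)+(j-1) = i+j \le t$. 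Substituting the hypothesized values and solving for $\lambda_i^{j}$ yields
\[
\lambda_i^{j} = \frac{\lambda}{\binom{v-t}{k-t}}\left( \binom{v-i-j+1}{k-i} - \binom{v-i-j}{k-i-1} \right),
\]
and Pascal's rule $\binom{v-i-j+1}{k-i} = \binom{v-i-j}{k-i} + \binom{v-i-j}{k-i-1}$ collapses the parenthesized expression to $\binom{v-i-j}{k-i}$, completing the induction.

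There is no serious obstacle here: the argument is a routine double-counting induction, and the only thing requiring care is the binomial bookkeeping — getting the base-case identity and the Pascal step to line up, and checking that the constraint $i+j \le t$ is preserved so that Theorem~\ref{block.containing.Y} and the induction hypothesis remain applicable. An alternative route avoiding induction is to expand by inclusion--exclusion over subsets of $Z$, namely $\lambda_i^{j} = \sum_{h=0}^{j} (-1)^h \binom{j}{h}\, \lambda_{i+h}$ where each $\lambda_{i+h}$ comes from Theorem~\ref{block.containing.Y} (valid since $i+h \le i+j \le t$), and then simplifying the alternating binomial sum to the stated closed form; I expect the induction to give the cleaner write-up.
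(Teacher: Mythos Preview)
Your induction on $j$ is correct: the base case reduces to Theorem~\ref{block.containing.Y} via the subset-of-a-subset identity exactly as you say, and the inductive step's partition $\lambda_i^{j-1} = \lambda_{i+1}^{j-1} + \lambda_i^{j}$ together with Pascal's rule gives the closed form. The parameter checks $i+(j-1) \le t$ and $(i+1)+(j-1) \le t$ are handled properly.

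There is nothing to compare against here: the paper states Theorem~\ref{allY noZ thm t designs} as a standard result from the references~\cite{S03,CD} and does not supply its own proof. Your argument is one of the standard proofs of this fact, and the inclusion--exclusion alternative you mention is the other common one. (As an aside, the paper does prove the transversal-design analogue, Theorem~\ref{transversal security}, by a direct counting argument exploiting the group structure; that method has no obvious counterpart for general $t$-designs, so your induction is the natural route.)
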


\begin{example}\label{3-design}
 A $3$-$(8,4,1)$-design. \[X = \{0,1,2,3,4,5,6,7\} \mbox{ and}\] 
\[\B =\{3456, 2567, 2347, 1457, 1367, 1246, 1235, 0467, 0357, 0245, 0236, 0156, 0134, 0127\}.\]
\end{example}

The blocks of the design in Example~\ref{3-design} are the planes of $\operatorname{AG}(3,2)$. This is an example of a special type of design known as a \emph{Steiner system}, which is a $t$-design with $\lambda =1$. 

\begin{definition} A $t$-$(v,k,1)$-design is called a \emph{Steiner system with parameters $t$, $k$, $v$} and is denoted by $\SkS(t,k,v)$.
\end{definition}

Steiner systems will be useful for constructing solutions to the generalized Russian cards problem. We list constructions and existence results from the literature which we will make use of here; interested readers may find further details in~\cite{CD}.

\begin{definition} A \emph{Steiner triple system of order $v$}, or \emph{$\STS(v)$}, is an $\SkS(2,3,v)$, i.e., a Steiner system in which $k=3$.
\end{definition}

\begin{theorem} It is known that an $\STS(v)$ exists if and only if $v \equiv 1,3 \bmod 6$, $v \geq 7$.	
\end{theorem}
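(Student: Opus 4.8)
The plan is to prove the two implications separately. The forward direction (necessity of the congruence conditions) is a short double-counting argument, while the converse (existence for every admissible order) is the substantial part and is where I expect the real work to lie. For \textbf{necessity}, suppose $(X,\B)$ is an $\STS(v)$, i.e., a $2$-$(v,3,1)$-design. Fixing a point $x\in X$ and counting the pairs $\{x,y\}$ with $y\ne x$ — each lying in a unique block, and each block through $x$ accounting for exactly two of them — shows that $x$ lies in exactly $r=(v-1)/2$ blocks (this is the $s=1$ case of Theorem~\ref{block.containing.Y}), so $v$ is odd. Counting incidences between the $\binom{v}{2}$ pairs of points and the blocks, each block containing $\binom{3}{2}=3$ pairs, gives $\lvert\B\rvert = v(v-1)/6$, so $6\mid v(v-1)$. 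Among the odd residues $1,3,5\pmod 6$ only $v\equiv 1$ and $v\equiv 3$ satisfy this, and the requirement $v>k=3$ in Definition~\ref{t-design} rules out $v\in\{1,3\}$, leaving $v\ge 7$.

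For \textbf{sufficiency}, one must exhibit an $\STS(v)$ for each admissible $v$, and I would use the classical direct constructions of Bose and Skolem, treating the two residue classes separately. For $v=6t+3$: take the point set $\mathbb{Z}_{2t+1}\times\{1,2,3\}$ together with a commutative idempotent quasigroup $(\mathbb{Z}_{2t+1},\ast)$, for instance $x\ast y=(t+1)(x+y)$ in $\mathbb{Z}_{2t+1}$; the blocks are the ``column'' triples $\{(i,1),(i,2),(i,3)\}$ and the triples $\{(i,\ell),(j,\ell),(i\ast j,\ell+1)\}$ with $i<j$ and $\ell\in\mathbb{Z}_3$, and one verifies that every pair of points lies in exactly one block, with idempotency disposing of the degenerate cases. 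For $v=6t+1$: adjoin a new point $\infty$ to $\mathbb{Z}_{2t}\times\{1,2,3\}$, use a commutative half-idempotent quasigroup of order $2t$, and let $\infty$ absorb the pairs not handled by the generic triples; an analogous block list again yields the design. Alternatively one can avoid the quasigroup bookkeeping by a recursive argument: the doubling constructions $v\mapsto 2v+1$ and $v\mapsto 2v+7$ take admissible orders to admissible orders, and together with a finite list of small base cases — $\STS(7)$ (the Fano plane), $\STS(9)\cong\AG(2,3)$, $\STS(13)$, and a few more — they reach every admissible $v\ge 7$.

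The routine pieces are the necessity count and the mechanical check that each proposed block list covers every pair of points exactly once. The \textbf{main obstacle} is the sufficiency direction, and in particular arranging the constructions so that they cover \emph{both} residue classes cleanly: the case $v\equiv 1\pmod 6$ is the fussier one, since it requires the extra point $\infty$ and a half-idempotent (rather than fully idempotent) quasigroup to cope with exactly the pairs that the $v\equiv 3$ construction handles automatically. Since this is a classical theorem of Kirkman (with the direct constructions due to Bose and Skolem), in the present paper it is simply quoted from the standard references~\cite{S03,CD} rather than reproved.
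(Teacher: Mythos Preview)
Your proposal is correct, and you have in fact already anticipated the paper's treatment in your final sentence: the paper does not prove this theorem at all, but simply states it as a known result, referring the reader to the standard references~\cite{S03,CD} for all unproven results in the background section on combinatorial designs. Your outline of the necessity argument via double counting and the sufficiency argument via the Bose and Skolem constructions (or alternatively via recursive doubling) is the standard route and is accurate; there is nothing further to compare.
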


\begin{definition} A \emph{Steiner quadruple system of order $v$} is an $\SkS(3,4,v)$. 
\end{definition}

\begin{theorem} Steiner quadruples exist if and only if $v \equiv 2, 4 \bmod 6$.
\end{theorem}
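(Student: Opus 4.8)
The plan is to prove the two implications separately: the forward direction by a short counting argument, and the converse by a mixture of explicit small designs and recursive constructions.

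\textbf{Necessity.} Suppose $(X,\B)$ is an $\SkS(3,4,v)$, that is, a $3$-$(v,4,1)$-design. By Theorem~\ref{block.containing.Y} applied with $s=2$, every pair of points lies in exactly $\lambda_2 = \binom{v-2}{1}/\binom{2}{1} = (v-2)/2$ blocks, so $v$ must be even. Applying the same theorem with $s=1$, every point lies in exactly $\lambda_1 = \binom{v-1}{2}/\binom{3}{2} = (v-1)(v-2)/6$ blocks, so $6 \mid (v-1)(v-2)$. The product $(v-1)(v-2)$ of consecutive integers is automatically even, so the only extra condition is $3 \mid (v-1)(v-2)$, equivalently $v \not\equiv 0 \pmod{3}$. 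Combined with $v$ even, this gives $v \equiv 2$ or $4 \pmod{6}$ (and $v \geq 8$ for the system to be non-degenerate).

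\textbf{Sufficiency.} This is the substantive direction, due to Hanani. The strategy is to combine a small stock of explicitly constructed systems with recursive \emph{inflation} constructions. One first exhibits base designs: an $\SkS(3,4,8)$ --- for instance the planes of $\AG(3,2)$ of Example~\ref{3-design} --- together with direct constructions of $\SkS(3,4,10)$, $\SkS(3,4,14)$, and a few further small orders. The key recursions are a doubling construction $\SkS(3,4,v) \Rightarrow \SkS(3,4,2v)$ and a construction of the form $\SkS(3,4,v) \Rightarrow \SkS(3,4,3v-2)$: in each, one places copies of a known quadruple system on the parts of a partition of the enlarged point set and adjoins a carefully chosen family of \emph{transversal} blocks --- built from suitable group-divisible or transversal designs --- so that every triple of the new point set is covered exactly once. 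Iterating these from the base cases, supplemented by additional recursions of the same flavour (for instance of the form $v \mapsto 3v-2m$) to fill the remaining residue-class gaps, one reaches every admissible order.

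\textbf{Main obstacle.} Each individual construction reduces to a finite verification that certain triples are covered the right number of times; the delicate point is the global bookkeeping showing that the base cases together with the chosen recursions actually exhaust all $v \equiv 2,4 \pmod{6}$, with no order left uncovered. This case analysis on residue classes is the heart of Hanani's argument. As this existence result is classical and is used here only as a black box imported from~\cite{S03,CD}, we do not reproduce the construction in detail.
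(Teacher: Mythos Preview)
The paper does not prove this theorem at all: it is one of the standard design-theoretic facts listed in Section~\ref{sec:combinatorial_designs} with the blanket disclaimer that ``all results stated in this section without proof can be found in~\cite{S03,CD}.'' There is therefore no proof in the paper to compare your attempt against.

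That said, your proposal is sound as far as it goes. The necessity argument is correct and appropriately invokes Theorem~\ref{block.containing.Y}. Your sufficiency sketch correctly attributes the result to Hanani and gives a faithful high-level picture of the recursive strategy (base designs plus doubling/tripling-type inflations), and you rightly flag that the residue-class bookkeeping is the real work. Since the paper itself treats this as a black box imported from the literature, your level of detail already exceeds what the paper provides; a full reproduction of Hanani's case analysis is neither expected nor appropriate here.
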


\begin{theorem} Known infinite families of $\SkS(t,k,v)$ are
	\begin{enumerate}
		\item $\SkS(2,q,q^n)$, for prime powers $q$, $n \geq 2$, called \emph{affine geometries};
		\item $\SkS(2,q+1,q^n+ \cdots + q+ 1)$, for prime powers $q$, $n \geq 2$, called \emph{projective geometries};
		\item $\SkS(2,q + 1,q^3+1)$, for prime powers $q$, called \emph{unitals};
		\item $\SkS(2, 2^r, 2^{r+s} +2^r -2^s)$, for $2 \leq r < s$, called \emph{Denniston designs};		
		\item $\SkS(3,q + 1,q^n+1)$, for prime powers $q$, $n \geq 2$, called \emph{spherical geometries} (or, when $n = 2$, \emph{inversive planes});
	\end{enumerate}
\end{theorem}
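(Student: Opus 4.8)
Since this theorem simply collects known existence results, the ``proof'' amounts to exhibiting, for each family, a point set together with a block set and checking the three conditions of Definition~\ref{t-design} with $\lambda=1$. For the two families with $t=2$ this is a short linear-algebra argument that I would carry out in full. For the affine geometries $\SkS(2,q,q^n)$, take $X=\mathbb{F}_q^n$, so $|X|=q^n$, and let $\B$ be the set of all affine lines $\{u+tv : t\in\mathbb{F}_q\}$ with $v\neq 0$; each block has exactly $q$ points, and any two distinct $x,y\in X$ lie on the unique line $\{x+t(y-x) : t\in\mathbb{F}_q\}$, so $\lambda=1$. For the projective geometries $\SkS(2,q+1,q^n+\cdots+q+1)$, let $X$ be the set of $1$-dimensional subspaces of $\mathbb{F}_q^{n+1}$, so $|X|=(q^{n+1}-1)/(q-1)=q^n+\cdots+q+1$, and let $\B$ be the set of $2$-dimensional subspaces, each containing exactly $q+1$ of these points; two distinct points span a unique $2$-dimensional subspace, so $\lambda=1$. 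In both cases the hypothesis $n\geq 2$ guarantees that blocks are proper subsets of $X$, as Definition~\ref{t-design} requires.

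For the remaining families I would invoke classical facts of finite geometry rather than reprove them. The unitals $\SkS(2,q+1,q^3+1)$ are obtained by taking $X$ to be the $q^3+1$ absolute points of a unitary polarity of $\mathrm{PG}(2,q^2)$ (a nondegenerate Hermitian curve): every line of the plane meets $X$ in either $1$ or $q+1$ points, the blocks are the $(q+1)$-point intersections, and since a line through two points of $X$ cannot be a tangent, $\lambda=1$. The Denniston designs $\SkS(2,2^r,2^{r+s}+2^r-2^s)$ come from a maximal arc $K$ of degree $2^r$ in $\mathrm{PG}(2,2^s)$ with $2\leq r<s$: such an arc meets every line in $0$ or $2^r$ points and has $2^s\cdot 2^r-2^s+2^r=2^{r+s}+2^r-2^s$ points, so the secant lines cut out blocks of size $2^r$, one through each pair of points of $K$, giving $\lambda=1$; the nontrivial input here is Denniston's theorem on the existence of these maximal arcs in even-order Desarguesian planes. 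Finally, for the spherical geometries $\SkS(3,q+1,q^n+1)$: in the case $n=2$ (the inversive planes) take $X$ to be an ovoid of $\mathrm{PG}(3,q)$, which has $q^2+1$ points with no three collinear, and let the circles be the $(q+1)$-point plane sections; three points of $X$ span a unique plane, necessarily a secant plane, so $\lambda=1$, and the case $n>2$ follows from the analogous higher-dimensional construction recorded in the references.

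The routine part of all this is the self-contained verification for the affine and projective families. The main obstacle is that the unital, Denniston, and (for $n>2$) spherical families genuinely depend on nontrivial results --- the intersection properties of Hermitian curves, the existence of Denniston maximal arcs, and the existence of ovoids and their higher-dimensional analogues --- which I would quote from~\cite{CD} and the primary sources cited there, in keeping with the convention announced at the start of this section that standard design-theoretic results are used without proof.
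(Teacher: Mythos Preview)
Your proposal is correct in content, but it does considerably more than the paper itself. The paper provides no proof whatsoever for this theorem: it is stated in Section~\ref{sec:combinatorial_designs} under the blanket convention announced there that ``all results stated in this section without proof can be found in~\cite{S03,CD}.'' So the paper's ``proof'' is simply a citation.

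Your approach, by contrast, actually exhibits the constructions and verifies the defining properties for the affine and projective families, and gives substantive geometric sketches for the unitals, Denniston designs, and spherical geometries. This is a genuinely different and more informative route: a reader of your version would see \emph{why} these objects are Steiner systems rather than merely being told where to look it up. The cost is length and reliance on background (Hermitian curves, maximal arcs, ovoids) that the paper deliberately avoids importing. Either approach is appropriate for a survey-of-background section; yours is closer to a real proof, the paper's is closer to a bibliography entry.
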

		
Only finitely many Steiner systems are known for $t  = 4, 5$ and none are known for $t > 5$. All known $\SkS(4, a, n)$ designs are \emph{derived designs} from $\SkS(5, a+1, n+1)$ designs, formed by choosing an element $x$, selecting all blocks containing $x$ and then deleting $x$ from these blocks. A list of the parameters for which these designs are known to exist may be found in Table~\ref{steiner:list} of Section~\ref{subsec: construction methods and examples}.

\begin{definition}\label{t-design large set}
A  \emph{large set of $t$-$(v,k,1)$-designs} is a set $\{(X,\B_1), \dots , (X,\B_N)\}$ of
$t$-$(v,k,1)$-designs (all of which have the same point set, $X$), 
in which every $k$-subset of $X$ occurs as a block in precisely
one of the $\B_i$s. That is, the  $\B_i$s form a partition of $\binom{X}{k}$.
\end{definition}

\begin{remark}
It is easy to prove that there must be exactly $N = \binom{v-t}{k-t}$ designs in a large set of $t$-$(v,k,1)$-designs.
\end{remark}
\begin{theorem}\label{large STS existence}
%There are $v-2$ designs in a large set of $\STS(v)$.
A large set of $\STS(v)$ exists if and only if
$v \equiv 1,3 \bmod 6$ and $v \geq 9$.
\end{theorem}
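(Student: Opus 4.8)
The plan is to handle the two directions separately, since they are of wildly different difficulty. For necessity, observe first that a single $\STS(v)$ exists only when $v \equiv 1, 3 \bmod 6$ (with $v \geq 7$), so this congruence condition is forced on any large set. It then remains only to rule out $v = 7$. By the Remark preceding the theorem, a large set of $\STS(7)$ would consist of $N = 7 - 2 = 5$ Steiner triple systems on a common $7$-point set, pairwise sharing no block, whose block collections partition the $\binom{7}{3} = 35$ triples. The cardinality count is consistent (each $\STS(7)$ has $\binom{7}{2}/\binom{3}{2} = 7$ blocks, and $5 \cdot 7 = 35$), so the obstruction must be structural: I would invoke the classical fact that two copies of $\STS(7)$ (the Fano plane) on the same point set always share at least one block, hence one cannot even find $3$ mutually block-disjoint copies, let alone $5$. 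This eliminates $v = 7$ and completes necessity.

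For sufficiency — the substantive direction — I would follow the standard recursive program for large sets. The idea has three ingredients: (i) prove a small number of recursive constructions that build a large set of $\STS$ of a larger order from large sets (together with a few auxiliary large sets of resolvable or group-divisible designs) of smaller orders, the workhorses being multiplication-type maps such as $v \mapsto 3v$, $v \mapsto 2v+1$, and $v \mapsto 2v+7$, plus a ``filling in the holes'' construction; (ii) check that, starting from a finite list of base orders, every admissible $v \equiv 1, 3 \bmod 6$ with $v \geq 9$ is reachable by iterating these maps; and (iii) settle that finite list of base orders directly, by explicit or computer-assisted constructions. Each recursive step reduces to verifying that, in the composed structure, the resulting triples genuinely partition $\binom{X}{3}$, which is a bookkeeping argument over the parts of the underlying group-divisible structure.

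The main obstacle is entirely on the sufficiency side, and it is two-fold. First, the recursions are not self-contained: they consume, as ingredients, the existence of large sets of certain resolvable designs and of ``large sets with holes,'' each of which needs its own recursive existence theory, so the argument is really a simultaneous induction over several interacting families of objects rather than a single clean induction on $v$. Second — and historically the true bottleneck — is the handful of small recalcitrant orders that no recursion reaches and that resist easy direct construction; pinning these down is where essentially all the difficulty lies, and in fact the complete solution was only obtained by Lu Jiaxi, with the last few exceptional orders resolved afterwards by Teirlinck. Accordingly, I would expect a write-up to give the $v = 7$ argument in full for necessity and, for sufficiency in a survey-style section such as this one, to organize the recursions and base cases only as far as space permits, deferring to the Lu--Teirlinck theorem for the complete result.
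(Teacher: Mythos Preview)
The paper does not prove this theorem. It is stated in the background Section~\ref{sec:combinatorial_designs}, where the authors explicitly say that ``all results stated in this section without proof can be found in~[S03,CD].'' So there is no proof in the paper to compare against; the theorem is simply quoted from the design-theory literature, and your instinct to defer sufficiency to the Lu--Teirlinck theorem is exactly how the paper (implicitly) handles it as well.

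That said, your necessity argument for $v=7$ contains a genuine error. It is \emph{not} true that any two Fano planes on the same $7$-point set share a block. For instance,
\[
F_1=\{123,145,167,246,257,347,356\}, \qquad F_2=\{124,136,157,237,256,345,467\}
\]
are block-disjoint $\STS(7)$'s on $\{1,\dots,7\}$. The actual obstruction, going back to Cayley, is that one cannot find \emph{three} pairwise block-disjoint Fano planes: once two disjoint copies are fixed, the remaining $21$ triples admit no Fano sub-system (a short case check, essentially the one you would do after fixing the $1$-factor through a point). So your conclusion that $v=7$ is excluded is correct, but the mechanism you cite is wrong and would need to be replaced by the ``no three disjoint Fano planes'' argument.

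Your sketch of the sufficiency direction is accurate as a high-level description of the Lu--Teirlinck program (recursive constructions of the type $v\mapsto 2v+1$, $v\mapsto 2v+7$, etc., feeding on auxiliary large sets, with a finite list of sporadic base cases settled directly), and your assessment that this is where all the difficulty lies is correct. Since the paper treats the result as a citation, any level of detail here already exceeds what the paper provides.
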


\begin{example} \label{large.set}
A large set of $\STS(9)$~\cite{MP97,Schreiber73}. \[X = \{1,2,3,4,5,6,7,8,9\} \mbox{ and } \B_1, \ldots, \B_7, \] where the $7$ block sets $\B_1, \ldots, \B_7$ are given by the rows of the following table:

\[
\begin{tabular}{cccccccccccc}
123 & 145 &  169 & 178 & 249 & 257 &  268 & 348  & 356 & 379 &  467 & 589  \\
124 &  136&  158 & 179 & 235 & 267 & 289&  349&  378&  457 & 468 & 569 \\
125 & 137 & 149&  168&  238 & 247&  269 & 346 &359 & 458 & 567 & 789 \\
126 & 139 & 148 & 157&  234 & 259  &278 & 358 & 367 & 456 &  479 &  689 \\
127 & 135 &146  &189 & 239  & 248  &256 & 347 & 368  &459 & 578 & 679 \\
128 & 134 & 159 & 167 & 236 & 245 & 279 & 357 &  389 & 469 & 478 & 568 \\
129  &138 & 147 & 156 & 237 & 246 & 258  &345 & 369 & 489 & 579 & 678
\end{tabular}
\]
\end{example}

The concept of \emph{balanced incomplete block designs (BIBDs)}, which are $t$-designs with $t=2$, will also be useful:

\begin{definition} A $2$-$(v,k,\lambda)$-design is also called a \emph{$(v,k,\lambda)$-balanced incomplete block design}, or \emph{$(v,k,\lambda)$-BIBD}.
\end{definition}

\begin{remark}
 In a $(v,k,\lambda)$-BIBD, every point occurs in precisely $r = {\lambda(v - 1)}/{(k-1)}$ blocks and the total number of blocks is $b = {vr}/{k}$.% = {\lambda(v^2 - v)}/{k^2-k}$.
\end{remark}

\begin{definition} Let $(X, \B)$ be a $(v, k, \lambda)$-BIBD. A \emph{parallel class} in $(X, \B)$ is a set of blocks that partition the point set. If $\B$ can be partitioned into parallel classes, we say $(X, \B)$ is a \emph{resolvable} BIBD.
\end{definition}

\emph{Symmetric designs} will also be useful in constructing solutions to the generalized Russian cards problem:

\begin{definition}
\label{def:symmetric intersection}
A \emph{symmetric BIBD} is a $(v,k,\lambda)$-BIBD in which there are $v$ blocks.
\end{definition}

\begin{theorem} \label{symmetric intersection}In a symmetric BIBD, any two blocks intersect in exactly $\lambda$ points.
\end{theorem}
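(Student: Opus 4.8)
The plan is to give an elementary double-counting argument reinforced by a ``second moment'' trick. The first step is to exploit the symmetry hypothesis: in any $(v,k,\lambda)$-BIBD we have $bk = vr$ (counting incident point--block pairs two ways), so $b = v$ forces $r = k$; equivalently, the standard identity $\lambda(v-1) = r(k-1)$ becomes $\lambda(v-1) = k(k-1)$, which is the relation I will need at the end.

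Next, fix an arbitrary block $B_0 \in \B$, and for each of the remaining $v-1$ blocks $B$ put $x_B = \lvert B_0 \cap B \rvert$. I would compute two sums by counting incidences in two ways. Counting pairs $(p, B)$ with $p \in B_0$, $B \neq B_0$, and $p \in B$: summing over points gives $k(r-1) = k(k-1)$, and summing over blocks gives $\sum_{B \neq B_0} x_B$, so $\sum_B x_B = k(k-1)$. Counting pairs $(\{p,q\}, B)$ with $\{p,q\} \subseteq B_0$, $B \neq B_0$, and $\{p,q\} \subseteq B$: summing over $2$-subsets of $B_0$ gives $\binom{k}{2}(\lambda - 1)$ (each pair lies in $\lambda$ blocks, one being $B_0$), and summing over blocks gives $\sum_{B \neq B_0} \binom{x_B}{2}$; hence $\sum_B x_B(x_B - 1) = 2\binom{k}{2}(\lambda-1) = k(k-1)(\lambda-1)$, and therefore $\sum_B x_B^2 = k(k-1)\lambda$.

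Finally I would expand the sum of squared deviations: $\sum_{B \neq B_0} (x_B - \lambda)^2 = \sum_B x_B^2 - 2\lambda \sum_B x_B + (v-1)\lambda^2 = k(k-1)\lambda - 2\lambda k(k-1) + (v-1)\lambda^2 = \lambda\bigl[(v-1)\lambda - k(k-1)\bigr]$, and invoke $\lambda(v-1) = k(k-1)$ from the first step to see this equals $0$. As it is a sum of squares of integers, each term vanishes, so $x_B = \lambda$ for every $B \neq B_0$; since $B_0$ was arbitrary, any two distinct blocks meet in exactly $\lambda$ points.

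There is no real obstacle here; the only points requiring care are bookkeeping ones --- checking that $\lambda \geq 1$ and $k < v$ (both built into the BIBD definition) so that ``$\lambda - 1$'' and ``$r - 1$'' are the correct nonnegative counts of \emph{other} blocks, and that all $v-1$ blocks $B \neq B_0$ are genuine distinct $k$-subsets. An alternative is the linear-algebraic route: show the $v \times v$ incidence matrix $M$ satisfies $MM^\top = (k-\lambda)I + \lambda J$, deduce $\det M \neq 0$ using $\lambda(v-1) = k(k-1)$, then derive $M^\top M = (k-\lambda)I + \lambda J$ from $MJ = JM = kJ$ and read off the off-diagonal entries; but the double-counting proof is self-contained and avoids introducing the incidence matrix.
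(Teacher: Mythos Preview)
Your proof is correct; this is exactly the classical variance (second-moment) argument for block intersections in a symmetric design, and each step checks out: $r=k$ from $bk=vr$ with $b=v$, then $\sum x_B = k(k-1)$, $\sum x_B^2 = k(k-1)\lambda$, and the identity $\lambda(v-1)=k(k-1)$ kills the sum of squared deviations.

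As for comparison, the paper does not actually prove this theorem: it is listed in Section~\ref{sec:combinatorial_designs} among standard background results, with the blanket statement that all unproven results there can be found in \cite{S03,CD}. So there is no paper proof to compare against. Your double-counting argument is one of the two standard textbook proofs; the other is the incidence-matrix route you sketch at the end (show $MM^\top=(k-\lambda)I+\lambda J$ is nonsingular, hence $M$ is invertible and $M^\top M$ has the same form). Either would be accepted. One small quibble: your remark that the $v-1$ blocks $B\neq B_0$ must be ``genuine distinct $k$-subsets'' is not needed for the counting to go through---you count over block \emph{instances}, and the argument itself then forces simplicity, since a repeated copy of $B_0$ would have $x_B=k>\lambda$.
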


Two infinite families of symmetric BIBDs, which we will use later, are 
\begin{enumerate}
	\item \emph{hyperplanes} in projective spaces, which are $\left ( \frac{q^{d+1}-1}{q-1}, \frac{q^d-1}{q-1},\frac{q^{d-1}-1}{q-1} \right )$-BIBDs, for prime powers $q$, and $d \geq 2$; and 
	\item \emph{Hadamard designs}, which are $\left (q, \frac{q-1}{2}, \frac{q-3}{4} \right)$-BIBDs, for odd prime powers $q$ satisfying $ q \equiv 3 \bmod 4$.
\end{enumerate}

\subsection{Transversal designs}
\label{subsec: transversal designs}

\begin{definition}Let $t$, $v$, $k$, and $\lambda$ be positive integers satisfying $k \geq t \geq 2$. A \emph{transversal design} $\TD_\lambda(t,k,v)$ is a triple $(X, \G, \B)$ such that the following properties are satisfied:
\begin{enumerate}
		\item $X$ is a set of $kv$ elements called \emph{points},
		\item $\G$ is a partition of $X$ into $k$ subsets of size $v$ called \emph{groups},
		\item $\B$ is a set of $k$-subsets of $X$ called \emph{blocks},
		\item any group and any block contain exactly one common point, and
		\item every subset of $t$ points from distinct groups occurs in precisely $\lambda$ blocks. 
	\end{enumerate}
\end{definition}

Many of the standard results for $t$-designs can be extended to transversal designs. The following terminology and results are useful:

\begin{definition} 
 Let $(X, \G, \B)$ be a $\TD_\lambda(t,k,v)$ and write $\G = \{G_j : 1 \leq j \leq k\}$. Suppose  $Z \subseteq X$ such that $\lvert Z| = i \leq k$ and $\lvert Z \cap G_j \rvert \leq 1$ for $1 \leq j \leq k$. We say $Z$ is a \emph{partial transversal of $\G$}. If $i = k$, then we say $Z$ is a \emph{transversal of $\G$}.
\end{definition}

\begin{definition}\label{def: group disjoint}
For a partial transversal $Z$ of $\G$, we let $G_Z = \{G_j \in \G : Z \cap G_j \neq \emptyset\}$ denote the set of groups that intersect $Z$. If  $Y$, $Z \subseteq X$ are partial transversals of $\G$ such that $G_Z \cap G_Y = \emptyset$, we say $Y$, $Z$ are \emph{group disjoint}. 
\end{definition}

\begin{theorem}\label{transversal params} Let $(X, \G, \B)$ be a $\TD_\lambda(t, k, v)$. Suppose $Y \subseteq X$ such that $\lvert Y \rvert =s \leq t$ and $Y$ is a  partial transversal of $\G$. Then there are exactly $\lambda_s = \lambda v^{t-s}$ blocks containing all the points in $Y$. 
\end{theorem}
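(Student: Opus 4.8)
The plan is to mimic the proof of Theorem~\ref{block.containing.Y} for ordinary $t$-designs via a double-counting argument, using the extra structure that blocks of a transversal design are transversals of $\G$. First I would record that structural fact: in a $\TD_\lambda(t,k,v)$ every block has size $k$, there are exactly $k$ groups, and by the defining property each group meets each block in exactly one point; hence a block contains precisely one point from each group, i.e.\ it is a transversal of $\G$. In particular every subset of a block is automatically a partial transversal.

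Next, fix a partial transversal $Y$ with $\lvert Y \rvert = s \le t$, and let $\lambda_s$ denote the number of blocks of $\B$ containing $Y$ — the quantity to be determined. Consider the set $\mathcal{P}$ of pairs $(W,B)$ where $B \in \B$, $W$ is a partial transversal of $\G$ with $Y \subseteq W \subseteq B$, and $\lvert W \rvert = t$. I would compute $\lvert \mathcal{P} \rvert$ in two ways. Counting by the first coordinate: a valid $W$ is obtained from $Y$ by adjoining $t-s$ further points, one from each of $t-s$ of the $k-s$ groups not already met by $Y$; this gives $\binom{k-s}{t-s}\,v^{t-s}$ choices for $W$, and each such $W$ is a $t$-point partial transversal, hence lies in exactly $\lambda$ blocks by the defining property of the transversal design. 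Thus $\lvert \mathcal{P} \rvert = \lambda \binom{k-s}{t-s} v^{t-s}$.

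Counting $\mathcal{P}$ by the second coordinate: a block $B$ not containing $Y$ contributes nothing, while a block $B$ containing $Y$ is a transversal, so any $t$-subset $W$ with $Y \subseteq W \subseteq B$ is automatically a partial transversal; there are $\binom{k-s}{t-s}$ such $W$, obtained by choosing $t-s$ of the remaining $k-s$ points of $B$. Hence $\lvert \mathcal{P} \rvert = \lambda_s \binom{k-s}{t-s}$. Equating the two counts and cancelling the positive factor $\binom{k-s}{t-s}$ (positive because $s \le t \le k$) yields $\lambda_s = \lambda v^{t-s}$, as claimed; the boundary case $s = t$ is just the defining property and is consistent with the formula.

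The argument is short and the only step I would be careful about is the bookkeeping in the ``count by $W$'' direction: one must check that the $t-s$ adjoined points genuinely come from distinct groups, each previously unused by $Y$, so that $W$ really is a partial transversal of size exactly $t$ and the defining ``every $t$ points from distinct groups lie in exactly $\lambda$ blocks'' applies. Everything else is routine, and no analogue of the more delicate counting in Theorem~\ref{allY noZ thm t designs} is needed here.
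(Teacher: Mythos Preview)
Your proof is correct and follows essentially the same double-counting idea as the paper: both count pairs $(W,B)$ with $Y \subseteq W \subseteq B$ and $\lvert W \rvert = t$, using that every block is a transversal. The only cosmetic difference is that the paper fixes in advance a particular choice of $t-s$ groups disjoint from $Y$, so the factor $\binom{k-s}{t-s}$ never appears and there is nothing to cancel.
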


\begin{proof} Fix a subset of $t-s$ groups disjoint from $Y$, say $G_1^\prime, \ldots, G_{t-s}^\prime$. Consider a $t$-subset $X$ consisting of all the points from $Y$ and one point from each of  $G_1^\prime, \ldots, G_{t-s}^\prime$. In particular, there are $v^{t-s}$ such $t$-subsets $X$, and each such $X$ occurs in precisely $\lambda$ blocks. Note that every block that contains $Y$ is a transversal of $\G$, so every such block contains exactly one such $t$-subset $X$.  Therefore $Y$ occurs in precisely $\lambda v^{t-s}$ blocks, as desired.
\end{proof}

\begin{theorem}\label{transversal security} Let $(X, \G, \B)$ be a $\TD_\lambda(t,k,v)$. Suppose $Y$, $Z \subseteq X$ are group disjoint partial transversals of $\G$ such that $\lvert Y \rvert =i$, $\lvert Z \rvert = j$, and $i + j \leq t$. Then there are exactly 
\[\lambda_i^j = \lambda v^{t-i-j}(v-1)^j \]
blocks in $\B$ that contain all the points in $Y$ and none of the points in $Z$.
\end{theorem}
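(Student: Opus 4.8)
The plan is to mimic the standard inclusion–exclusion argument for $t$-designs (as in Theorem \ref{allY noZ thm t designs}) but in the transversal-design setting, using Theorem \ref{transversal params} as the base count. First I would fix the partial transversal $Y$ with $|Y| = i$ and the group-disjoint partial transversal $Z$ with $|Z| = j$. For each subset $W \subseteq Z$, note that $Y \cup W$ is again a partial transversal of $\G$ (since $Y$ and $Z$ are group disjoint and $Z$ itself is a partial transversal), and $|Y \cup W| = i + |W| \le i + j \le t$, so Theorem \ref{transversal params} applies and tells us that the number of blocks containing all points of $Y \cup W$ is exactly $\lambda v^{t-i-|W|}$.

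Next I would apply inclusion–exclusion over the $j$ "forbidden" points of $Z$: the number of blocks containing $Y$ but none of $Z$ is
\[
\lambda_i^j \;=\; \sum_{W \subseteq Z} (-1)^{|W|}\, \lambda v^{\,t-i-|W|}
\;=\; \lambda v^{\,t-i-j} \sum_{\ell=0}^{j} \binom{j}{\ell} (-1)^\ell v^{\,j-\ell}.
\]
The inner sum is just the binomial expansion of $(v-1)^j$, which yields $\lambda_i^j = \lambda v^{t-i-j}(v-1)^j$, as claimed. One small point to check is that the inclusion–exclusion is valid: a block is a transversal of $\G$, so it meets each of the $j$ distinct groups hit by $Z$ in exactly one point, and hence "contains a point of $Z$" decomposes correctly over the individual points $z \in Z$; this is why group-disjointness of $Y$ and $Z$ (and the partial-transversal hypothesis on $Z$) is essential.

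I do not anticipate a serious obstacle here — the argument is routine once Theorem \ref{transversal params} is in hand. The only thing requiring a little care is the bookkeeping that every intermediate set $Y \cup W$ is a legitimate partial transversal of size at most $t$ so that Theorem \ref{transversal params} genuinely applies to each term of the sum; after that the computation collapses via the binomial theorem.
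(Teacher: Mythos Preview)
Your proof is correct, but it takes a different route from the paper's. The paper avoids inclusion--exclusion altogether and counts directly: since every block is a transversal of $\G$, any block containing $Y$ and avoiding $Z$ must meet each of the $j$ groups in $G_Z$ at exactly one of the $v-1$ points not in $Z$. This gives $(v-1)^j$ partial transversals of size $i+j$ extending $Y$ (one point chosen from each group in $G_Z$, avoiding $Z$), and by Theorem~\ref{transversal params} each such $(i+j)$-set lies in exactly $\lambda v^{t-i-j}$ blocks; moreover these collections of blocks partition the blocks in question, so the total is $(v-1)^j \cdot \lambda v^{t-i-j}$. Your inclusion--exclusion argument is equally valid and parallels the classical $t$-design proof of Theorem~\ref{allY noZ thm t designs}; the paper's direct count is a bit shorter and makes the transversal structure (one point per group) do the work explicitly, whereas your approach is more mechanical and requires no extra structural observation beyond verifying that each $Y \cup W$ is a partial transversal of size at most $t$.
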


\begin{proof}
Consider the set of groups $G_Z$ that intersect $Z$. There are $(v-1)^j$ subsets $X$ such that $X$ consists of all the points from $Y$ and one point from each group in $G_Z$, but $X$ contains no points from $Z$. Each such $(i+j)$-subset $X$ occurs in precisely $\lambda_{i+j}$ blocks by Theorem~\ref{transversal params}. Therefore there are $\lambda_{i+j}(v-1)^j = \lambda v^{t-i-j} (v-1)^j$ blocks that contain all the points of $Y$ but none of the points of $Z$.
\end{proof}

We can also apply the notion of \emph{large sets} to transversal designs:
\begin{definition}\label{def: large TD} A \emph{large set of $\TD_\lambda(t,k,v)$ on the point set $X$ and group partition $\G$} is a set $\{(X, \G, \B_1), \ldots, (X, \G, \B_N)\}$ of $\TD_\lambda(t,k,v)$ in which every set of $k$ points from distinct groups of $X$ occurs as a block in precisely one of the $\B_i$s. 
\end{definition}

\begin{remark}It is easy to see that there must be $N = \frac{v^k}{\lambda v^t}$ transversal designs in a large set of $\TD_\lambda(t,k,v)$.
\end{remark}

Transversal designs are equivalent to \emph{orthogonal arrays}:
\begin{definition}
Let $t, v, k$, and $\lambda$ be positive integers satisfying $k \geq t \geq 2$. An \emph{orthogonal array} $\OA_\lambda(t,k,v)$ is a pair $(X, D)$ such that the following properties are satisfied:
\begin{enumerate}
		\item $X$ is a set of $v$ elements called \emph{points},
		\item $D$ is a $\lambda v^t$ by $k$ array whose entries are elements of $X$, and
		\item within any $t$ columns of $D$, every $t$-tuple of points occurs in precisely $\lambda$ rows. 
	\end{enumerate}
\end{definition}

\begin{example}\label{ex: OA} An $\OA_1(2, 4,3)$.
\[
\begin{array}{cccc}
1 & 1 & 1 & 1 \\
1 & 2 & 3 & 3 \\
1 & 3 & 2 & 2 \\
2 & 1 & 2 & 3 \\
2 & 2 & 1 & 2 \\
2 & 3 & 3 & 1 \\
3 & 1 & 3 & 2 \\
3 & 2 & 2 & 1 \\
3 & 3 & 1 & 3
\end{array}
\]
\end{example}

It is easy to see the correspondence between orthogonal arrays and transversal designs. Suppose $(X, D)$ is an $\OA_\lambda(t,k,v)$. We define a bijection $\phi$ between the rows $r_j$ of $D$ and the blocks $B_j$ of a $\TD_\lambda(t,k,v)$ as follows. For each row $r_j = [x_{j1} x_{j2} \cdots x_{jk}]$ of $D$, let \[\phi(r_j)= \{(x_{j1},1),  (x_{j2}, 2), \ldots (x_{jk}, k)\}= B_j\] define a block $B_j$. Define $G_i = \{1,\ldots, v\} \times \{i\}$ for $1 \leq i \leq k$. Then $(X \times \{1,\ldots, k\}, \G, \B)$ is a $\TD_\lambda(t,k,v)$ with $\G = \{G_i : 1 \leq i \leq k\}$ and $\B = \{B_j : 1 \leq j \leq \lambda v^t\}$.

\begin{example} The blocks of the $\TD _1(2,4,3)$ obtained from the $\OA_1(2, 4,3)$ in Example~\ref{ex: OA}:
\[
\begin{array}{ccccc}
B_1 \colon (1,1) & (1,2) & (1,3) & (1,4) \\
B_2 \colon (1, 1) & (2,2) & (3,3) & (3,4) \\
B_3 \colon (1, 1) & (3,2) & (2,3) & (2,4) \\
B_4 \colon (2, 1) & (1,2) & (2,3) & (3,4) \\
B_5 \colon (2, 1) & (2,2) & (1,3) & (2,4) \\
B_6 \colon (2, 1) & (3,2) & (3,3) & (1,4) \\
B_7 \colon (3, 1) & (1,2) & (3,3) & (2,4) \\
B_8 \colon (3, 1) & (2,2) & (2,3) & (1,4) \\
B_9 \colon (3, 1) & (3,2) & (1,3) & (3,4)
\end{array}
\]
\end{example}

The above construction method can be reversed for an arbitrary $\TD_\lambda(t,k,v)$, say $(X, \G, \B)$. To see this, note that we can relabel the points such that $X = \{1, \ldots, v\} \times \{1, \ldots, k\}$ and $\G = \{G_i : 1 \leq i \leq k\}$. Then the fact that any block and any group must contain exactly one common point implies that for each $B \in \B$, we can form the $k$-tuple $(b_1, \ldots, b_k)$, where $b_i \in B \cap G_i$ for $1 \leq i \leq k$. We can form an orthogonal array $\OA_\lambda(t,k,v)$ by taking all of these $k$-tuples as rows.

\begin{definition}A \emph{large set of $\OA_\lambda(t,k,v)$ on the point set $X$} is a set of
$\OA_\lambda(t,k,v)$, say $\{(X,D_1), \dots , (X,D_N)\}$,  in which every $k$-tuple of elements from $X$ occurs as a row in precisely
one of the $D_i$s. That is, the  $D_i$s form a partition of the set $X^k$ of $k$-tuples with entries from $X$. 
\end{definition}

\begin{remark}It is easy to see that there must be $N = \frac{v^k}{\lambda v^t}$ orthogonal arrays in a large set of $\OA_\lambda(t,k,v)$.
\end{remark}

A useful type of orthogonal array is a \emph{linear array}, especially for constructing large sets: 

\begin{definition}
Let $(X,D)$ be an $\OA_\lambda(t,k,v)$. We say $(X, D)$  is \emph{linear} if $X = \mathbb{F}_q$ for some prime power $q$ and the rows of $D$ form a subspace of $(\mathbb{F}_q)^k$ of dimension $\log_q \lvert D \rvert$.
\end{definition}

Linear orthogonal arrays (and hence the corresponding transversal designs) are easy to construct. In particular, the following is a useful construction method.

\begin{theorem}\label{linear OA thm} Suppose $q$ is a prime power and $k$ and $\ell$ are positive integers. Suppose $M$ is an $\ell$ by $k$ matrix over $\mathbb{F}_q$ such that every set of $t$ columns of $M$ is linearly independent. Then $(X, D)$ is a linear $\OA_{q^{\ell-t}}(t,k,q)$, where $D$ is the $q^\ell$ by $k$ matrix formed by taking all linear combinations of the rows of $M$.
\end{theorem}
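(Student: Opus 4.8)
The plan is to establish the claim by a direct dimension-counting argument combined with the defining property of an orthogonal array. Let $M$ be the given $\ell \times k$ matrix over $\mathbb{F}_q$ in which every set of $t$ columns is linearly independent, and let $D$ be the $q^\ell \times k$ array whose rows are all $\mathbb{F}_q$-linear combinations of the rows of $M$; equivalently, the rows of $D$ are the vectors $\{ \mathbf{x} M : \mathbf{x} \in (\mathbb{F}_q)^\ell \}$. I will show that $(X, D)$ with $X = \mathbb{F}_q$ satisfies the three conditions in the definition of $\OA_\lambda(t,k,q)$ with $\lambda = q^{\ell - t}$.

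**First** I would verify that $D$ has exactly $q^\ell$ distinct rows, i.e.\ that the map $\mathbf{x} \mapsto \mathbf{x} M$ is injective on $(\mathbb{F}_q)^\ell$. Since any $t \le k$ columns of $M$ are linearly independent, $M$ has rank $t$, hence rank $\ell$ only if $\ell \le t$; this is actually not needed — instead the right statement is that the rows of $M$ must be linearly independent, because otherwise some nontrivial combination of rows would be the zero vector, and then \emph{every} $t \times t$ submatrix formed from those rows would be singular, contradicting the column-independence hypothesis (as $t \ge 1$ and, more carefully, one uses that a dependency among the rows forces every set of columns to be dependent when restricted appropriately). Thus $M$ has rank $\ell$, the linear map is injective, $|D| = q^\ell$, and $\log_q |D| = \ell$, so the rows of $D$ form an $\ell$-dimensional subspace of $(\mathbb{F}_q)^k$; this gives linearity and condition (2) of the orthogonal array definition (with $\lambda v^t = q^{\ell - t} q^t = q^\ell$, consistent).

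**The main step** is condition (3): within any $t$ columns of $D$, every $t$-tuple over $\mathbb{F}_q$ occurs in precisely $q^{\ell - t}$ rows. Fix column indices $j_1 < \dots < j_t$ and let $N$ be the $\ell \times t$ submatrix of $M$ on these columns; by hypothesis its columns are linearly independent, so $N$ has rank $t$, hence the linear map $\mathbf{x} \mapsto \mathbf{x} N$ from $(\mathbb{F}_q)^\ell$ to $(\mathbb{F}_q)^t$ is surjective. By rank–nullity its kernel has dimension $\ell - t$, so every element of $(\mathbb{F}_q)^t$ — i.e.\ every $t$-tuple — is hit by exactly $q^{\ell - t}$ vectors $\mathbf{x}$, equivalently appears in exactly $q^{\ell - t}$ rows of $D$ when restricted to columns $j_1, \dots, j_t$. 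This is precisely the orthogonal array property with $\lambda = q^{\ell - t}$, so $(X, D)$ is an $\OA_{q^{\ell-t}}(t,k,q)$, and it is linear by the previous paragraph.

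**The part requiring the most care** is the claim that the column-independence hypothesis forces the \emph{rows} of $M$ to be independent (so that $|D| = q^\ell$ rather than something smaller). I would handle this cleanly by arguing contrapositively at the level of the submatrices $N$: if the rows of $M$ were dependent, say $\mathbf{y} M = \mathbf{0}$ for some nonzero $\mathbf{y}$, then $\mathbf{y} N = \mathbf{0}$ for every column-subset $N$, so no $\ell \times t$ submatrix could have full column rank $t$ when $t \le \ell$ — but wait, we need $t \le \ell$ for this to be a contradiction, which indeed holds since $M$ has $t$ linearly independent columns living in $(\mathbb{F}_q)^\ell$, forcing $t \le \ell$. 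Hence the rows are independent and the count $q^\ell$ is correct. Everything else is routine linear algebra, so no further obstacle is anticipated.
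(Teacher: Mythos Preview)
The paper does not prove this theorem; it is quoted as a standard background result from the design-theory literature. Your main argument---fixing $t$ columns to get the $\ell\times t$ submatrix $N$, observing that the hypothesis makes $N$ of rank $t$, and then using rank--nullity on the surjection $\mathbf{x}\mapsto \mathbf{x}N$ to conclude that each $t$-tuple appears exactly $q^{\ell-t}$ times---is correct and is exactly the standard proof.

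The part that goes wrong is your attempt to \emph{derive} linear independence of the rows of $M$ from the column hypothesis. From $\mathbf{y}M=\mathbf{0}$ with $\mathbf{y}\neq\mathbf{0}$ you correctly get $\mathbf{y}N=\mathbf{0}$ for every $\ell\times t$ submatrix $N$, but this only says $\operatorname{rank}(N)\le \ell-1$; since column rank equals row rank, this is perfectly compatible with the $t$ columns of $N$ being independent whenever $t\le \ell-1$. Concretely, take $q=2$, $\ell=2$, $t=1$, $k=2$, and let $M$ have both rows equal to $(1,1)$: every single column is nonzero (hence a linearly independent set), yet the rows of $M$ are dependent. In that example the row space of $M$ is one-dimensional, so the rows of $D$ comprise only two distinct vectors while $\log_q|D|=2$, and the ``linear'' clause in the paper's definition fails. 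So your contrapositive argument is simply false except in the boundary case $t=\ell$. The correct reading of the theorem (and the standard one in coding theory) is that $M$ is taken to be a generator matrix, i.e.\ has full row rank $\ell$; treat that as an implicit hypothesis rather than something to be proved, and your argument then goes through cleanly.
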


Let $q$ be a prime power and for every $x \in \mathbb{F}_q$, let $\vec{x} = [1,x,x^2,\ldots,x^{t-1}] \in (\mathbb{F}_q)^t$ for some integer $t \geq 2$. Construct the $t$ by $q$ matrix $M$ by taking the columns to be the vectors $(\vec{x})^T$ for every $x \in \mathbb{F}_q$, where here $(\vec{x})^T$ means the transpose of $\vec{x}$. Applying Theorem~\ref{linear OA thm} to $M$ yields the following result:

\begin{corollary}\label{linear OA p.p.thm}Let $t \geq 2$ be an integer and let $q$ be a prime power. Then there exists a linear $\OA_1(t,q,q)$.
\end{corollary}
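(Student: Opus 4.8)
The plan is to verify the hypotheses of Theorem~\ref{linear OA thm} for the specific matrix $M$ described just before the corollary, and then read off the conclusion. Recall that $M$ is the $t \times q$ matrix over $\mathbb{F}_q$ whose column indexed by $x \in \mathbb{F}_q$ is $(\vec{x})^T = [1, x, x^2, \ldots, x^{t-1}]^T$. So here the parameters of Theorem~\ref{linear OA thm} are instantiated as $\ell = t$ and $k = q$ (note $k \geq t$ since $q \geq 2 \geq \ldots$; more carefully, any prime power $q$ with a nontrivial $t$ needs $q \geq t$ for the statement to be nonvacuous, but the corollary as stated should be read for $q \geq t$, which the Vandermonde argument below will in fact force).

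First I would observe that any choice of $t$ columns of $M$, say those indexed by distinct field elements $x_1, \ldots, x_t \in \mathbb{F}_q$, forms the $t \times t$ matrix whose $(i,j)$ entry is $x_j^{\,i-1}$ — that is, a Vandermonde matrix with nodes $x_1, \ldots, x_t$. Its determinant is $\prod_{1 \leq i < j \leq t}(x_j - x_i)$, which is nonzero precisely because the $x_i$ are pairwise distinct (and $\mathbb{F}_q$ is a field, hence an integral domain). Therefore every set of $t$ columns of $M$ is linearly independent, which is exactly the hypothesis required by Theorem~\ref{linear OA thm}.

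Next I would apply Theorem~\ref{linear OA thm} directly with $\ell = t$, $k = q$, and the field $\mathbb{F}_q$: it yields a linear $\OA_{q^{\ell - t}}(t, k, q) = \OA_{q^{t-t}}(t, q, q) = \OA_{q^0}(t,q,q) = \OA_1(t,q,q)$, where $D$ is the $q^t \times q$ array consisting of all $\mathbb{F}_q$-linear combinations of the rows of $M$. This $D$ is by construction a subspace of $(\mathbb{F}_q)^q$, and its dimension is $t = \log_q(q^t) = \log_q|D|$, so linearity holds as well. This completes the proof.

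The only real content is the Vandermonde observation in the second step; everything else is bookkeeping and substitution into the already-established Theorem~\ref{linear OA thm}. The mild subtlety worth flagging is the relationship between $t$ and $q$: the argument silently requires that $\mathbb{F}_q$ contain at least $t$ distinct elements whenever we speak of $t$ columns, i.e.\ $q \geq t$; for $q < t$ the object $\OA_1(t,q,q)$ does not satisfy $k \geq t$ and the statement should be understood as vacuous or restricted to $q \geq t$, and in that range the proof goes through verbatim. No genuine obstacle arises.
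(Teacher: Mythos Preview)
Your proposal is correct and follows exactly the approach the paper intends: the paper constructs precisely this Vandermonde-type matrix $M$ and says that applying Theorem~\ref{linear OA thm} to it yields the corollary, leaving the verification that any $t$ columns are independent (your Vandermonde determinant computation) and the substitution $\ell=t$, $k=q$ implicit. Your remark about needing $q \geq t$ is a valid caveat the paper does not spell out, but it does not represent a divergence in method.
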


The following result is immediate.

\begin{corollary}\label{linear TD p.p. thm} Let $t \geq 2$ be an integer and let $q$ be a prime power. Then there exists a linear $\TD_1(t,q,q)$.
\end{corollary}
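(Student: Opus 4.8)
The plan is to transport the linear orthogonal array supplied by Corollary~\ref{linear OA p.p.thm} across the orthogonal-array/transversal-design correspondence set up earlier in this section; the statement is then essentially immediate, with only routine bookkeeping to check.

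First I would invoke Corollary~\ref{linear OA p.p.thm} to obtain a linear $\OA_1(t,q,q)$, say $(X,D)$ with $X = \mathbb{F}_q$. By the construction preceding that corollary, $D$ is the $q^t$ by $q$ array whose rows are exactly the $\mathbb{F}_q$-linear combinations of the rows of the $t$ by $q$ Vandermonde-type matrix $M$, so in particular the rows of $D$ form a subspace of $(\mathbb{F}_q)^q$; this is precisely the linear structure that we want to carry over. Note also that admissibility of the parameters (with $k = v = q$ and $q \geq t \geq 2$) is already inherited from Corollary~\ref{linear OA p.p.thm}.

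Next I would apply the reverse of the correspondence described after Example~\ref{ex: OA}: relabel the point set as $\mathbb{F}_q \times \{1,\ldots,q\}$, set $G_i = \mathbb{F}_q \times \{i\}$ for $1 \leq i \leq q$, and for each row $r_j = [x_{j1}\,x_{j2}\,\cdots\,x_{jq}]$ of $D$ form the block $B_j = \{(x_{j1},1),(x_{j2},2),\ldots,(x_{jq},q)\}$. As already verified in the text, the resulting triple $(\mathbb{F}_q \times \{1,\ldots,q\},\, \G,\, \B)$ with $\G = \{G_i : 1 \leq i \leq q\}$ and $\B = \{B_j\}$ is a $\TD_1(t,q,q)$: each block meets each group in exactly one point by construction, and the orthogonal-array property — every $t$-tuple of symbols occurs in exactly $\lambda = 1$ rows within any $t$ columns — is exactly the statement that every $t$-subset of points from distinct groups lies in precisely one block. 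Since the underlying orthogonal array is linear, the transversal design obtained from it is linear in the corresponding sense, which is the claim. There is no real obstacle here; the only point requiring care is confirming that the generic $\OA \leftrightarrow \TD$ construction of the text preserves the index $\lambda$ and the parameters $t$, $k = q$, $v = q$, all of which it visibly does.
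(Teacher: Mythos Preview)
Your proposal is correct and matches the paper's approach exactly: the paper states that the corollary is immediate from Corollary~\ref{linear OA p.p.thm} via the orthogonal-array/transversal-design correspondence, and you have simply written out that immediate step in detail. The only thing worth noting is that the paper never formally defines ``linear'' for transversal designs, so your reading of it as ``the transversal design obtained from a linear orthogonal array'' is the intended one.
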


\begin{remark} The constructions discussed in Corollaries~\ref{linear OA p.p.thm}~and~\ref{linear TD p.p. thm} are known as \emph{Reed-Solomon codes}~\cite{S03}.
\end{remark}

We now discuss how to construct a large set of orthogonal arrays from a ``starting'' linear orthogonal array. Suppose $(X,D)$ is a linear $\OA_\lambda(t,k,v)$. We can obtain a large set of orthogonal arrays (and therefore transversal designs) from $(X, D)$ by taking the set of cosets of $D$ in $(\mathbb{F}_q)^k$. In particular, $D$ is a subspace of $(\mathbb{F}_q)^k$, so the cosets of $D$ form a partition of $(\mathbb{F}_q)^k$.

\section{Terminology and notation}
\label{subsec: Preliminary Notation and Examples}

We review the terminology and notation established by Swanson and Stinson~\cite{cardsSwansonS12}. Throughout, we let $\binom{X}{t}$ denote the set of 
$\binom{n}{t}$ $t$-subsets of $X$, where $t$ is a positive integer.

Let $X$ be a deck of $n$ cards. In an \emph{$(a,b,c)$-deal} of $X$, Alice is dealt a \emph{hand} $H_A$ of $a$ cards, Bob is dealt a hand $H_B$ of $b$ cards, and Cathy is dealt a hand $H_C$ of $c$ cards, such that $a + b + c = n$. That is, it must be the case that $H_A \cup H_B \cup H_C = X$. We assume  these hands are random and dealt by some external entity. 

An \emph{announcement} by Alice is a subset of $\binom{X}{a}$ containing Alice's current hand, $H_A$. More generally, Alice chooses a set of $m$ announcements $\A_1, \A_2, \dots  , \A_m \subseteq \binom{X}{a}$ satisfying $\bigcup_{i=1}^m \A_i = \binom{X}{a}$. For every $H_A \in \binom{X}{a}$, we define $g(H_A)= \{ i : H_A \in \A_i\}$, i.e., the set of possible announcements for Alice given the hand $H_A$. Alice's \emph{announcement strategy}, or simply \emph{strategy}, consists of a probability distribution $p_{H_A}$ defined on $g(H_A)$, for every $H_A \in \binom{X}{a}$. 

In keeping with Kerckhoffs' principle, we assume the set of announcements and probability distributions are fixed ahead of time and public knowledge.  For a given hand $H_A \in \binom{X}{a}$, Alice randomly chooses an index $i \in g(H_A)$ according to the 
probability distribution $p_{H_A}$. Alice broadcasts the integer $i$ to specify her announcement $\A_i$. Without loss of generality, we assume that $p_{H_A}(i) > 0$ for all $i \in g(H_A)$.

For the purposes of this paper, we assume there exists some constant $\gamma$ such that $\left \lvert g(H_A)\right \rvert   = \gamma$  for every $H_A$ and that every probability distribution $p_{H_A}$ is \emph{uniform}; such strategies are termed \emph{$\gamma$-equitable}, or simply \emph{equitable}. Throughout, we use the phrase \emph{$(a,b,c)$-strategy} $\D$ to denote a strategy for an $(a,b,c)$-deal, where $\D$ is the associated set of possible announcements for Alice. 

The following notation is useful in discussing the properties of a given strategy $\D$.
For any subset $Y \subseteq X$ and any announcement $\A \in \D$, we define \[\mathcal{P} \left (Y, \A \right ) = \left \{H_A \in \A : H_A \cap Y = \emptyset \right\}.\] That is, $\mathcal{P} \left (Y, \A \right )$ is the set of hands of $\A$ that do not intersect the subset $Y$. 

\section{Informative strategies}
\label{sec:informative_strategies}

Suppose we have an $(a,b,c)$-deal and Alice chooses announcement $\A$ from the set $\D$ of possible announcements. From Bob's point of view, the set of possible hands for Alice given Alice's announcement $\A$ and Bob's hand $H_B \in \binom{X}{b}$ is
\[ \mathcal{P} \left (H_B, \A \right) = \left \{ H_A \in \A : H_A \cap H_B = \emptyset \right \} .\]
We say Alice's strategy is \emph{informative for Bob} provided that 
\begin{equation}
\label{inf-Bob}
\left \lvert \mathcal{P} \left (H_B,\A \right) \right \rvert   \leq 1 
\end{equation}
for all $H_B \in \binom{X}{b}$ and for all $\A \in \D$. That is, if Equation~(\ref{inf-Bob}) is satisfied, Bob can determine the set of $a$ cards that Alice holds from Alice's announcement. In particular, this implies that Bob can announce Cathy's hand, thereby informing Alice of the card deal as well. Specified on the level of individual announcements, we say an announcement $\A$ is \emph{informative} provided $\left \lvert \mathcal{P}\left (H_B,\A \right) \right \rvert   \leq 1$ for any hand $H_B \in \binom{X}{b}$.

The following theorem, first shown by Albert et al.~\cite{AAADH05}, is a useful equivalence condition for informative announcements:
\begin{theorem}{\cite{AAADH05}}
\label{infor.thm}
The announcement $\A$ is informative for Bob if and only if there do not
exist two distinct sets $H_A$, $H_A^\prime \in \A$ such that 
$\left \lvert H_A \cap  H_A^\prime \right \rvert   \geq a-c$.
\end{theorem}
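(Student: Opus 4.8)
The plan is to prove the biconditional by establishing its contrapositive in one direction and a direct implication in the other — or, more symmetrically, to show that the failure of the informative condition \eqref{inf-Bob} for some $\A$ is equivalent to the existence of two distinct hands $H_A, H_A' \in \A$ with $|H_A \cap H_A'| \geq a - c$. First I would unpack what it means for $\A$ to \emph{not} be informative: there exists $H_B \in \binom{X}{b}$ with $|\mathcal{P}(H_B, \A)| \geq 2$, i.e., there are two distinct hands $H_A, H_A' \in \A$, each disjoint from $H_B$. The key observation is that $H_A, H_A' \subseteq X \setminus H_B$, a set of size $n - b = a + c$. Two $a$-subsets of an $(a+c)$-set must overlap substantially: by inclusion-exclusion, $|H_A \cap H_A'| = |H_A| + |H_A'| - |H_A \cup H_A'| \geq 2a - (a+c) = a - c$. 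This gives the forward direction (informative $\Rightarrow$ no such pair, contrapositive).

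For the converse, suppose there do exist distinct $H_A, H_A' \in \A$ with $|H_A \cap H_A'| \geq a - c$. I want to produce a hand $H_B$ disjoint from both, witnessing $|\mathcal{P}(H_B,\A)| \geq 2$. Consider $H_A \cup H_A'$; its size is $2a - |H_A \cap H_A'| \leq 2a - (a - c) = a + c$. Hence $X \setminus (H_A \cup H_A')$ has size at least $n - (a+c) = b$, so I can choose some $H_B \in \binom{X}{b}$ with $H_B \cap (H_A \cup H_A') = \emptyset$. Then both $H_A$ and $H_A'$ lie in $\mathcal{P}(H_B, \A)$, so $|\mathcal{P}(H_B,\A)| \geq 2$ and $\A$ is not informative. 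Combining the two directions yields the equivalence.

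I do not anticipate a serious obstacle here — the argument is essentially a double counting / pigeonhole computation on set sizes, using $a + b + c = n$ throughout. The only point requiring a little care is making sure the hand $H_B$ we construct in the converse is a genuine $b$-subset of $X$ (which is fine since we have shown the complement of $H_A \cup H_A'$ is large enough) and that it is legitimate as a "hand" in a valid $(a,b,c)$-deal, which it is because $H_A$ together with this $H_B$ and the remaining $c$ cards partition $X$. One might alternatively phrase the whole thing in terms of the quantity $n - b = a + c$ from the start to streamline both directions, which is the presentation I would likely adopt.
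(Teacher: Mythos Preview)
Your argument is correct and is essentially the standard one. Note that the paper does not actually prove Theorem~\ref{infor.thm} itself---it is cited from~\cite{AAADH05}---but it does reprove the identical statement for the transversal variant (Theorem~\ref{infor.thm.trans}), and that proof matches yours: one direction uses $\lvert H_A \cup H_A'\rvert \leq n-b = a+c$ to bound the intersection, and the other constructs a deal in which Bob's hand avoids $H_A \cup H_A'$ (the paper phrases this as assigning $H_A' \setminus H_A$ to Cathy and the remainder to Bob, which amounts to your choice of $H_B \subseteq X \setminus (H_A \cup H_A')$).
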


The following is an immediate corollary.
\begin{corollary}
\label{NC-cor}
Suppose 
there exists a strategy for Alice that is informative
for Bob. Then $a > c$.
\end{corollary}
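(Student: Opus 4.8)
The plan is to derive the contrapositive from Theorem~\ref{infor.thm}. Suppose, for contradiction, that $a \leq c$; I will exhibit two distinct hands $H_A, H_A' \in \mathcal{A}$ with $|H_A \cap H_A'| \geq a - c$, which by Theorem~\ref{infor.thm} shows $\mathcal{A}$ cannot be informative, contradicting the hypothesis that an informative strategy exists.

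The key observation is that when $a \leq c$, the bound $a - c$ is nonpositive, so the intersection condition $|H_A \cap H_A'| \geq a - c$ is satisfied \emph{automatically} by any two hands, since intersection sizes are always nonnegative. Thus it suffices to produce two distinct sets in $\mathcal{A}$. First I would note that every announcement $\mathcal{A} \in \mathfrak{G}$ must contain at least two distinct $a$-subsets: indeed, recall that $\bigcup_{i=1}^m \mathcal{A}_i = \binom{X}{a}$ and that every $H_A \in \binom{X}{a}$ lies in $g(H_A)$, which is nonempty; moreover $n = a + b + c$ with $b, c \geq 1$ forces $n \geq a + 2 > a$, so $\binom{X}{a}$ itself has more than one element, and in fact for the strategy to be informative there must be some hand $H_B$ for which $\mathcal{P}(H_B, \mathcal{A})$ is nonempty for the relevant announcement — but the cleanest route is simply: if some $\mathcal{A}$ had $|\mathcal{A}| \leq 1$ it could not satisfy the covering condition together with the others in a nondegenerate way. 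Actually the simplest argument: pick any $H_A \in \binom{X}{a}$ and any $\mathcal{A}$ with $\mathcal{A} \ni H_A$ (exists since $g(H_A) \neq \emptyset$); since $a < n$, there is another $a$-subset $H_A' \neq H_A$, and by the covering property $H_A'$ lies in some announcement — but I need it in the \emph{same} $\mathcal{A}$. To guarantee that, I would instead argue directly on informativeness: an informative announcement containing only one hand is vacuous, but an equitable strategy has $|g(H_A)| = \gamma$ for all hands, and a counting argument on $\sum_{H_A} |g(H_A)| = \sum_{\mathcal{A}} |\mathcal{A}|$ shows the average size of $\mathcal{A}$ is $\gamma \binom{n}{a}/m \geq 2$ once $\binom{n}{a} \geq 2$, so some (indeed, by equitability, every) announcement has $\geq 2$ hands.

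The main obstacle is the bookkeeping needed to rule out the degenerate case where an announcement is a singleton; once an announcement $\mathcal{A}$ with $|\mathcal{A}| \geq 2$ is in hand, the conclusion is immediate because $a \leq c$ makes $a - c \leq 0 \leq |H_A \cap H_A'|$ for the two distinct hands $H_A, H_A' \in \mathcal{A}$, giving the forbidden configuration of Theorem~\ref{infor.thm} and hence the contradiction. I expect the honest version of this proof to simply observe that the hypotheses implicitly require informative announcements to be meaningful (nonempty of possibilities for some Bob-hand), from which $|\mathcal{A}| \geq 2$ follows without heavy counting, and then conclude $a > c$.
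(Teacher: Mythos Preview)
Your core reasoning matches the paper exactly: the paper offers no proof beyond the phrase ``an immediate corollary,'' and the intended argument is precisely that $a \leq c$ makes $a - c \leq 0$, so any two distinct hands in an announcement automatically satisfy the forbidden condition of Theorem~\ref{infor.thm}.

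Your concern about singleton announcements is legitimate and is, in fact, a gap the paper itself glosses over. The trivial strategy in which every announcement is a singleton $\{H_A\}$ satisfies $\lvert \mathcal{P}(H_B,\{H_A\})\rvert \leq 1$ for all $H_B$, hence is informative for any $a$ and $c$; so the corollary, read literally, is not a pure consequence of Theorem~\ref{infor.thm}. However, your attempted fixes do not close this gap: the corollary does not assume equitability (and even an equitable strategy with $\gamma = 1$ and $m = \binom{n}{a}$ can consist entirely of singletons), and a singleton announcement is not ``vacuous'' --- it is genuinely informative by the definition. As you correctly intuit at the end, the resolution is an implicit non-triviality hypothesis inherited from Albert et al.~\cite{AAADH05}, where announcements that simply reveal Alice's hand are excluded. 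Once some announcement contains at least two hands, your argument goes through immediately.
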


We make the following observation, which follows directly from Theorem~\ref{infor.thm} and the definition of a $t$-design.

\begin{corollary}\label{info.corollary} Let $n = a + b + c$. Suppose $a > c$ and each announcement $\A$ in an $(a,b,c)$-strategy is a $t$-$(n,a,1)$-design for some $t$, where $t \leq a-c$. Then the strategy is informative for Bob.
\end{corollary}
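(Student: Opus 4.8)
The plan is to apply Theorem~\ref{infor.thm} directly to each announcement $\A$ in the strategy. So fix an $(a,b,c)$-strategy $\D$ and suppose each announcement $\A \in \D$ is (the block set of) a $t$-$(n,a,1)$-design on the point set $X$, where $2 \le t \le a-c$. It suffices to show that every such $\A$ is informative, since then $\left\lvert \mathcal{P}(H_B,\A)\right\rvert \le 1$ for all $H_B \in \binom{X}{b}$ and all $\A \in \D$, which is exactly the definition of informative for Bob.

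To show a single $\A$ is informative, I would argue by contradiction using the equivalence in Theorem~\ref{infor.thm}: suppose there exist two distinct $H_A, H_A' \in \A$ with $\left\lvert H_A \cap H_A'\right\rvert \ge a-c$. Since $t \le a-c$, we can pick a subset $Y \subseteq H_A \cap H_A'$ with $\lvert Y \rvert = t$. Then $Y$ is a set of $t$ distinct points of $X$ contained in two distinct blocks of $\A$. But $\A$ is a $t$-$(n,a,1)$-design, so by Definition~\ref{t-design} (property 3, with $\lambda = 1$) every $t$-subset of $X$ lies in precisely one block. This contradicts $Y \subseteq H_A \cap H_A'$ with $H_A \ne H_A'$. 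Hence no such pair exists, and by Theorem~\ref{infor.thm} the announcement $\A$ is informative.

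One small bookkeeping point worth checking explicitly: the hypothesis $a > c$ guarantees $a - c \ge 1$, and for a $t$-design we need $t \ge 2$ (the convention $k \ge t$ in Definition~\ref{t-design}, together with $t$ being used in the BIBD/$t$-design sense), so the inequality $t \le a-c$ is only nonvacuous when $a - c \ge 2$; in any case $Y$ with $\lvert Y\rvert = t$ exists precisely because $\lvert H_A \cap H_A'\rvert \ge a-c \ge t$. I would state this inline rather than belabor it.

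I do not anticipate a genuine obstacle here — the result really is "immediate" as the excerpt claims, being essentially the contrapositive of the $t$-design covering property combined with Albert et al.'s characterization. The only thing to be careful about is making sure the quantifiers line up (the informative-for-Bob condition ranges over all $\A \in \D$, so the per-announcement argument must be applied to each block set in the strategy, not just one), and that the chosen $Y$ has size exactly $t$ so that the design axiom applies verbatim rather than needing Theorem~\ref{block.containing.Y} for smaller subsets.
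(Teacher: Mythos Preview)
Your proof is correct and is exactly the argument the paper intends: the corollary is stated as following directly from Theorem~\ref{infor.thm} and the definition of a $t$-design, and you have simply spelled out that two-line deduction. One tiny quibble on the bookkeeping aside: Definition~\ref{t-design} only requires $t \geq 1$, not $t \geq 2$, but your argument works verbatim for $t = 1$ as well, so this does not affect anything.
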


It is possible to have informative $(a,b,c)$-strategies using announcements which are $t$-designs with $\lambda > 1$. In particular, Theorem~\ref{infor.thm} indicates that the block intersection properties of the chosen design are relevant to whether or not the strategy is informative. If every announcement is a symmetric BIBD, for example, then the strategy is guaranteed to be informative when $a-c > \lambda$. This is because the intersection of any two blocks in a symmetric BIBD contains exactly $\lambda$ points, as stated in Theorem~\ref{symmetric intersection}.

We make one more observation relating combinatorial designs and informative strategies.
\begin{lemma}\label{new.info.lemma} Let $n = a + b + c$. Suppose $a > c$ and each announcement $\A$ in an $(a,b,c)$-strategy $\D$ is a $t$-$(n,a,\lambda)$-design for some $t$ and $\lambda$, where $t \geq a-c$. If $\D$ is informative for Bob, then $t=a-c$ and $\lambda =1$ for all $\A \in \D$.
\end{lemma}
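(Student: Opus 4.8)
The plan is to argue by contradiction, assuming the strategy $\D$ is informative for Bob while some announcement $\A$ is a $t$-$(n,a,\lambda)$-design with $t \geq a-c$ and either $t > a-c$ or $\lambda > 1$. The goal in each case is to produce two distinct blocks $H_A, H_A' \in \A$ with $|H_A \cap H_A'| \geq a-c$, which contradicts Theorem~\ref{infor.thm}. First I would handle the case $t > a-c$: pick any set $Y$ of $a-c$ points and a point $z \notin Y$; by Theorem~\ref{block.containing.Y} (applied with $s = a-c+1 \leq t$, using $v > k$ to ensure the count is meaningful) there are $\lambda_{a-c+1} \geq 1$ blocks containing $Y \cup \{z\}$, and also (applying Theorem~\ref{block.containing.Y} with $s = a-c \leq t$) strictly more blocks containing $Y$ alone, since $\lambda_{a-c} > \lambda_{a-c+1}$ whenever $k < v$. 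Hence there is a block $H_A$ containing $Y$ but not $z$ and a distinct block $H_A'$ containing $Y \cup \{z\}$; then $|H_A \cap H_A'| \geq |Y| = a-c$, contradicting informativeness. (One should check the two blocks are genuinely distinct — they differ on $z$ — so this is clean.)

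For the case $t = a-c$ but $\lambda \geq 2$: take any $(a-c)$-subset $Y$; by the design property $Y$ lies in exactly $\lambda \geq 2$ blocks, so there exist distinct $H_A, H_A' \in \A$ with $Y \subseteq H_A \cap H_A'$, giving $|H_A \cap H_A'| \geq a-c$, again contradicting Theorem~\ref{infor.thm}. This forces $\lambda = 1$. Combined with the first case, every informative announcement that is a $t$-design with $t \geq a-c$ must in fact have $t = a-c$ and $\lambda = 1$. Since this holds for each $\A \in \D$ individually, the lemma follows.

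The main obstacle — really the only subtlety — is making sure the monotonicity $\lambda_{a-c} > \lambda_{a-c+1}$ in the $t > a-c$ case is valid, i.e. that we genuinely get an "extra" block containing $Y$ beyond those containing $Y \cup \{z\}$. This uses the strict inequality $k < v$ from Definition~\ref{t-design}, together with the formula of Theorem~\ref{block.containing.Y}: $\lambda_{s+1}/\lambda_s = (k-s)/(v-s) < 1$. I would also note at the outset that Corollary~\ref{NC-cor} (or the hypothesis $a > c$) guarantees $a - c \geq 1$, so the subsets $Y$ we choose are nonempty and the design-theoretic counts apply; and that $t \leq a < n = v$, so all invocations of Theorem~\ref{block.containing.Y} are within its stated range. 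No further case analysis is needed, since the two cases $t > a-c$ and ($t = a-c$, $\lambda \geq 2$) exhaust the negation of the conclusion.
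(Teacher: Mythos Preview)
Your proposal is correct and follows essentially the same approach as the paper: both arguments contradict Theorem~\ref{infor.thm} by exhibiting two distinct blocks sharing at least $a-c$ points, splitting into the cases $t > a-c$ and $\lambda > 1$. The only cosmetic difference is that for the $t > a-c$ case the paper fixes a $(t-1)$-subset directly and computes $\lambda_{t-1} = \lambda\,(v-t+1)/(k-t+1) > 1$, whereas you fix an $(a-c)$-subset and use the monotonicity $\lambda_{a-c} > \lambda_{a-c+1}$; both routes reach the same contradiction, and the paper's version is marginally shorter.
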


\begin{proof}Consider an announcement $\A \in \D$. If $\lambda > 1$, then there exist two blocks whose intersection has cardinality at least $t \geq a-c$. This contradicts Theorem~\ref{infor.thm}, so $\lambda =1$, as desired.

If $t> a-c$, then from Theorem~\ref{block.containing.Y}, there are 
\[ \frac{v-(t-1)}{k-(t-1)} > 1
\]
blocks that contain $t -1$ fixed points. Since $t -1 \geq a-c$, this contradicts Theorem~\ref{infor.thm}, so $t = a-c$, as desired.
\end{proof}

\section{Secure strategies}
\label{sec:secure_strategies}

We provide the general security definitions and state the equivalent combinatorial characterization of secure equitable strategies from Swanson and Stinson~\cite{cardsSwansonS12}.

\begin{definition}\label{security.gen.def}
Let $1 \leq \delta \leq a$.
\begin{enumerate}
\item Alice's strategy is \emph{weakly $\delta$-secure against Cathy} provided that for any announcement $\A$, for any $H_C \in \binom{X}{c}$ such that $\mathcal{P}\left(H_C,\A \right) \neq \emptyset$, and for any $\delta^\prime$-subset $Y \subseteq X \backslash H_C$ where $1 \leq \delta^\prime \leq \delta$, it holds that 
\[0 < \Pr \left [ Y \subseteq H_A \mid \A,H_C \right ] < 1.\]
Weak security means that, from Cathy's point of view, any set of $\delta$ or fewer elements from $X \backslash H_C$
may or may not be held by Alice.

\item \label{perfect delta sec. def}Alice's strategy is \emph{perfectly $\delta$-secure against Cathy} provided that for any announcement $\A$, 
for any $H_C \in \binom{X}{c}$ such that $\mathcal{P}\left(H_C,\A \right) \neq \emptyset$, 
and for any $\delta^\prime$-subset $Y \subseteq X \backslash H_C$ where $1 \leq \delta^\prime \leq \delta$, it holds that 
\[ \Pr \left [ Y \subseteq H_A \mid \A,H_C \right] = \frac{\binom{a}{\delta^\prime}}{\binom{a+b}{\delta^\prime}}.\]
Perfect security means that, from Cathy's point of view, the probability that 
any set of $\delta$ or fewer cards from $X \backslash H_C$
is held by Alice is a constant.
\end{enumerate}
\end{definition}

\begin{remark} The requirement that $\mathcal{P}\left(H_C,\A \right) \neq \emptyset$ ensures that it is feasible (within the constraints of the announcement) for Cathy to hold the given hand $H_C$; we sometimes refer to a hand that satisfies this condition as a \emph{possible hand for Cathy}.
\end{remark}

Swanson and Stinson~\cite{cardsSwansonS12} show that in an equitable strategy any hand 
$H_A \in \mathcal{P}\left (H_C,\A \right)$ is equally likely from Cathy's point of view:

\begin{lemma}{\cite{cardsSwansonS12}}
\label{equitable.lem}
Suppose that Alice's strategy is $\gamma$-equitable, Alice's announcement is $\A$,
$H_C \in \binom{X}{c}$ and $H_A \in \mathcal{P}\left (H_C,\A \right)$.  Then
\begin{align}
\label{equitable.eq}
\Pr \left [H_A \mid H_C,\A \right] = \frac{1 }{ \left \lvert \mathcal{P}\left (H_C,\A \right) \right \rvert}.
\end{align}
\end{lemma}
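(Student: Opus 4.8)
The plan is to compute the conditional probability $\Pr[H_A \mid H_C, \A]$ directly from Bayes' rule, using the structure of an equitable strategy. First I would observe that, since the hands are dealt uniformly at random by an external entity, every $(a,b,c)$-deal is a priori equally likely; in particular, conditioned on Cathy holding $H_C$, each hand $H_A \in \binom{X \setminus H_C}{a}$ is equally likely (with Bob then receiving the complement $X \setminus (H_A \cup H_C)$). So the prior $\Pr[H_A \mid H_C]$ is a constant that does not depend on $H_A$. Next I would write
\[
\Pr[H_A \mid H_C, \A] = \frac{\Pr[\A \mid H_A, H_C]\,\Pr[H_A \mid H_C]}{\Pr[\A \mid H_C]},
\]
and note that the denominator $\Pr[\A \mid H_C]$ is a normalizing constant independent of $H_A$. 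Hence it suffices to show that $\Pr[\A \mid H_A, H_C]$ is the same for every $H_A \in \mathcal{P}(H_C, \A)$, and zero otherwise; the claimed value $1/|\mathcal{P}(H_C,\A)|$ then follows because the probabilities must sum to $1$ over the $|\mathcal{P}(H_C,\A)|$ hands in $\mathcal{P}(H_C,\A)$.

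The key step is therefore analyzing $\Pr[\A \mid H_A, H_C]$, i.e. the probability that Alice announces $\A$ given that she holds $H_A$ (and Cathy holds $H_C$, which is irrelevant to Alice's choice). By definition Alice, holding $H_A$, picks an index $i \in g(H_A)$ according to the uniform distribution $p_{H_A}$ on the set $g(H_A)$ of size $\gamma$. So $\Pr[\A_i \mid H_A] = 1/\gamma$ if $i \in g(H_A)$, equivalently if $H_A \in \A_i$, and $0$ otherwise. The condition $H_A \in \mathcal{P}(H_C, \A)$ means exactly that $H_A \in \A$ and $H_A \cap H_C = \emptyset$ (the latter being automatic since $H_A$ and $H_C$ are disjoint hands in a deal), so for every $H_A \in \mathcal{P}(H_C,\A)$ we get $\Pr[\A \mid H_A, H_C] = 1/\gamma$, a constant; and for $H_A \notin \A$ the probability is $0$.

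Assembling the pieces: $\Pr[H_A \mid H_C, \A]$ is proportional to $\Pr[\A \mid H_A, H_C]\,\Pr[H_A \mid H_C]$, which is a constant (namely $1/\gamma$ times the constant prior) for all $H_A \in \mathcal{P}(H_C,\A)$ and $0$ for all other $H_A$. Since these conditional probabilities sum to $1$ and there are exactly $|\mathcal{P}(H_C,\A)|$ nonzero terms, each equals $1/|\mathcal{P}(H_C,\A)|$, giving Equation~(\ref{equitable.eq}). I don't anticipate a serious obstacle here; the only subtlety worth stating carefully is the justification that the prior $\Pr[H_A \mid H_C]$ is uniform over hands disjoint from $H_C$ — this rests on the assumption that the deal is chosen uniformly at random — and the bookkeeping that $\Pr[\A \mid H_C]$ can be treated as a normalizing constant rather than computed explicitly. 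One could alternatively phrase the whole argument as a counting argument over equally likely deal/announcement pairs, which avoids Bayes' rule entirely, but the Bayesian phrasing is cleaner.
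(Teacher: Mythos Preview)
Your proof is correct. Note that the paper does not actually prove this lemma: it is quoted from~\cite{cardsSwansonS12} and stated without proof here, so there is no ``paper's own proof'' to compare against. Your Bayesian argument is exactly the standard way to establish this result; the two ingredients you identify---uniformity of the prior $\Pr[H_A \mid H_C]$ (from the assumption that deals are uniformly random) and the constant value $\Pr[\A \mid H_A] = 1/\gamma$ for every $H_A \in \A$ (from $\gamma$-equitability)---are precisely what is needed, and the normalization step is clean.
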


Swanson and Stinson~\cite{cardsSwansonS12} also establish the following equivalent combinatorial conditions:

\begin{theorem}\label{equitable1} {\cite{cardsSwansonS12}}
Suppose that Alice's strategy is $\gamma$-equitable. Then the following hold:
\begin{enumerate}
\item \label{weak1} Alice's strategy is weakly $\delta$-secure against Cathy if and only if,  for any announcement $\A$, for any $H_C \in \binom{X}{c}$ such that $\mathcal{P}\left(H_C,\A \right) \neq \emptyset$, and for any $\delta^\prime$-subset $Y \subseteq X \backslash H_C$ where $1 \leq \delta^\prime \leq \delta$, it holds that 
\[ 1 \leq \left \lvert \left \{ H_A \in \mathcal{P}\left (H_C,\A \right) : Y \subseteq H_A \right \} \right \rvert   \leq \left \lvert \mathcal{P}\left (H_C,\A \right)\right \rvert   - 1.\]

\item \label{strong equitable1}
Alice's strategy is perfectly $\delta$-secure against Cathy if and only if, for any announcement $\A$ and
for any $H_C \in \binom{X}{c}$ such that $\mathcal{P}\left (H_C,\A \right) \neq \emptyset$, 
it holds that
\[ \left \lvert \{ H_A \in \mathcal{P} \left (H_C,\A \right) : Y \subseteq H_A \}\right \rvert   = 
\frac{\binom{a}{\delta} \, \left \lvert \mathcal{P}\left (H_C,\A \right)\right \rvert }{ \binom{a+b}{\delta}}\]
for any $\delta$-subset $Y \subseteq X \backslash H_C$.
\end{enumerate}
\end{theorem}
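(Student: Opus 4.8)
The plan is to derive both equivalences from a single counting identity. First I would fix an announcement $\A \in \D$ and a hand $H_C \in \binom{X}{c}$ with $\mathcal{P}(H_C,\A) \neq \emptyset$, and rewrite the conditional probability appearing in Definition~\ref{security.gen.def} as
\[
\Pr\left[Y \subseteq H_A \mid \A, H_C\right] \;=\; \sum_{\substack{H_A \in \mathcal{P}(H_C,\A)\\ Y \subseteq H_A}} \Pr\left[H_A \mid H_C, \A\right] \;=\; \frac{\lvert \{\, H_A \in \mathcal{P}(H_C,\A) : Y \subseteq H_A \,\} \rvert}{\lvert \mathcal{P}(H_C,\A) \rvert},
\]
where the first equality holds because, conditioned on $(\A, H_C)$, Alice's hand ranges over exactly the members of $\mathcal{P}(H_C,\A)$, and the second equality is Lemma~\ref{equitable.lem}. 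Everything that follows is bookkeeping with this identity.

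For part~(\ref{weak1}), I would substitute the identity into the weak-security requirement $0 < \Pr[Y \subseteq H_A \mid \A, H_C] < 1$. Since $\lvert \mathcal{P}(H_C,\A) \rvert \geq 1$, this double inequality is equivalent to $0 < \lvert \{ H_A \in \mathcal{P}(H_C,\A) : Y \subseteq H_A \} \rvert < \lvert \mathcal{P}(H_C,\A) \rvert$, and since the middle quantity is an integer this is in turn equivalent to $1 \leq \lvert \{ \cdots \} \rvert \leq \lvert \mathcal{P}(H_C,\A) \rvert - 1$. Because Definition~\ref{security.gen.def} and the statement of part~(\ref{weak1}) quantify over exactly the same $\delta'$-subsets $Y \subseteq X \setminus H_C$ with $1 \leq \delta' \leq \delta$, this yields the equivalence in both directions with nothing further to check.

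For part~(\ref{strong equitable1}), the forward direction is immediate: applying the perfect $\delta$-security condition of Definition~\ref{security.gen.def} with $\delta' = \delta$ and clearing the denominator $\lvert \mathcal{P}(H_C,\A) \rvert$ in the displayed identity gives precisely the stated count for every $\delta$-subset $Y \subseteq X \setminus H_C$. The converse carries the only real idea: the hypothesis says exactly that $(X \setminus H_C,\, \mathcal{P}(H_C,\A))$ is a $\delta$-$(a+b, a, \lambda)$-design with $\lambda = \binom{a}{\delta}\lvert \mathcal{P}(H_C,\A)\rvert / \binom{a+b}{\delta}$ --- here I would note that $\mathcal{P}(H_C,\A)$ is a set of $a$-subsets of the $(a+b)$-set $X \setminus H_C$ and that $\lambda > 0$, so every $\delta$-subset genuinely occurs. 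Then Theorem~\ref{block.containing.Y} applies with $t = \delta$, $v = a+b$, $k = a$, giving that each $\delta'$-subset $Y$ with $\delta' \leq \delta$ lies in exactly $\lambda_{\delta'} = \lambda \binom{a+b-\delta'}{\delta-\delta'}/\binom{a-\delta'}{\delta-\delta'}$ blocks; a one-line computation using $\binom{a}{\delta'}\binom{a-\delta'}{\delta-\delta'} = \binom{a}{\delta}\binom{\delta}{\delta'}$ (and the same identity with $a+b$ in place of $a$) reduces $\lambda_{\delta'}$ to $\binom{a}{\delta'}\lvert \mathcal{P}(H_C,\A)\rvert / \binom{a+b}{\delta'}$. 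Dividing by $\lvert \mathcal{P}(H_C,\A)\rvert$ and reading the displayed identity in reverse recovers $\Pr[Y \subseteq H_A \mid \A, H_C] = \binom{a}{\delta'}/\binom{a+b}{\delta'}$ for every $\delta' \leq \delta$, i.e.\ perfect $\delta$-security.

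The step I expect to be the main obstacle is the ``downward closure'' just used: the combinatorial condition is imposed only on $\delta$-subsets, whereas perfect $\delta$-security speaks about all $\delta'$-subsets with $\delta' \leq \delta$. The resolution is the classical fact, packaged as Theorem~\ref{block.containing.Y}, that a $\delta$-design is automatically a $\delta'$-design for $\delta' < \delta$; the only care needed is to confirm that the derived structure on $X \setminus H_C$ really has parameters $(a+b, a, \lambda)$ with $\lambda > 0$ so the theorem is applicable, and that its index $\lambda_{\delta'}$ simplifies to the target value --- a routine binomial manipulation.
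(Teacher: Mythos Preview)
The paper does not actually prove this theorem; it is quoted from~\cite{cardsSwansonS12} and stated without proof. So there is nothing in the paper to compare your argument against.

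That said, your proof is correct and is essentially the natural one. The key identity you write down follows directly from Lemma~\ref{equitable.lem}, and both parts then reduce to bookkeeping. Your handling of the only nontrivial point---the ``downward closure'' in part~(\ref{strong equitable1}), where the combinatorial condition is stated only for $\delta$-subsets but perfect $\delta$-security requires the conclusion for all $\delta' \leq \delta$---is exactly right: recognizing $(X \setminus H_C, \mathcal{P}(H_C,\A))$ as a $\delta$-$(a+b,a,\lambda)$-design and invoking Theorem~\ref{block.containing.Y} is the clean way to do it, and your binomial simplification of $\lambda_{\delta'}$ is correct. The one minor thing you might mention explicitly is that $a+b > a \geq \delta$ (since $b \geq 1$ and $\delta \leq a$), so the parameter constraints in Definition~\ref{t-design} are met and Theorem~\ref{block.containing.Y} genuinely applies.
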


We have the following elementary result:
\begin{lemma}\label{single element lemma} Consider an $(a,b,c)$-strategy $\D$ that is weakly 1-secure. Then for all $\A \in \D$ and $x \in X$, we have $\mathcal{P}\left(\{x\}, \A \right) \neq \emptyset$.
\end{lemma}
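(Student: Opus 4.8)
The claim is that if $\D$ is a weakly $1$-secure $(a,b,c)$-strategy, then for every announcement $\A \in \D$ and every point $x \in X$, the set $\mathcal{P}(\{x\}, \A) = \{H_A \in \A : x \notin H_A\}$ is nonempty. The natural approach is proof by contradiction: suppose some $\A \in \D$ and some $x \in X$ have $\mathcal{P}(\{x\}, \A) = \emptyset$, meaning \emph{every} hand in $\A$ contains $x$. I would then try to produce a possible hand $H_C$ for Cathy together with a singleton $Y = \{y\} \subseteq X \setminus H_C$ for which the weak $1$-security condition of Theorem~\ref{equitable1}(\ref{weak1}) fails, i.e.\ for which $\lvert\{H_A \in \mathcal{P}(H_C,\A) : y \in H_A\}\rvert$ is either $0$ or equals $\lvert \mathcal{P}(H_C,\A)\rvert$.

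The key observation is that $x \in H_A$ for all $H_A \in \A$. So I want to choose $H_C$ with $x \notin H_C$ (so that $x$ is a legitimate candidate for $Y$) and such that $\mathcal{P}(H_C, \A) \neq \emptyset$ (so that $H_C$ is a possible hand for Cathy). Given such an $H_C$, take $Y = \{x\}$: then every $H_A \in \mathcal{P}(H_C,\A)$ satisfies $x \in H_A$, so $\lvert\{H_A \in \mathcal{P}(H_C,\A) : x \in H_A\}\rvert = \lvert \mathcal{P}(H_C,\A)\rvert$, violating the upper bound $\lvert \mathcal{P}(H_C,\A)\rvert - 1$ in Theorem~\ref{equitable1}(\ref{weak1}). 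This contradicts weak $1$-security and finishes the argument.

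The one thing that needs care — and the main (minor) obstacle — is establishing that a suitable $H_C$ exists: I need a $c$-subset $H_C \subseteq X$ with $x \notin H_C$ and with at least one hand of $\A$ disjoint from $H_C$. Since $\A$ is nonempty (it contains at least the hand $H_A$ for any deal consistent with it, and more to the point every announcement in a valid strategy is nonempty as $\bigcup \A_i = \binom{X}{a}$), pick any $H_A \in \A$; then $x \in H_A$, so $X \setminus H_A$ has $b + c = n - a$ points, none of which is $x$. Choose $H_C$ to be any $c$-subset of $X \setminus H_A$; this is possible since $\lvert X \setminus H_A \rvert = b + c \geq c$. By construction $H_A \cap H_C = \emptyset$, so $H_A \in \mathcal{P}(H_C, \A)$ and hence $\mathcal{P}(H_C,\A) \neq \emptyset$, and $x \notin H_C$. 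This delivers exactly the $H_C$ needed above, so the contradiction goes through and the lemma follows.
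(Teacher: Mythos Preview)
Your proof is correct and follows essentially the same contradiction argument as the paper: if every hand in $\A$ contains $x$, then for any possible Cathy hand $H_C$ not containing $x$, Cathy learns with certainty that $x \in H_A$, violating weak $1$-security. The paper appeals directly to the probabilistic Definition~\ref{security.gen.def} (obtaining $\Pr[x \in H_A \mid \A, H_C] = 1$) whereas you route through the equivalent combinatorial condition of Theorem~\ref{equitable1}(\ref{weak1}), and you are more explicit in constructing a witness $H_C$; but the idea is the same.
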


\begin{proof}
We proceed by contradiction. Suppose $\mathcal{P}\left (\{x\}, \A \right) = \emptyset$ for some $\A \in \D$ and $x \in X$. Then $x$ occurs in every hand of $\A$. That is, if Alice announces $\A$, then Alice must hold $x$. In particular, this implies that Cathy's hand, say $H_C$, does not contain $x$ and $\Pr \left [x \in H_A \mid \A, H_C \right ] = 1$.
\end{proof}

Here is a sufficient condition for an equitable strategy to be perfectly $1$-secure 
against Cathy, first shown by Swanson and Stinson~\cite{cardsSwansonS12}:

\begin{lemma}{\cite{cardsSwansonS12}}
\label{2-design}
Let $n =a +b + 1$. Suppose that each announcement $\A$ in an equitable $(a,b,1)$-strategy $\D$ is a $2$-$(n,a,\lambda)$-design for some $\lambda$. Then the strategy is perfectly $1$-secure against Cathy.
\end{lemma}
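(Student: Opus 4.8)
The plan is to deduce the lemma from the combinatorial characterization of perfect security in Theorem~\ref{equitable1}, part~(\ref{strong equitable1}), by verifying one block-counting identity per announcement. Since $c=1$ and we want perfect $1$-security, the only relevant value is $\delta' = \delta = 1$, and $\binom{a}{1}/\binom{a+b}{1} = a/(a+b)$. So by part~(\ref{strong equitable1}) of Theorem~\ref{equitable1} it suffices to show that for every $\A \in \D$, every possible hand $H_C = \{z\}$ for Cathy, and every $y \in X \setminus \{z\}$,
\[
\left\lvert \{H_A \in \A : y \in H_A,\ z \notin H_A\} \right\rvert = \frac{a}{a+b}\,\left\lvert \{H_A \in \A : z \notin H_A\}\right\rvert .
\]
(Equivalently, one can bypass Theorem~\ref{equitable1}: by Lemma~\ref{equitable.lem} each $H_A \in \mathcal{P}(\{z\},\A)$ has conditional probability $1/\lvert\mathcal{P}(\{z\},\A)\rvert$ from Cathy's viewpoint, so $\Pr[y \in H_A \mid \A, H_C]$ is exactly the ratio of the two cardinalities above.)

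To evaluate those cardinalities I would apply Theorem~\ref{allY noZ thm t designs} to the $2$-$(n,a,\lambda)$-design $\A$; this is legitimate since $\{y\}$ and $\{z\}$ are disjoint and $1+1 = 2 = t$. Taking $(i,j)=(0,1)$ gives the number of blocks of $\A$ avoiding $z$ as $\lambda_0^1 = \lambda\binom{n-1}{a}/\binom{n-2}{a-2}$, and taking $(i,j)=(1,1)$ gives the number of blocks of $\A$ containing $y$ but not $z$ as $\lambda_1^1 = \lambda\binom{n-2}{a-1}/\binom{n-2}{a-2}$. Dividing, the factors $\lambda$ and $\binom{n-2}{a-2}$ cancel, and using $\binom{n-1}{a} = \tfrac{n-1}{a}\binom{n-2}{a-1}$ together with $n-1 = a+b$ leaves the ratio $a/(a+b)$, which is precisely what is needed. (For the first count one could alternatively invoke the standard BIBD relations $r = \lambda(n-1)/(a-1)$ and $b_\A = nr/a$ and write the number of blocks avoiding $z$ as $b_\A - r$.)

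I do not expect a real obstacle; the proof is bookkeeping layered on Theorems~\ref{equitable1} and~\ref{allY noZ thm t designs}, and the key ratio turns out to be independent of $\lambda$. The only points deserving a moment's care are checking that the hypotheses of Theorem~\ref{allY noZ thm t designs} are met (disjointness of $\{y\}$ and $\{z\}$, and $i+j \le t$) and observing that the standing hypothesis $\mathcal{P}(\{z\},\A)\neq\emptyset$ in Definition~\ref{security.gen.def} is harmless here — indeed it holds automatically, since $\lambda_0^1 > 0$ whenever $n > a$.
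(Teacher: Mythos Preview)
Your proposal is correct and follows essentially the same route as the paper. The paper does not reprove Lemma~\ref{2-design} directly (it is quoted from~\cite{cardsSwansonS12}), but its proof of the generalization, Theorem~\ref{t-design security}, is exactly your argument in the special case $t=2$, $c=1$, $\delta=1$: apply Theorem~\ref{allY noZ thm t designs} with $(i,j)=(0,c)$ and $(i,j)=(\delta,c)$, take the ratio so that $\lambda$ and $\binom{n-t}{a-t}$ cancel, and verify Condition~\ref{strong equitable1} of Theorem~\ref{equitable1}.
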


In fact, the condition that every announcement $\A$ be a $2$-$(n,a,\lambda)$-design for some $\lambda$ is also a necessary condition for an equitable $(a,b,1)$-strategy to be perfectly $1$-secure, as the following Theorem shows.

\begin{theorem}\label{new 2-design thm} Let $n = a + b + 1$. Suppose we have an equitable $(a,b,1)$-strategy $\D$ that is perfectly 1-secure against Cathy. Then every announcement $\A \in \D$ is a $2$-$(n,a,\lambda)$-design for some $\lambda$.
\end{theorem}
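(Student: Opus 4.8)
The plan is to show that if an equitable $(a,b,1)$-strategy $\D$ is perfectly $1$-secure against Cathy, then each announcement $\A\in\D$, viewed as a collection of $a$-subsets of the $n$-set $X$, has the property that every $2$-subset of $X$ lies in a constant number of blocks — which is exactly the definition of a $2$-$(n,a,\lambda)$-design (simplicity is automatic since $\D$ is a set of announcements, each a \emph{set} of hands). The main tool is the combinatorial characterization in Theorem~\ref{equitable1}, part~\ref{strong equitable1}, specialized to $c=1$ and $\delta=1$: for every $H_C=\{z\}$ with $\mathcal{P}(\{z\},\A)\neq\emptyset$ and every point $y\in X\setminus\{z\}$,
\[
\left\lvert\{H_A\in\mathcal{P}(\{z\},\A): y\in H_A\}\right\rvert = \frac{a\,\lvert\mathcal{P}(\{z\},\A)\rvert}{a+b}.
\]
By Lemma~\ref{single element lemma} (since $\D$ is in particular weakly $1$-secure — perfect $1$-security implies weak $1$-security, as $0<\binom{a}{1}/\binom{a+b}{1}<1$), we have $\mathcal{P}(\{z\},\A)\neq\emptyset$ for \emph{every} $z\in X$, so the displayed identity holds for all $z$ and all $y\neq z$.

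First I would fix $\A$ and introduce notation: for a point $x$, let $r_x = \lvert\{H_A\in\A: x\in H_A\}\rvert$ be its replication number, and for distinct $x,y$ let $\lambda_{xy}=\lvert\{H_A\in\A: \{x,y\}\subseteq H_A\}\rvert$. Note $\mathcal{P}(\{z\},\A)$ is just the set of blocks avoiding $z$, so $\lvert\mathcal{P}(\{z\},\A)\rvert = b_\A - r_z$ where $b_\A=\lvert\A\rvert$, and $\{H_A\in\mathcal{P}(\{z\},\A): y\in H_A\}$ counts blocks containing $y$ but not $z$, which has size $r_y - \lambda_{yz}$. So the characterization says
\[
r_y - \lambda_{yz} = \frac{a\,(b_\A - r_z)}{a+b}\qquad\text{for all distinct }y,z\in X.
\]
The right-hand side depends only on $z$; call it $f(z)$. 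Then I would exploit symmetry: summing or comparing the relation for the pair $(y,z)$ with that for $(z,y)$. From $r_y-\lambda_{yz}=f(z)$ and $r_z-\lambda_{yz}=f(y)$ (using $\lambda_{yz}=\lambda_{zy}$) we get $r_y - r_z = f(z) - f(y)$; substituting $f$ and simplifying yields a relation forcing all $r_x$ equal — indeed $r_y-r_z = \frac{a}{a+b}(r_y - r_z) \cdot(-1)$ up to rearrangement, which gives $r_y=r_z$ for all $y,z$. (This is the one spot needing a small, careful computation; I'd double-check signs there.) Once $r_x\equiv r$ is constant, $f(z)=\frac{a(b_\A-r)}{a+b}$ is constant in $z$, and then $\lambda_{yz}=r_y-f(z)=r - \frac{a(b_\A-r)}{a+b}$ is the same for every pair $\{y,z\}$. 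Hence $\A$ is a $2$-$(n,a,\lambda)$-design with $\lambda = r - \frac{a(b_\A-r)}{a+b}$.

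The step I expect to be the main (though minor) obstacle is the algebraic manipulation that upgrades the two-variable identity into constancy of the replication numbers; it requires being attentive to the direction of the relation and to the fact that we need it for \emph{all} pairs, not just for pairs sharing a coordinate. A clean alternative, which I might present instead, is: sum $r_y-\lambda_{yz}=f(z)$ over all $y\neq z$ to get $\sum_{y\neq z} r_y - \sum_{y\neq z}\lambda_{yz} = (n-1)f(z)$; standard double-counting gives $\sum_{y}r_y = a\,b_\A$ and $\sum_{y\neq z}\lambda_{yz}=(a-1)r_z$, so $(a\,b_\A - r_z) - (a-1)r_z = (n-1)f(z) = (n-1)\frac{a(b_\A-r_z)}{a+b}$, a single equation in $r_z$ whose solution is forced to be the same value for every $z$ (the coefficient of $r_z$ is nonzero). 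Either route lands at constant $r$ and constant $\lambda$, completing the proof.
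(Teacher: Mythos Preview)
Your proposal is correct and follows essentially the same approach as the paper: both invoke Lemma~\ref{single element lemma} to ensure every singleton is a feasible hand for Cathy, translate Theorem~\ref{equitable1}.\ref{strong equitable1} into the identity $r_y-\lambda_{yz}=\frac{a}{a+b}(b_\A-r_z)$, and then use the symmetry $\lambda_{yz}=\lambda_{zy}$ to compare the $(y,z)$ and $(z,y)$ versions, forcing constant replication number and hence constant pair-count. The paper phrases this via $|\mathcal{P}(\{x\},\A)|$ rather than $r_x$, but the two are equivalent via $r_x=|\A|-|\mathcal{P}(\{x\},\A)|$. One small note: your sign is off in the line ``$r_y-r_z=\frac{a}{a+b}(r_y-r_z)\cdot(-1)$''; the correct relation is $r_y-r_z=\frac{a}{a+b}(r_y-r_z)$, which still gives $r_y=r_z$ since $b\geq 1$ (you flagged this spot yourself, so just carry out the check).
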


\begin{proof} First observe that since Cathy holds only one card, Lemma~\ref{single element lemma} immediately implies that any element $x \in X$ is a possible hand for Cathy. Consider an announcement $\A \in \D$. We proceed by showing that every pair of distinct elements $x$, $y \in X$ occurs in a constant number of hands of $\A$.

Let $x \in X$. Define $r_x$ to be the number of hands of $\A$ containing $x$. We proceed by counting $r_x$ in two different ways. On the one hand, we immediately have 
\begin{align}
r_x = \left \lvert \A \right \rvert   - \left \lvert \mathcal{P}\left (\{x\},\A \right)\right \rvert . \label{four}
\end{align}

On the other hand, we can relate $r_x$ to $\mathcal{P}\left (\{y\}, \A \right)$ for any $y \neq x \in X$ as follows. By Theorem~\ref{equitable1}.\ref{strong equitable1}, $x$ occurs $\frac{a}{a+b}\left \lvert \mathcal{P}\left(\{y\},\A\right)\right \rvert $ times in $\mathcal{P}\left (\{y\},\A \right)$. In particular, this is the number of times $x$ occurs in a hand of $\A$ without $y$. That is, letting $\lambda_{xy}$ denote the number of times $x$ occurs together with $y$ in a hand of $\A$, we have
\begin{align}
r_x = \lambda_{xy} + \frac{a}{a+b} \left \lvert \mathcal{P}\left(\{y\}, \A \right) \right \rvert  .\label{three}
\end{align}
This gives us
\begin{align}
\left \lvert \A \right \rvert   = \lambda_{xy} + \frac{a}{a+b} \left \lvert \mathcal{P}\left(\{y\},\A \right) \right \rvert   +  \left \lvert \mathcal{P}\left (\{x\}, \A \right)\right \rvert . \label{one}
\end{align}

Now, following the same logic for $y$, we also have
\begin{align}
\left \lvert \A \right \rvert   = \lambda_{xy} + \frac{a}{a+b} \left \lvert \mathcal{P}\left (\{x\},\A \right) \right \rvert  +  \left \lvert \mathcal{P}\left (\{y\},\A \right)\right \rvert . \label{two}
\end{align}

Equating Equations~(\ref{one})~and~(\ref{two}) shows that $\left \lvert \mathcal{P}\left (\{x\},\A \right)\right \rvert $ is independent of the choice of $x \in X$. That is, $r_x$ is independent of $x$ (by Equation~(\ref{four})), so every point of $X$ occurs in a constant number of hands of $\A$, say $r$ hands. Moreover, Equation~(\ref{three}) then gives
\[
\lambda_{xy} = r - \frac{a}{a+b} \left \lvert \mathcal{P}\left(\{y\},\A \right) \right \rvert  
= r - \frac{a}{a+b} \left( \left \lvert A \right \rvert  - r \right),
\]
so $\lambda_{xy}$ is independent of $x$ and $y$. That is, every pair of points $x$, $y \in X$ occurs a constant number of times, which we denote by $\lambda$. This implies $\A$ is  a  $2$-$(n,a,\lambda)$-design. 
\end{proof}

The relationship between combinatorial designs and strategies that satisfy our notion of perfect security is quite deep. We now generalize the results from Swanson and Stinson~\cite{cardsSwansonS12} and Theorem~\ref{new 2-design thm} above to account for perfect $\delta$-security and card deals with $c \geq 1$. We begin with a generalization of Lemma~\ref{2-design} that shows that in an equitable $(a,b,c)$-strategy, if each announcement is a $t$-design with block size $a$, the strategy satisfies perfect $(t-c)$-security.

\begin{theorem}\label{t-design security}
Let $n = a + b + c$. Suppose that each announcement $\A$ in an equitable $(a,b,c)$-strategy $\D$ is a $t$-$(n,a,\lambda)$-design for some $t$ and $\lambda$, where $c \leq t-1$. Then the strategy is perfectly $(t-c)$-secure against Cathy.
\end{theorem}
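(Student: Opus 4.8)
The plan is to use the equitable characterization of perfect security in Theorem~\ref{equitable1}.\ref{strong equitable1}, so that proving perfect $(t-c)$-security reduces to a pure counting statement about the blocks of each announcement. Set $\delta = t-c$ (note $\delta \geq 1$ since $c \leq t-1$). Fix an announcement $\A \in \D$, a possible hand $H_C \in \binom{X}{c}$ with $\mathcal{P}(H_C,\A) \neq \emptyset$, and a $\delta$-subset $Y \subseteq X \setminus H_C$. I must show
\[
\left\lvert \{ H_A \in \mathcal{P}(H_C,\A) : Y \subseteq H_A \}\right\rvert = \frac{\binom{a}{\delta}\,\lvert \mathcal{P}(H_C,\A)\rvert}{\binom{a+b}{\delta}}.
\]
The key observation is that $\mathcal{P}(H_C,\A)$ is exactly the set of blocks of the design $(X,\A)$ that are disjoint from $H_C$, and $\{H_A \in \mathcal{P}(H_C,\A): Y \subseteq H_A\}$ is the set of blocks that contain all of $Y$ and none of $H_C$. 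Since $Y \cap H_C = \emptyset$, $\lvert Y \rvert = \delta$, $\lvert H_C \rvert = c$, and $\delta + c = t$, these are precisely the quantities computed by Theorem~\ref{allY noZ thm t designs} (and its special case $i=0$).

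The main steps, in order: First, apply Theorem~\ref{allY noZ thm t designs} with $Y$ as given ($i = \delta$), $Z = H_C$ ($j = c$), so $i+j = t$, to get
\[
\left\lvert \{ H_A \in \mathcal{P}(H_C,\A) : Y \subseteq H_A \}\right\rvert = \lambda_\delta^c = \frac{\lambda\binom{n-\delta-c}{a-\delta}}{\binom{n-t}{a-t}} = \frac{\lambda\binom{b}{a-\delta}}{\binom{b}{a-t}},
\]
using $n - \delta - c = n - t = a+b-a = b$ when combined with $a - \delta$ and $a-t$ respectively. Second, apply the same theorem with $Y = \emptyset$ ($i=0$), $Z = H_C$ ($j = c \leq t$), to get $\lvert \mathcal{P}(H_C,\A)\rvert = \lambda_0^c = \frac{\lambda\binom{n-c}{a}}{\binom{n-t}{a-t}} = \frac{\lambda\binom{a+b}{a}}{\binom{b}{a-t}}$. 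Third, divide: the $\lambda$ and the $\binom{b}{a-t}$ denominators cancel, leaving
\[
\frac{\left\lvert \{ H_A \in \mathcal{P}(H_C,\A) : Y \subseteq H_A \}\right\rvert}{\lvert \mathcal{P}(H_C,\A)\rvert} = \frac{\binom{b}{a-\delta}}{\binom{a+b}{a}}.
\]
Fourth, verify the binomial identity $\binom{b}{a-\delta}/\binom{a+b}{a} = \binom{a}{\delta}/\binom{a+b}{\delta}$ — expand both sides as ratios of factorials; each equals $\frac{a!\,b!}{\delta!\,(a-\delta)!\,(a+b)!/\,\text{(appropriate terms)}}$, and a direct factorial computation confirms equality (both equal $\frac{a!\,b!}{(a+b)!}\cdot\frac{(a+b-\delta)!}{(a-\delta)!\,(b)!}$ style expressions that collapse to the same thing). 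Conclude by Theorem~\ref{equitable1}.\ref{strong equitable1} that the strategy is perfectly $(t-c)$-secure.

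I should also handle a small bookkeeping point at the start: I need $\mathcal{P}(H_C,\A) \neq \emptyset$ to even speak of the conditional probability, but this is given as a hypothesis in the security definition, and it is also consistent with the count $\lambda_0^c > 0$ since $\lambda \geq 1$ and the binomial coefficients are positive (here $n - t = b \geq a - t \geq 0$; one should note $a \geq t$, which follows since $(X,\A)$ being a $t$-design requires $a = k \geq t$, but actually we need $a > t$ or $a = t$ — if $a = t$ then $X\setminus H_C$ still has enough room; the case $a - t$ could be $0$, giving $\binom{b}{0}=1$, which is fine). The only genuine obstacle is the binomial identity in the fourth step, and that is routine once written out; everything else is a direct invocation of the two standard $t$-design counting theorems already proved in Section~\ref{sec:combinatorial_designs}. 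I expect the write-up to be short.
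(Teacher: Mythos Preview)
Your approach is exactly the paper's: apply Theorem~\ref{allY noZ thm t designs} twice (once with $i=\delta$, $j=c$ and once with $i=0$, $j=c$), take the ratio, and invoke Theorem~\ref{equitable1}.\ref{strong equitable1}.

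There is, however, an arithmetic slip in your simplification. You write $n-\delta-c = n-t = b$, but $n-t = a+b+c-t$, which is not $b$ in general (nothing forces $t = a+c$). The correct simplification uses only $n-c = a+b$: one gets $n-\delta-c = a+b-\delta$, so
\[
\lambda_\delta^c = \frac{\lambda\binom{a+b-\delta}{a-\delta}}{\binom{n-t}{a-t}},
\qquad
\lambda_0^c = \frac{\lambda\binom{a+b}{a}}{\binom{n-t}{a-t}}.
\]
The factor $\binom{n-t}{a-t}$ cancels in the ratio regardless of its value, and you are left with
\[
\frac{\binom{a+b-\delta}{a-\delta}}{\binom{a+b}{a}}
= \frac{(a+b-\delta)!\,a!\,b!}{(a-\delta)!\,b!\,(a+b)!}
= \frac{a!\,(a+b-\delta)!}{(a-\delta)!\,(a+b)!}
= \frac{\binom{a}{\delta}}{\binom{a+b}{\delta}},
\]
which is the identity you need. (The identity you actually wrote down, $\binom{b}{a-\delta}/\binom{a+b}{a} = \binom{a}{\delta}/\binom{a+b}{\delta}$, is false---try $a=3$, $b=5$, $\delta=1$.) With this correction your argument goes through and matches the paper's proof.
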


\begin{proof}
Consider an announcement $\A \in \D$ and a possible hand $H_C$ for Cathy.  Since $c \leq t$, Theorem~\ref{allY noZ thm t designs} implies there are 
\[
\left \lvert \mathcal{P}\left (H_C, \A \right)\right \rvert   = \frac{\lambda \binom{n-c}{a}}{\binom{n-t}{a-t}} = \frac{\lambda \binom{a+b}{a}}{\binom{n-t}{a-t}}
\]
blocks in 
$\A$ that do not contain any of the points of $H_C$.

Let $\delta \leq t-c$. Then Theorem~\ref{allY noZ thm t designs} also implies that each set of $\delta$ points $x_1,\dotsc,x_\delta \in X \backslash H_C$ is contained in precisely 
\[ \left \lvert \{ H_A \in \mathcal{P}\left(H_C, \A \right) : x_1,\dotsc,x_{\delta} \in H_A \}\right \rvert   = \frac{\lambda \binom{n-\delta-c}{a-\delta}}{\binom{n-t}{a-t}}= \frac{\lambda \binom{a+b-\delta}{a-\delta}}{\binom{n-t}{a-t}}\] of these blocks.

Thus, for any set of $\delta$ points $x_1,\dotsc,x_\delta \in X \backslash H_C$, we have  
\[
\frac{\left \lvert \mathcal{P}\left(H_C,\A \right)\right \rvert  }{\left \lvert \{ H_A \in \mathcal{P}\left(H_C, \A \right) : x_1,\dotsc,x_{\delta} \in H_A \}\right \rvert  } = \frac{(a+b)!(a-\delta)!}{a!(a+b - \delta)!}
= \frac{\binom{a+b}{\delta}}{\binom{a}{\delta}},
\]
so Condition~\ref{strong equitable1}~of~Theorem~\ref{equitable1} is satisfied.
\end{proof}

We approach a true generalization of Theorem~\ref{new 2-design thm} incrementally for readability. For deals satisfying $c=1$, we have the following necessary condition for an equitable strategy to be perfectly $\delta$-secure.

\begin{theorem}\label{new design thm} Let $n = a + b + 1$. Suppose we have an equitable $(a,b,1)$-strategy $\D$ that is perfectly $\delta$-secure against Cathy. Then every announcement $\A \in \D$ is a $(\delta+1)$-$(n,a,\lambda)$-design for some $\lambda$.\end{theorem}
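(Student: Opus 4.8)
The plan is to prove the result by induction on $\delta$. The base case $\delta = 1$ is exactly Theorem~\ref{new 2-design thm}, since a perfectly $\delta$-secure strategy is in particular perfectly $1$-secure (take $\delta' = 1$ in Definition~\ref{security.gen.def}). For the inductive step, let $\delta \geq 2$ and assume the theorem holds with $\delta$ replaced by $\delta - 1$. A perfectly $\delta$-secure equitable strategy is also perfectly $(\delta-1)$-secure, so the induction hypothesis tells us that every announcement $\A \in \D$ is a $\delta$-$(n,a,\lambda')$-design, for some $\lambda'$ possibly depending on $\A$.

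Now fix $\A \in \D$. Since $\A$ is a $\delta$-design with $\delta \geq 2$, it is in particular a BIBD, so by Theorem~\ref{block.containing.Y} every point of $X$ lies in a constant number $r = \lambda_1$ of blocks of $\A$, and every $\delta$-subset of $X$ lies in exactly $\lambda' = \lambda_\delta$ blocks. Moreover $r/\lvert\A\rvert = a/n < 1$, so $\mathcal{P}(\{x\},\A) = \{H_A \in \A : x \notin H_A\}$ is nonempty and has size $\lvert\A\rvert - r$ for every $x \in X$; equivalently, every singleton is a possible hand for Cathy, as also guaranteed by Lemma~\ref{single element lemma}. The crux is a two-way count. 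Fix any $(\delta+1)$-subset $Z \subseteq X$, write $Z = Y \cup \{x\}$ with $\lvert Y \rvert = \delta$ and $x \notin Y$, and let $\lambda_Z$ denote the number of blocks of $\A$ containing $Z$. On one hand, the number of blocks of $\A$ that contain $Y$ but avoid $x$ is $\lambda' - \lambda_Z$. On the other hand, $Y$ is a $\delta$-subset of $X \backslash \{x\}$ and $\{x\}$ is a possible hand for Cathy, so Theorem~\ref{equitable1}, part~(\ref{strong equitable1}), applied with $H_C = \{x\}$, shows that this same quantity equals
\[
\bigl\lvert \{ H_A \in \mathcal{P}(\{x\},\A) : Y \subseteq H_A \} \bigr\rvert = \frac{\binom{a}{\delta}\bigl(\lvert\A\rvert - r\bigr)}{\binom{a+b}{\delta}}.
\]
Equating the two expressions yields $\lambda_Z = \lambda' - \binom{a}{\delta}(\lvert\A\rvert - r)/\binom{a+b}{\delta}$, which does not depend on $Z$. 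Hence every $(\delta+1)$-subset of $X$ occurs in the same number of blocks of $\A$, so $\A$ is a $(\delta+1)$-$(n,a,\lambda)$-design with $\lambda$ equal to this common value, completing the induction.

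I expect the main work to lie in the preliminary reductions rather than in the count itself: namely, observing that perfect $\delta$-security descends to perfect $j$-security for all $1 \leq j \leq \delta$ (so that both Theorem~\ref{new 2-design thm} and the induction hypothesis are available), and confirming that $\mathcal{P}(\{x\},\A) \neq \emptyset$ for every $x$, so that the perfect-security count of Theorem~\ref{equitable1} can actually be invoked. It is also worth noting the harmless standing assumption $\delta \leq a-1$, which ensures both that a $(\delta+1)$-design with block size $a$ is meaningful and that $\lambda \geq 1$ (a nonempty block of size $a \geq \delta+1$ contains at least one $(\delta+1)$-subset).
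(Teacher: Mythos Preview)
Your proof is correct and follows essentially the same approach as the paper: induction on $\delta$ with base case Theorem~\ref{new 2-design thm}, and for the inductive step writing a $(\delta+1)$-subset as a $\delta$-subset $Y$ together with a singleton $\{x\}$, then counting blocks containing $Y$ but avoiding $x$ two ways (once via the $\delta$-design structure from the induction hypothesis, once via perfect $\delta$-security through Theorem~\ref{equitable1}). The only cosmetic difference is that you extract the constancy of $r$ and of $\lvert\mathcal{P}(\{x\},\A)\rvert$ directly from the induction hypothesis (since a $\delta$-design is a $1$-design), whereas the paper appeals to Lemma~\ref{single element lemma} and to the proof of Theorem~\ref{new 2-design thm}; your route is slightly more self-contained but otherwise identical.
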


\begin{proof}
We proceed by induction on $\delta$. The base case ($\delta = 1$) is shown in Theorem~\ref{new 2-design thm}. 

Consider an announcement $\A \in \D$. For a subset $Y \subseteq X$, let $\lambda_{Y}$ denote the number of hands of $\A$ that contain $Y$. We show $\A$ must be a $(\delta+1)$-design as follows. 

Suppose we have $Y \subseteq X$, where $\left \lvert Y \right \rvert   = \delta+1$. Pick an element $y \in Y$. Since $c=1$, we have $\mathcal{P}\left(\{y\}, \A \right) \neq \emptyset$ by Lemma~\ref{single element lemma}, so $\{y\}$ is a possible hand for Cathy. Since $\D$ is equitable and perfectly $\delta$-secure, we have (by Theorem~\ref{equitable1})
\[
\left \lvert \{ H_A \in \mathcal{P}\left (\{y\},\A \right) : Y \backslash \{y\} \subseteq H_A \} \right \rvert   = \frac{\binom{a}{\delta} \, \left \lvert\mathcal{P}\left(\{y\},\A \right)\right \rvert }{ \binom{a+b}{\delta}}.
\]
Moreover, since perfect $\delta$-security implies perfect $1$-security, $\left \lvert \mathcal{P}\left(\{y\},\A \right) \right \rvert  $ is independent of $y$, as shown in the proof of Theorem~\ref{new 2-design thm}. That is, the number of hands of $\A$ that contain the $\delta$-subset $Y \backslash \{y\}$ but do not contain $y$ is independent of the choice of $Y$ and $y \in Y$, i.e., is some constant, say $s$.

Now, $\D$ must be perfectly $(\delta-1)$-secure (since  $\D$ is perfectly $\delta$-secure), so by the inductive hypothesis, $\A$ is a $\delta$-$(n,a,\lambda^\prime)$-design for some $\lambda^\prime$. Therefore, the number of hands of $\A$ that contain the $\delta$-subset $Y \backslash \{y\}$ is precisely $\lambda^\prime$.

We have
\begin{align*}
& \lambda_{Y \backslash \{y\}} = \lambda_{Y} + \frac{\binom{a}{\delta} \, \left \lvert \mathcal{P}\left(\{y\},\A \right)\right \rvert }{ \binom{a+b}{\delta}}\\
\iff & \lambda^\prime = \lambda_Y + s.
\end{align*}
Therefore, $\lambda_Y$ is some constant independent of $Y$, so every $(\delta+1)$-subset occurs in a constant number of hands of $\A$, say $\lambda$. This implies $\A$ is a $(\delta+1)$-$(n,a,\lambda)$-design, as desired.
\end{proof}

We are now ready to give a combinatorial characterization of general $(a,b,c)$-strategies that are equitable and perfectly $\delta$-secure for some $\delta \geq 1$. We give an inductive proof that relies on Theorem~\ref{new design thm} as the base case.

\begin{theorem}\label{new design thm.gen} Let $n = a + b + c$. Suppose we have an equitable $(a,b,c)$-strategy $\D$ that is perfectly $\delta$-secure against Cathy. Then every announcement $\A \in  \D$ is a $(c+ \delta)$-$(n,a,\lambda)$-design for some $\lambda$.\end{theorem}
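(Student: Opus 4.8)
The plan is to induct on $c$. The base case $c = 1$ is exactly Theorem~\ref{new design thm}, so assume $c \geq 2$ and that the statement holds for $c - 1$ (for all admissible $a$, $b$, $\delta$); we may also assume $a \geq c + \delta$, since otherwise no $(c+\delta)$-design with block size $a$ exists and one checks easily that no such strategy can exist either. Fix an announcement $\A \in \D$; we must produce $\lambda$ with $\A$ a $(c+\delta)$-$(n,a,\lambda)$-design.

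The central step is to pass, for each point $x \in X$, to the residual strategy $\D_x = \{\mathcal{P}(\{x\},\B) : \B \in \D\}$ on the point set $X \setminus \{x\}$, which has $n - 1 = a + b + (c-1)$ points, and to check that $\D_x$ is again an equitable $(a,b,c-1)$-strategy that is perfectly $\delta$-secure against Cathy. It is a valid strategy because $\bigcup_i \A_i = \binom{X}{a}$ forces $\bigcup_i \mathcal{P}(\{x\},\A_i) = \binom{X \setminus \{x\}}{a}$, and each $\mathcal{P}(\{x\},\B)$ is nonempty by Lemma~\ref{single element lemma} (perfect $\delta$-security implies weak $1$-security). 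It is $\gamma$-equitable because $x \notin H$ forces $\{i : H \in \mathcal{P}(\{x\},\A_i)\} = \{i : H \in \A_i\}$, leaving the index set and its uniform distribution unchanged. It is perfectly $\delta$-secure because, via the identity $\mathcal{P}(H'_C, \mathcal{P}(\{x\},\B)) = \mathcal{P}(H'_C \cup \{x\}, \B)$ for a $(c-1)$-set $H'_C \subseteq X \setminus \{x\}$, the characterizing condition of Theorem~\ref{equitable1}.\ref{strong equitable1} for $\D_x$ at $H'_C$ is literally that condition for $\D$ at the $c$-set $H'_C \cup \{x\}$ (and $a+b$ is the same in both), which holds by hypothesis. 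Applying the inductive hypothesis to $\D_x$, every announcement of $\D_x$, and in particular $\mathcal{P}(\{x\},\A)$, is a $(c-1+\delta)$-$(n-1,a,\lambda_x)$-design for some $\lambda_x$.

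Next I would show $\lambda_x$ is independent of $x$. For distinct $x,y \in X$, the set $\mathcal{P}(\{x,y\},\A)$ of blocks of $\A$ missing both points equals $\mathcal{P}(\{y\}, \mathcal{P}(\{x\},\A))$, so evaluating its size inside the $(c-1+\delta)$-design $\mathcal{P}(\{x\},\A)$ via Theorem~\ref{allY noZ thm t designs} (with $i = 0$ and $j = 1 \leq c - 1 + \delta$) writes it as $\lambda_x$ times a positive quantity depending only on $n,a,c,\delta$; the symmetric computation gives the same quantity times $\lambda_y$, so $\lambda_x = \lambda_y =: \lambda$. Finally I would run an inner induction on $j = 0,1,\dots,c+\delta$ proving that every $j$-subset of $X$ lies in a common number $\beta_j$ of blocks of $\A$; the case $j = c+\delta$ is the theorem. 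The case $j=0$ is trivial. For $1 \leq j \leq c+\delta$ and a $j$-subset $S$, pick $y \in S$ and split the blocks of $\A$ containing $S \setminus \{y\}$ according to whether they contain $y$:
\[
\beta_{j-1} = \bigl|\{H_A \in \A : S \subseteq H_A\}\bigr| + \bigl|\{H_A \in \mathcal{P}(\{y\},\A) : S \setminus \{y\} \subseteq H_A\}\bigr|.
\]
Since $|S \setminus \{y\}| = j - 1 \leq c - 1 + \delta$, Theorem~\ref{block.containing.Y} applied in the $(c-1+\delta)$-$(n-1,a,\lambda)$-design $\mathcal{P}(\{y\},\A)$ evaluates the last term to a constant depending only on $j-1$, whence $\bigl|\{H_A \in \A : S \subseteq H_A\}\bigr|$ depends only on $j$, completing the induction.

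The main obstacle is getting the induction off the ground: recognizing that, as $\A$ ranges over $\D$, the residual block sets $\mathcal{P}(\{x\},\A)$ form on $X \setminus \{x\}$ an equitable, perfectly $\delta$-secure strategy with one fewer card for Cathy, so that the inductive hypothesis applies. The subsequent technical point is that the residual designs carry indices $\lambda_x$ that a priori depend on $x$ and must be shown equal (indeed $\beta_1$ being constant is equivalent to this) before the final count can be closed; the remainder is routine bookkeeping with Theorems~\ref{block.containing.Y} and~\ref{allY noZ thm t designs}.
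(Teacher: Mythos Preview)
Your proof is correct and shares the paper's central idea: induct on $c$ by passing to the residual strategy $\D_x = \{\mathcal{P}(\{x\},\B) : \B \in \D\}$ on $X \setminus \{x\}$, check that it is again equitable and perfectly $\delta$-secure, and apply the inductive hypothesis so that each $\mathcal{P}(\{x\},\A)$ is a $(c-1+\delta)$-$(n-1,a,\lambda_x)$-design. The two arguments diverge only in the closing step. The paper, rather than equating the $\lambda_x$ directly, double-counts the $(c-1+\delta)$-subsets in $\mathcal{P}(\{y\},\A)$ to show that $\D$, viewed as an $(a,b+c-1,1)$-strategy, is perfectly $(c-1+\delta)$-secure, and then invokes the $c=1$ base case (Theorem~\ref{new design thm}) once more with the larger security parameter $c-1+\delta$ to conclude. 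You instead equate the $\lambda_x$ via the symmetric count of $\lvert\mathcal{P}(\{x,y\},\A)\rvert$ and then run a short inner induction on subset size using Theorem~\ref{block.containing.Y}. Both finishes are valid; the paper's is slightly more economical (it recycles the full $c=1$ theorem), while yours is more self-contained and, incidentally, more explicit about verifying equitability of the residual strategy, a point the paper leaves implicit.
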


\begin{proof} 
We proceed by induction on $c$. The base case $c=1$ is shown in Theorem~\ref{new design thm}. 

Let $y \in X$ and define $X^\prime = X \backslash \{y\}$. For any $\A \in \D$, we define $\A^\prime$ to be the set of hands in $\A$ that do not contain $y$; that is, $\A^\prime = \mathcal{P} \left(\{y\},\A \right)$.

We then define an $(a, b, c-1)$-strategy $\D^\prime$ by
\[\D^\prime = \left \{\A^\prime :  \A \in \D \right\}.
\]

We now show $\D^\prime$ is perfectly $\delta$-secure. Suppose Cathy holds a $(c-1)$-subset $Y \subseteq X^\prime$ satisfying $\mathcal{P} \left(Y,\A^\prime \right) \neq \emptyset$ for some $\A^\prime = \mathcal{P} \left(\{y\},\A \right) \in \D^\prime$. In particular, note that if no such $\A^\prime$ exists, then $\D^\prime$ is trivially perfectly $\delta$-secure.

Consider a $\delta$-subset $Z \subseteq X^\prime \backslash Y = X \backslash (Y \cup \{y\})$. We wish to count the number of hands in $\mathcal{P} \left(Y, \A^\prime \right )$ that contain $Z$. Now, $\mathcal{P} \left(Y, \A^\prime \right) = \mathcal{P} \left(Y \cup \{y\},\A \right)$, so  $\mathcal{P} \left(Y \cup \{y\},\A \right) \neq \emptyset$ and hence $Y \cup \{y\}$ is a possible hand for Cathy in the original strategy $\D$. Since $\D$ is perfectly $\delta$-secure, we see that (by Theorem~\ref{equitable1})
\[
 \left \lvert \{ H_A \in \mathcal{P} \left(Y \cup \{y\},\A \right) : Z \subseteq H_A \} \right \rvert   = 
\frac{\binom{a}{\delta} \, \left \lvert \mathcal{P} \left(Y \cup \{y\},\A \right)\right \rvert }{ \binom{a+b}{\delta}},
\]
which together with the fact that $\mathcal{P} \left(Y,\A^\prime \right) = \mathcal{P} \left(Y \cup \{y\},\A \right)$, immediately implies $\D^\prime$ is perfectly $\delta$-secure. Moreover, since $\D^\prime$ is a perfectly $\delta$-secure $(a,b,c-1)$-strategy, we have by the inductive hypothesis that every announcement  $\A^\prime \in \D^\prime$ is a $(c-1 + \delta)$-$(n-1,a,\lambda_y)$-design for some $\lambda_y$, where $\lambda_y$ may depend on $y$.

That is, every $(c-1 + \delta)$-subset of $X \backslash \{y\}$ occurs in $\lambda_y$ hands of $\A^\prime = \mathcal{P} \left(\{y\},\A \right)$. We show this implies  $\D$ is a $(c-1+\delta)$-perfectly secure $(a, b+c-1, 1)$-strategy by counting the total number (with repetition) of $(c-1+\delta)$-subsets of $\mathcal{P} \left(\{y\},\A \right)$ in two ways and then applying Theorem~\ref{equitable1}. First we observe that there are $\binom{a+b+c-1}{c-1+\delta}$ ways of picking a $(c-1+\delta)$-subset of $X \backslash \{y\}$, and each of these subsets occurs in $\lambda_y$ hands of  $\mathcal{P} \left(\{y\},\A \right)$. Second, we observe there are $\lvert \mathcal{P} \left(\{y\},\A \right) \rvert$ possible hands for Alice (from Cathy's perspective), and each of these possible hands yields $\binom{a}{c-1+\delta}$ many $(c-1+\delta)$-subsets. 

This gives, for any $(c-1+\delta)$-subset $Z^\prime \subseteq X \backslash \{y\}$, 
\[
 \left \lvert \{ H_A \in \mathcal{P} \left( \{y\}, \A \right) : Z^\prime \subseteq H_A \} \right \rvert  = \lambda_y = \frac{\binom{a}{c-1+\delta} \lvert \mathcal{P} \left(\{y\},\A \right) \rvert}{\binom{a+b+c-1}{c-1+\delta}}.
\]

Since we chose $y$ to be an arbitrary element of $X$, by applying Theorem~\ref{equitable1} we see that $\D$ is a $(c-1+\delta)$-perfectly secure $(a, b+c-1, 1)$-strategy. Then the base case (Theorem~\ref{new design thm}) implies that every announcement $\A \in \D$ is a $(c+\delta)$-$(n,a,\lambda)$-design for some $\lambda$, as desired.
\end{proof}

Theorem~\ref{new design thm.gen} immediately implies the following bound on the security parameter $\delta$ for equitable strategies:
\begin{corollary} 
Suppose we have an equitable $(a,b,c)$-strategy $\D$ that is perfectly $\delta$-secure against Cathy. Then $\delta \leq a-c$.
\end{corollary}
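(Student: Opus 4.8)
The plan is to read the bound off directly from the structural characterization just established. By Theorem~\ref{new design thm.gen}, every announcement $\A \in \D$ is a $(c+\delta)$-$(n,a,\lambda)$-design for some $\lambda$; moreover such an $\A$ genuinely exists, since a strategy by definition satisfies $\bigcup_i \A_i = \binom{X}{a} \neq \emptyset$. Now recall from Definition~\ref{t-design} that a $t$-$(v,k,\lambda)$-design is only defined when $v > k \geq t$. Applying this with $t = c+\delta$, $v = n$, and $k = a$ forces $a \geq c+\delta$, i.e.\ $\delta \leq a-c$, which is the claim.

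If one prefers not to lean on the parameter inequality built into Definition~\ref{t-design}, the same conclusion can be reached slightly more explicitly: the defining property of a $(c+\delta)$-design is that every $(c+\delta)$-subset of $X$ is contained in exactly $\lambda$ blocks, while every block (hand) has size $a$. If $c+\delta > a$, then no $(c+\delta)$-subset can be contained in any block, so $\lambda = 0$ and $\B$ has no blocks, contradicting $\A \neq \emptyset$ (equivalently, contradicting the count in Theorem~\ref{block.containing.Y}). Hence $c+\delta \leq a$.

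There is essentially no obstacle here: the entire content of the corollary is already contained in Theorem~\ref{new design thm.gen}. The only point requiring a moment's care is the edge case in which $\D$ is vacuous or an announcement is empty, but this is excluded by the standing conventions on strategies, so Theorem~\ref{new design thm.gen} really does hand us an honest $(c+\delta)$-$(n,a,\lambda)$-design whose parameters must obey the design inequalities.
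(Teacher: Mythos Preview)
Your proof is correct and matches the paper's approach exactly: the paper states that the corollary is an immediate consequence of Theorem~\ref{new design thm.gen}, and you have simply spelled out why---namely, that the parameter constraint $k \geq t$ in Definition~\ref{t-design} forces $a \geq c+\delta$. Your alternative argument (ruling out $c+\delta > a$ directly from block size) is also fine and amounts to the same observation.
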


\begin{remark} If we have an equitable $(a,b,c)$-strategy $\D$ that is perfectly $\delta$-secure against Cathy, where $\delta = a-c$, then each announcement $\A \in \D$ is an $a$-design. In fact, since every $a$-subset of $X$ must appear a constant number of times in each $\A$, we see that each $\A$ is a \emph{trivial $a$-design}. In this case, we see Alice's strategy is not informative for Bob.
\end{remark}

Together, Theorem~\ref{t-design security} and Theorem~\ref{new design thm.gen} show a direct correspondence between $t$-designs and equitable announcement strategies that are perfectly $\delta$-secure for some $\delta$ satisfying $\delta \leq t-c$. We state this result in the following theorem for clarity.

\begin{theorem}\label{main} A $\gamma$-equitable $(a,b,c)$-strategy $\D$ on card deck $X$ that is perfectly $\delta$-secure against Cathy is equivalent to a set of $(c+\delta)$-designs with point set $X$ and block size $a$ having the property that every $a$-subset of $X$ occurs in precisely $\gamma$ of these designs.
\end{theorem}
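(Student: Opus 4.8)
The plan is to prove the two directions of the equivalence by assembling results already in hand: Theorem~\ref{new design thm.gen} supplies one direction and Theorem~\ref{t-design security} the other, with the only genuinely new bookkeeping being the translation of the $\gamma$-equitable condition into the ``every $a$-subset occurs in precisely $\gamma$ designs'' condition.

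\emph{Direction 1 (strategy $\Rightarrow$ set of designs).} Start with a $\gamma$-equitable $(a,b,c)$-strategy $\D$ that is perfectly $\delta$-secure against Cathy, where $\delta\ge 1$ as in Definition~\ref{security.gen.def}. By Theorem~\ref{new design thm.gen}, each announcement $\A\in\D$ is a $(c+\delta)$-$(n,a,\lambda_\A)$-design on point set $X$ with block size $a$; note the index $\lambda_\A$ may a priori depend on $\A$, which is harmless. It remains to record the covering multiplicity. By definition of a strategy, $\bigcup_{\A\in\D}\A=\binom{X}{a}$, so every $a$-subset is a block of at least one announcement. The $\gamma$-equitable hypothesis is exactly that $\lvert g(H_A)\rvert=\gamma$ for all $H_A\in\binom{X}{a}$; unwinding $g(H_A)=\{i:H_A\in\A_i\}$, this says each $a$-subset $H_A$ occurs as a block in precisely $\gamma$ of the announcements. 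Hence $\D$, viewed as its collection of block sets, is a set of $(c+\delta)$-designs with the asserted property.

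\emph{Direction 2 (set of designs $\Rightarrow$ strategy).} Conversely, let $(X,\B_1),\dots,(X,\B_m)$ each be a $(c+\delta)$-$(n,a,\lambda_i)$-design with block size $a$ such that every $a$-subset of $X$ lies in exactly $\gamma$ of them. Put $\A_i=\B_i$. The covering property gives $\bigcup_i \A_i=\binom{X}{a}$ and $\lvert g(H_A)\rvert=\gamma$ for every $H_A$ (in particular $\gamma\ge 1$, so each $g(H_A)$ is nonempty); equip each $g(H_A)$ with the uniform distribution to obtain a $\gamma$-equitable $(a,b,c)$-strategy $\D$. Now invoke Theorem~\ref{t-design security} with $t=c+\delta$: its hypothesis $c\le t-1$ reduces to $\delta\ge 1$, which holds, and its proof operates one announcement at a time, so the fact that the $\lambda_i$ need not agree across announcements causes no difficulty. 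We conclude $\D$ is perfectly $(t-c)$-secure, i.e.\ perfectly $\delta$-secure, against Cathy. Finally, under the identification of an announcement with its set of blocks, the two constructions above are mutually inverse, which is what ``equivalent'' asserts.

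\emph{Anticipated obstacle.} There is no hard step; the substance was already carried by Theorems~\ref{t-design security} and~\ref{new design thm.gen}. The two points requiring care are (i) confirming that Theorem~\ref{t-design security} still applies when the design index $\lambda$ varies from announcement to announcement — a one-line inspection of its proof, which fixes a single $\A$ throughout — and (ii) checking the boundary parameters: a $(c+\delta)$-design forces $a\ge c+\delta$ (consistent with the corollary following Theorem~\ref{new design thm.gen}), and the extreme case $\delta=a-c$ is subsumed but produces trivial $a$-designs and hence non-informative strategies, as already noted in the remark after that theorem.
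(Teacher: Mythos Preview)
Your proposal is correct and follows exactly the approach the paper intends: the paper presents this theorem without proof, introducing it with the sentence ``Together, Theorem~\ref{t-design security} and Theorem~\ref{new design thm.gen} show a direct correspondence \ldots\ We state this result in the following theorem for clarity.'' You have simply written out the bookkeeping the paper leaves implicit, including the translation between $\gamma$-equitability and the multiplicity-$\gamma$ covering condition, and your check that Theorem~\ref{t-design security} applies announcement-by-announcement (so varying $\lambda_i$ is harmless) is a nice touch.
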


\section{Simultaneously informative and secure strategies}
\label{sec: Simultaneously Informative and Secure Strategies}

In general, we want to find an $(a,b,c)$-strategy (for Alice) 
that is simultaneously informative for
Bob and (perfectly or weakly) $\delta$-secure against Cathy. We first consider informative strategies that provide security for individual cards and then consider informative strategies that provide security for multiple cards.

The following was first shown by Albert et al.~\cite{AAADH05}:

\begin{theorem}{\cite{AAADH05}}
If $a \leq c+1$, then there does not exist a strategy for Alice that is simultaneously informative for Bob and weakly $1$-secure against Cathy.
\end{theorem}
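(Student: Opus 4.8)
The plan is to argue by contradiction, using the characterization of informativeness (Theorem~\ref{infor.thm}) together with the easy consequence of weak $1$-security recorded in Lemma~\ref{single element lemma}. Since $a \leq c+1$ splits into the two cases $a \leq c$ and $a = c+1$, and the case $a \leq c$ is already handled by Corollary~\ref{NC-cor} (no informative strategy exists at all), the real content is the boundary case $a = c+1$.

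So assume $a = c+1$ and that Alice has a strategy $\D$ that is simultaneously informative for Bob and weakly $1$-secure against Cathy. Fix any announcement $\A \in \D$. First I would invoke Theorem~\ref{infor.thm}: informativeness of $\A$ means no two distinct hands $H_A, H_A' \in \A$ satisfy $|H_A \cap H_A'| \geq a - c = 1$; that is, any two distinct hands in $\A$ are \emph{disjoint}. Consequently the hands of $\A$ form a partial partition of the $n$-element deck $X$ into $a$-subsets, so $|\A| \leq n/a$, and more importantly each point $x \in X$ lies in at most one hand of $\A$. Next I would apply Lemma~\ref{single element lemma}: weak $1$-security forces $\mathcal{P}(\{x\},\A) \neq \emptyset$ for every $x \in X$, i.e., for every card $x$ there is a hand in $\A$ avoiding $x$.

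Now the obstruction: pick any hand $H_A \in \A$ and any card $x \in H_A$. By the disjointness just established, $H_A$ is the \emph{only} hand of $\A$ containing $x$, so every other hand of $\A$ avoids $x$ — but I want to contradict weak $1$-security, which requires $0 < \Pr[x \in H_A \mid \A, H_C] < 1$ for a suitable possible hand $H_C \subseteq X \setminus \{x\}$ with $|H_C| = c$. Take $H_C$ to be any $c$-subset of $X \setminus H_A$; such a set exists because $|X \setminus H_A| = n - a = b + c \geq c$. Then $\mathcal{P}(H_C, \A)$ contains $H_A$ (as $H_A \cap H_C = \emptyset$), so $H_C$ is a possible hand for Cathy. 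But any hand $H_A' \in \mathcal{P}(H_C,\A)$ is either $H_A$ or is disjoint from $H_A$, hence avoids $x$; since $H_A$ itself contains $x$, we get that $H_A$ is the unique hand of $\mathcal{P}(H_C,\A)$ containing $x$. I still need to rule out that $H_A$ is also the \emph{only} hand in $\mathcal{P}(H_C,\A)$, which would make $\Pr[x \in H_A \mid \A, H_C] = 1$ and hence already violate weak $1$-security (good), or conversely that some other hand is present making the probability strictly between $0$ and $1$ — in which case I instead look at a card $y$ lying outside every hand of $\A$, or rather I choose $x$ and $H_C$ more carefully so that the conditional probability is forced to be exactly $0$ or exactly $1$.

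The cleanest way to finish, and the step I expect to require the most care, is the counting at the boundary $a = c+1$: from Cathy's viewpoint after announcement $\A$ and hand $H_C$, Alice's hand is one of the pairwise-disjoint $a$-sets in $\mathcal{P}(H_C,\A)$, and these use up at most $a \cdot |\mathcal{P}(H_C,\A)|$ cards out of the $a + b = n - c$ cards not held by Cathy. One shows that because the hands are disjoint, any fixed card $x \notin H_C$ lies in at most one of them; if it lies in exactly one, say $H_A$, then by Lemma~\ref{equitable.lem} $\Pr[x \in H_A \mid \A,H_C] = 1/|\mathcal{P}(H_C,\A)|$, and if it lies in none then this probability is $0$ — and weak $1$-security forbids the value $0$ outright (this is exactly Lemma~\ref{single element lemma} applied inside the conditioned strategy). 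So it suffices to exhibit a possible $H_C$ and a card $x \notin H_C$ belonging to no hand of $\mathcal{P}(H_C,\A)$; a simple cardinality count — the disjoint hands of $\mathcal{P}(H_C,\A)$ cover at most $a|\mathcal{P}(H_C,\A)| \leq a \cdot \lvert\A\rvert \le a \cdot (n/a) = n$ cards, but after removing $H_C$ and at least one full hand one checks there is slack — produces such an $x$, completing the contradiction. The main obstacle is organizing this cardinality argument so it genuinely forces an uncovered card outside $H_C$; everything else is a direct appeal to Theorem~\ref{infor.thm}, Lemma~\ref{single element lemma}, and Lemma~\ref{equitable.lem}.
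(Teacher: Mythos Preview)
The paper does not actually prove this theorem; it is quoted from \cite{AAADH05} without proof, so there is no in-paper argument to compare against. I can therefore only assess your attempt on its own merits.

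Your overall plan is correct: reduce to $a = c+1$ via Corollary~\ref{NC-cor}, use Theorem~\ref{infor.thm} to deduce that the hands in any announcement $\A$ are pairwise disjoint (since $a-c=1$), and then exhibit a possible hand $H_C$ for Cathy together with a card $x \notin H_C$ lying in \emph{no} hand of $\mathcal{P}(H_C,\A)$, forcing $\Pr[x \in H_A \mid \A, H_C] = 0$. However, the execution has a real gap. Your final counting step (``the disjoint hands cover at most $n$ cards, but after removing $H_C$ and at least one full hand one checks there is slack'') is not an argument: the bound $a\lvert\A\rvert \le n$ is tight precisely when the hands of $\A$ partition $X$, and you never specify how to choose $H_C$ so that one hand is excluded from $\mathcal{P}(H_C,\A)$ while another survives. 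You yourself flag this as ``the main obstacle,'' which means the proof is not complete as written. A side issue: your appeal to Lemma~\ref{equitable.lem} midway through is illegitimate, since that lemma assumes an equitable strategy and the theorem does not; fortunately that branch was a dead end anyway.

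The missing step is short once you see it. By Lemma~\ref{single element lemma}, $\A$ cannot be a singleton, so choose two distinct (hence disjoint) hands $H_A, H_A' \in \A$ and any $x \in H_A$. Set $H_C = H_A \setminus \{x\}$, which has exactly $a-1 = c$ elements. Since the hands of $\A$ are pairwise disjoint, $H_C$ meets $H_A$ but is disjoint from every other hand; thus $\mathcal{P}(H_C,\A) = \A \setminus \{H_A\} \ni H_A'$ is nonempty, so $H_C$ is a possible hand for Cathy. Now $x \notin H_C$, and the only hand of $\A$ containing $x$ is $H_A$, which has been removed; hence no hand of $\mathcal{P}(H_C,\A)$ contains $x$, giving $\Pr[x \in H_A \mid \A, H_C] = 0$ and the desired contradiction with weak $1$-security.
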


It is worth observing that a strategy that is not informative for Cathy implies,  for any announcement $\A$ by Alice and possible hand $H_C \in \binom{X}{c}$ such that $\mathcal{P}\left(H_C, \A \right) \neq \emptyset$ , that $\left \lvert \mathcal{P}\left(H_C, \A \right)\right \rvert \geq 2$. That is, there must exist distinct $H_A, H_A^\prime \in \mathcal{P}\left(H_C,\A \right)$.  Following the same technique as in the proof of Lemma~\ref{infor.thm}, this implies $\left \lvert H_A \cap H_A^\prime \right \rvert  \geq a-b$. If in addition the strategy is informative for Bob, by Lemma~\ref{infor.thm} we have $a - c >\left \lvert H_A \cap H_A^\prime \right \rvert  \geq a -b$, so $c < b$. This gives us the following result (which is also discussed by Albert et al.~\cite{AAADH05}):

\begin{theorem}If $c \geq b$, then there does not exist a strategy for Alice that is simultaneously informative for Bob and weakly $1$-secure against Cathy.
\end{theorem}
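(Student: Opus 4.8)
The statement to prove is: if $c \geq b$, then there is no strategy for Alice that is simultaneously informative for Bob and weakly $1$-secure against Cathy. The plan is to mirror exactly the argument sketched in the paragraph immediately preceding the statement, turning it into a clean proof by contradiction. Suppose such a strategy $\D$ exists. Fix any announcement $\A \in \D$ and any possible hand $H_C \in \binom{X}{c}$ for Cathy, i.e., one with $\mathcal{P}\left(H_C,\A\right) \neq \emptyset$; such a hand exists because some hand of $\A$ must avoid some $c$-set (indeed, since $a < n$, no single announcement can force every card into Alice's hand—or more carefully, one can simply pick $H_A \in \A$ and take $H_C$ to be any $c$-subset of $X \setminus H_A$, which is nonempty since $b + c = n - a \geq c$).

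The first key step is to observe that weak $1$-security forces $\left\lvert \mathcal{P}\left(H_C,\A\right)\right\rvert \geq 2$: if there were a unique possible hand $H_A$ for Alice, then every card in $H_A \setminus H_C$ would be held by Alice with probability $1$ (and there is at least one such card since $a \geq 1$ and $|H_C| = c < a$ follows from informativeness via Corollary~\ref{NC-cor}), contradicting the strict inequality $\Pr\left[Y \subseteq H_A \mid \A, H_C\right] < 1$ in the definition of weak $1$-security. Hence there exist two distinct hands $H_A, H_A^\prime \in \mathcal{P}\left(H_C,\A\right)$.

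The second key step bounds $\left\lvert H_A \cap H_A^\prime\right\rvert$ from below. Since both $H_A$ and $H_A^\prime$ are disjoint from $H_C$, they are both contained in $X \setminus H_C$, a set of size $a + b$. Two $a$-subsets of an $(a+b)$-set must intersect in at least $a - b$ points (by inclusion–exclusion: $\left\lvert H_A \cap H_A^\prime\right\rvert \geq 2a - (a+b) = a - b$). The third and final step invokes Theorem~\ref{infor.thm}: since $\D$ is informative for Bob, there are no two distinct sets in $\A$ with intersection of size at least $a - c$, so in particular $\left\lvert H_A \cap H_A^\prime\right\rvert \leq a - c - 1 < a - c$. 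Combining, we get $a - b \leq \left\lvert H_A \cap H_A^\prime\right\rvert < a - c$, hence $a - b < a - c$, i.e., $c < b$, contradicting the hypothesis $c \geq b$.

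I do not anticipate a serious obstacle here; the argument is short and essentially packaged already in the surrounding text. The only point requiring a little care is the very first step—establishing that a possible hand $H_C$ for Cathy exists and that $\left\lvert \mathcal{P}\left(H_C,\A\right)\right\rvert \geq 2$—and making sure the card $x \in H_A \setminus H_C$ used to derive the probability-$1$ contradiction genuinely exists, which follows from $a > c$ (Corollary~\ref{NC-cor}). Everything else is the elementary intersection bound for two large subsets of a common ground set together with a direct application of Theorem~\ref{infor.thm}.
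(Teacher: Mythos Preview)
Your proposal is correct and follows essentially the same route as the paper's own argument: weak $1$-security gives two distinct Alice-hands disjoint from $H_C$, the inclusion--exclusion bound on two $a$-subsets of the $(a+b)$-set $X\setminus H_C$ yields $\lvert H_A\cap H_A'\rvert\ge a-b$, and Theorem~\ref{infor.thm} forces $\lvert H_A\cap H_A'\rvert<a-c$, whence $c<b$. The only cosmetic difference is that the paper phrases the first step as ``the strategy is not informative for Cathy,'' whereas you unpack directly why weak $1$-security forces $\lvert\mathcal{P}(H_C,\A)\rvert\ge 2$.
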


We now focus on $(3,n-4,1)$-deals and examine the relationship between informative and perfectly 1-secure strategies and Steiner triple systems. 

The following is an immediate consequence of Theorem~\ref{new 2-design thm} and Lemma~\ref{new.info.lemma}.
\begin{corollary}\label{Steiner}Suppose $(a,b,c) = (3,n-4,1)$ and suppose that Alice's strategy is equitable, informative for Bob, and perfectly 1-secure against Cathy. Then every announcement is a Steiner triple system.
\end{corollary}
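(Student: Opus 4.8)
The plan is to assemble Corollary~\ref{Steiner} directly from two previously established facts, with essentially no new calculation. First I would invoke Theorem~\ref{new 2-design thm}: since Alice's strategy is equitable and perfectly $1$-secure against Cathy on an $(a,b,1)$-deal, every announcement $\A \in \D$ is a $2$-$(n,a,\lambda)$-design for some $\lambda$. Here $a=3$, so each $\A$ is a $2$-$(n,3,\lambda)$-design, i.e.\ a design with block size $3$. The only thing still needed is to pin down $\lambda = 1$, which will promote "$2$-$(n,3,\lambda)$-design" to "Steiner triple system" by the definition of $\STS(v)$ as an $\SkS(2,3,v)$.

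To get $\lambda = 1$, I would appeal to Lemma~\ref{new.info.lemma} with $t = 2$. Note that with $(a,b,c) = (3,n-4,1)$ we have $a - c = 2$, and $a > c$ holds, so the hypothesis $t \geq a-c$ of Lemma~\ref{new.info.lemma} is met with equality for the $2$-design structure just obtained. Since $\D$ is assumed informative for Bob and each $\A$ is a $2$-$(n,a,\lambda)$-design with $t = 2 \geq a-c$, Lemma~\ref{new.info.lemma} forces $t = a-c = 2$ (already known) and $\lambda = 1$ for every $\A \in \D$. Hence each announcement is a $2$-$(n,3,1)$-design, which is precisely a Steiner triple system of order $n$.

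There is essentially no obstacle here — the corollary is a straightforward conjunction of Theorem~\ref{new 2-design thm} (which supplies the $2$-design structure from equitability plus perfect $1$-security) and Lemma~\ref{new.info.lemma} (which supplies $\lambda = 1$ from informativeness). The only point requiring a moment's care is checking that the numerical hypotheses line up: that $a - c = 2$ so that the "$t \geq a-c$" clause of Lemma~\ref{new.info.lemma} applies to the $t=2$ design we have in hand, and that $a > c$ so both cited results are applicable. One might optionally remark that $n \equiv 1,3 \pmod 6$ and $n \geq 7$ then follow from the known existence spectrum of $\STS(v)$, but this is not needed for the statement as written.
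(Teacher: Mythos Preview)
Your proposal is correct and matches the paper's own argument exactly: the paper states that Corollary~\ref{Steiner} ``is an immediate consequence of Theorem~\ref{new 2-design thm} and Lemma~\ref{new.info.lemma},'' which is precisely the two-step combination you outline. The numerical checks you flag (that $a>c$ and $t=2=a-c$) are indeed all that is required.
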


In fact, any $(a, b, a-2)$-strategy that is informative, equitable, and perfectly 1-secure also satisfies $c=1$ (and hence $a=3$). This result was first shown in Swanson and Stinson~\cite{cardsSwansonS12}, but the proof provided here is greatly simplified.

\begin{theorem}
\label{main.thm}
Consider an $(a,b,c)$-deal such that $a-c=2$. Suppose that Alice's strategy is equitable, informative for Bob, and perfectly 1-secure against Cathy. Then $a=3$ and $c=1$.
\end{theorem}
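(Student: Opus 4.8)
Suppose for contradiction that $a \geq 4$ (so $c = a-2 \geq 2$). By Theorem~\ref{new design thm.gen}, since the strategy is equitable and perfectly $1$-secure with parameter $c$, every announcement $\A \in \D$ is a $(c+1)$-$(n,a,\lambda)$-design for some $\lambda$. Since $c+1 = a-1$, each $\A$ is an $(a-1)$-$(n,a,\lambda)$-design with block size $a$. The strategy is also informative for Bob; since the design parameter $t = a-1$ satisfies $t = a-1 \geq a-c = 2$, i.e. $t \geq a-c$, Lemma~\ref{new.info.lemma} applies and forces $t = a-c$ and $\lambda = 1$. But $t = a-1$ and $a-c = 2$, so $a-1 = 2$, giving $a = 3$, contradicting $a \geq 4$. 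Hence $a = 3$ and $c = a-2 = 1$.

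So the whole argument is: combine Theorem~\ref{new design thm.gen} (perfect $1$-security $\Rightarrow$ every announcement is an $(a-1)$-design) with Lemma~\ref{new.info.lemma} (an informative strategy whose announcements are $t$-designs with $t \geq a-c$ must have $t = a-c$ and $\lambda = 1$), and observe that the two forced values of $t$, namely $a-1$ and $a-c = 2$, are compatible only when $a = 3$. I would present it essentially in that order: first invoke Theorem~\ref{new design thm.gen} to pin down that each announcement is an $(a-1)$-design, then check the hypothesis $t \geq a-c$ of Lemma~\ref{new.info.lemma} holds (indeed $a-1 \geq 2$ since $a \geq 3$ by Corollary~\ref{NC-cor}), then conclude $t = a-c = 2$, hence $a-1 = 2$ and $a = 3$, $c = 1$.

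The only point requiring a little care — and the closest thing to an obstacle — is the edge case: we must make sure Theorem~\ref{new design thm.gen} is actually applicable, i.e. that $\delta = 1 \geq 1$ (trivially true) and that its conclusion does not degenerate. When $a = 3$, $c = 1$, the announcements are $2$-designs, which is consistent (these are the Steiner triple systems of Corollary~\ref{Steiner}), so there is no contradiction in that case and the theorem only rules out $a \geq 4$. One should also note Corollary~\ref{NC-cor} already gives $a > c$, hence $a \geq c+1 = a-1+$ ... in fact $a-c = 2 > 0$ is built into the hypothesis, so $a \geq 3$ automatically and $t = a-1 \geq 2$, which is exactly the regime $t \geq a-c$ needed for Lemma~\ref{new.info.lemma}. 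With those checks in place the proof is a two-line deduction, dramatically shorter than the original argument in~\cite{cardsSwansonS12} as advertised.
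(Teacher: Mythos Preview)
Your proposal is correct and follows essentially the same approach as the paper: invoke Theorem~\ref{new design thm.gen} to obtain that every announcement is an $(a-1)$-design, then apply Lemma~\ref{new.info.lemma} (whose hypothesis $t \geq a-c$ holds since $c \geq 1$) to force $a-1 = a-c$, giving $c=1$ and hence $a=3$. The only cosmetic difference is that you frame it as a proof by contradiction assuming $a \geq 4$, whereas the paper proceeds directly; the underlying logic and cited results are identical.
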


\begin{proof}Theorem~\ref{new design thm.gen} implies that every announcement is an $(a-1)$-design. Since $c \geq 1$, we have $a-1 \geq a-c$, so we may apply Lemma~\ref{new.info.lemma}. This implies $a-1 = a-c$, so we have $c=1$, as desired.
\end{proof}

Our proof technique works for the generalizations of Theorem~\ref{main.thm} and Corollary~\ref{Steiner} shown in Swanson and Stinson~\cite{cardsSwansonS12} as well. That is, strategies that are equitable, informative for Bob, and perfectly $(a-c-1)$-secure against Cathy  must satisfy $c=1$ and each announcement must be an $(a-1)$-$(n,a,1)$-design, also known as a Steiner system $\SkS(a-1,a,n)$.

\begin{theorem}
\label{main.thm1}
Consider an $(a,b,c)$-deal. Suppose that Alice's strategy is equitable, informative for Bob, and perfectly $(a-c-1)$-secure against Cathy. Then $c=1$.
\end{theorem}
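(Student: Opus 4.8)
The plan is to run exactly the argument used for Theorem~\ref{main.thm}, now fed by the full combinatorial characterization of perfectly $\delta$-secure equitable strategies rather than its $\delta=1$ instance. Concretely, I would first apply Theorem~\ref{new design thm.gen} with $\delta = a-c-1$: since $\D$ is equitable and perfectly $(a-c-1)$-secure against Cathy, every announcement $\A \in \D$ is a $(c+\delta)$-$(n,a,\lambda)$-design for some $\lambda$. Because $c+\delta = c + (a-c-1) = a-1$, this says each $\A$ is an $(a-1)$-$(n,a,\lambda)$-design. Here one uses the standing assumption that $\delta = a-c-1$ is a legitimate security parameter, i.e.\ $\delta \geq 1$, equivalently $a \geq c+2$; the complementary range $a \leq c+1$ is vacuous, since an informative strategy forces $a > c$ by Corollary~\ref{NC-cor}, and an informative, weakly $1$-secure strategy forces $a \geq c+2$ by the nonexistence result of Albert et al.\ quoted earlier.

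Next I would observe that, since $c \geq 1$, we have $t := a-1 \geq a-c$, so the hypotheses of Lemma~\ref{new.info.lemma} hold: $\D$ is informative for Bob and each announcement is a $t$-$(n,a,\lambda)$-design with $t \geq a-c$. Lemma~\ref{new.info.lemma} then forces $t = a-c$ and $\lambda = 1$. From $a-1 = t = a-c$ we conclude $c = 1$, which is the claim. As a byproduct, $\lambda = 1$ and $t = a-1$ show that each announcement is an $(a-1)$-$(n,a,1)$-design, that is, a Steiner system $\SkS(a-1,a,n)$, which recovers the stronger statement mentioned in the surrounding text.

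I do not expect a genuine obstacle here: the substantive work is already encapsulated in Theorem~\ref{new design thm.gen}, and what remains is the same two-line deduction as in Theorem~\ref{main.thm}, applied with $t = a-1$ in place of $t = a-1$ coming from the $a-c=2$ hypothesis there. The only point demanding a little care is the bookkeeping around the boundary value $\delta = a-c-1$ — checking that it is a valid security parameter and disposing of the degenerate small-$a$ range — but this is handled cleanly by invoking the earlier nonexistence theorems rather than by any new argument.
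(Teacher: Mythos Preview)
Your proposal is correct and follows exactly the paper's approach: the paper simply states that the proof is identical to that of Theorem~\ref{main.thm}, and what you have written is precisely that argument spelled out --- apply Theorem~\ref{new design thm.gen} with $\delta = a-c-1$ to get that each announcement is an $(a-1)$-design, then invoke Lemma~\ref{new.info.lemma} (using $c\geq 1$ so that $a-1\geq a-c$) to conclude $a-1=a-c$ and hence $c=1$. Your additional remarks about the boundary case $\delta \geq 1$ and the Steiner-system byproduct are fine elaborations but not needed for the bare statement.
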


\begin{proof} The proof is identical to the proof of Theorem~\ref{main.thm}.
\end{proof}

\begin{corollary}
\label{Steiner.gen} Let $n = a + b + 1$. Consider an equitable $(a,b,1)$-strategy that is informative for Bob and perfectly $(a-2)$-secure against Cathy. Then every announcement is a Steiner system $\SkS(a-1,a,n)$.
\end{corollary}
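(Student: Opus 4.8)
The plan is to combine the combinatorial characterization of perfectly $\delta$-secure equitable strategies (Theorem~\ref{new design thm.gen}) with the constraint that being informative for Bob imposes on the parameters of the underlying designs (Lemma~\ref{new.info.lemma}). First I would note that since the strategy is an $(a,b,1)$-strategy we already have $c=1$, so there is nothing to prove about the value of $c$: the content of Theorem~\ref{main.thm1} is automatic in this setting, and the corollary is really just about pinning down $\lambda$. Applying Theorem~\ref{new design thm.gen} with $c=1$ and $\delta = a-2$, every announcement $\A \in \D$ is a $(c+\delta)$-$(n,a,\lambda)$-design, i.e., an $(a-1)$-$(n,a,\lambda)$-design, for some $\lambda$ (a priori possibly depending on $\A$).

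Next I would invoke Lemma~\ref{new.info.lemma}. Since the strategy is informative for Bob, Corollary~\ref{NC-cor} gives $a > c$, so the hypothesis $a > c$ of the lemma holds; and the design parameter $t = a-1$ satisfies $t = a-1 = a-c$, so the hypothesis $t \geq a-c$ also holds (with equality). Lemma~\ref{new.info.lemma} then forces $t = a-c$ (already true) and, crucially, $\lambda = 1$ for every announcement. Hence each $\A \in \D$ is an $(a-1)$-$(n,a,1)$-design, which is precisely a Steiner system $\SkS(a-1,a,n)$ by definition.

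There is essentially no obstacle here: the corollary falls out of machinery already established, and the only point to check with care is that the parameter inequality needed to apply Lemma~\ref{new.info.lemma} holds in the boundary case $t = a-1$ versus the threshold $a-c = a-1$ --- which is exactly the case the lemma is designed to handle. One small subtlety worth flagging is that Theorem~\ref{new design thm.gen} only yields a value $\lambda$ per announcement, so the conclusion should be phrased as ``every announcement is a Steiner system $\SkS(a-1,a,n)$'' rather than as a statement about $\D$ as a single design; no uniformity of the block multiset across announcements is claimed or needed.
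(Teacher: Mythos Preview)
Your proposal is correct and follows essentially the same approach as the paper: invoke Theorem~\ref{new design thm.gen} (with $c=1$, $\delta=a-2$) to get that each announcement is an $(a-1)$-$(n,a,\lambda)$-design, then apply Lemma~\ref{new.info.lemma} (or argue directly from Theorem~\ref{infor.thm}) to force $\lambda=1$. Your additional care in checking the hypothesis $t \geq a-c$ and noting that $\lambda$ is a priori announcement-dependent is good bookkeeping but does not diverge from the paper's argument.
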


\begin{proof} The fact that every announcement is an $(a-1)$-design follows immediately from Theorem~\ref{new design thm.gen}. To see that $\lambda =1$, we may apply  Lemma~\ref{new.info.lemma}. This is easy to see, however: since every $(a-1)$-subset occurs $\lambda$ times, the fact that the strategy is informative for Bob implies $\lambda =1$.
\end{proof}

In fact, we can use Theorem~\ref{new design thm.gen} and Lemma~\ref{new.info.lemma} to derive the following bound on the security parameter $\delta$ for perfectly $\delta$-secure and informative strategies, which helps put the above results in context.

\begin{corollary}  Suppose we have an equitable $(a,b,c)$-strategy that is perfectly $\delta$-secure against Cathy and informative for Bob. Then $\delta \leq a-2c$.
\end{corollary}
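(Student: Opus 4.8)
The plan is to read off the result by chaining the structural theorem with the informativeness obstruction, so that essentially no new computation is required. First I would invoke Corollary~\ref{NC-cor}: since Alice's strategy is informative for Bob, we automatically have $a > c$, which is the standing hypothesis needed both for Lemma~\ref{new.info.lemma} and (implicitly) for Theorem~\ref{new design thm.gen}. Next, since $\D$ is equitable and perfectly $\delta$-secure against Cathy, Theorem~\ref{new design thm.gen} tells us that every announcement $\A \in \D$ is a $(c+\delta)$-$(n,a,\lambda)$-design for some $\lambda$.

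Now I would argue by contradiction. Suppose $\delta > a - 2c$, i.e., $\delta \geq a - 2c + 1$. Setting $t := c + \delta$, this gives $t \geq a - c + 1 > a - c$; in particular $t \geq a - c$, which is exactly the hypothesis of Lemma~\ref{new.info.lemma}. Since $\D$ is informative for Bob and each announcement is a $t$-$(n,a,\lambda)$-design with $t \geq a-c$, Lemma~\ref{new.info.lemma} forces $t = a - c$, contradicting $t > a - c$. Hence $\delta \leq a - 2c$, as claimed.

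The only point to check with any care is that the hypotheses of Lemma~\ref{new.info.lemma} genuinely hold in the offending regime — namely that $a > c$ (handled by Corollary~\ref{NC-cor}) and that the design parameter $t = c+\delta$ really is $\geq a-c$ there. I expect no real obstacle: the combinatorial content was already absorbed into Theorem~\ref{new design thm.gen} (every perfectly $\delta$-secure equitable announcement is a $(c+\delta)$-design) and into Lemma~\ref{new.info.lemma} (an informative strategy cannot be built from $t$-designs with $t > a-c$, by the block-count formula of Theorem~\ref{block.containing.Y}), and this corollary merely substitutes $t = c+\delta$ into the latter. One could additionally remark that the extreme case $\delta = a-c$ — which would make each $\A$ a trivial $a$-design and hence not informative — is consistent with the bound, since $a-c > a-2c$ whenever $c \geq 1$.
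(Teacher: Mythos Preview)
Your proposal is correct and takes essentially the same approach as the paper: both invoke Theorem~\ref{new design thm.gen} to get that each announcement is a $(c+\delta)$-design and then feed $t=c+\delta$ into Lemma~\ref{new.info.lemma}. The only cosmetic difference is that the paper presents it as a two-case split ($c+\delta < a-c$ versus $c+\delta \geq a-c$) rather than a proof by contradiction, and does not bother to cite Corollary~\ref{NC-cor} explicitly for the $a>c$ hypothesis; your version is, if anything, slightly more careful on that point.
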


\begin{proof}
If the strategy is perfectly $\delta$-secure, then by Theorem~\ref{new design thm.gen}, every announcement is a $(c+\delta)$-design. Now, if $c + \delta < a-c$ holds, then $\delta < a-2c$, as desired. If $c + \delta \geq a-c$, then since the strategy is informative for Bob, we can apply Lemma~\ref{new.info.lemma}. This yields $c+ \delta = a-c$, so we have $\delta = a-2c$ in this case.
\end{proof}

\section{Construction methods and examples}
\label{subsec: construction methods and examples}
Theorem~\ref{t-design security} indicates that we can use $t$-designs to construct equitable strategies that are perfectly $\delta$-secure against Cathy for $\delta = t-c$, where $c \leq t-1$. In fact, so long as we use $t$-designs with $\lambda = 1$ and $c \leq a-t$, such a strategy will also be informative for Bob (Corollary~\ref{info.corollary}). This is a very interesting result, as we can use a single ``starting design'' to obtain equitable strategies that are informative for Bob and perfectly $\delta$-secure against Cathy. We give a general method for this next. First we require some definitions.  

\begin{definition}
Suppose that $\mathcal{D} = (X,\B)$ is a $t$-$(v,k,\lambda)$-design.
An \emph{automorphism} of $\mathcal{D}$ is a permutation $\pi$ of $X$ such that
$\pi$ fixes the multiset $\B$. We denote the collection of all automorphisms of $\mathcal{D}$
by $\mathsf{Aut}(\mathcal{D})$.
\end{definition}

\begin{remark}
It is easy to see that $\mathsf{Aut}(\mathcal{D})$
is a subgroup of the symmetric group $S_{\left \lvert X\right \rvert }$.
\end{remark}

\begin{theorem}\label{aut thm gen} Suppose $\mathcal{D} = (X, \mathcal{B})$ is a $t$-$(n,a,1)$-design. Then there exists a $\gamma$-equitable $(a,n-a-c,c)$-strategy with $m$ announcements that is informative for Bob and perfectly $(t-c)$-secure against Cathy for any choice of $c$ such that $c \leq \min \{t-1,a-t\}$, where $m = n! / \! \left \lvert \mathsf{Aut}(\mathcal{D}) \right \rvert $ and $\gamma = \left. m \middle / \!  \binom{n-t}{a-t} \right.$.
\end{theorem}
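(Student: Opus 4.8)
The plan is to take the orbit of the single starting design $\mathcal{D} = (X,\mathcal{B})$ under the natural action of the symmetric group $S_n$ on $X$, and show that this orbit, viewed as a collection of announcements, is an equitable strategy with the desired informativeness and security. First I would set up the group action: for $\pi \in S_n$, let $\pi(\mathcal{D}) = (X, \pi(\mathcal{B}))$ be the image design, which is again a $t$-$(n,a,1)$-design since applying a permutation to all points preserves the defining counting property. Define the announcement set $\D$ to be the set of distinct designs in the orbit $\{\pi(\mathcal{D}) : \pi \in S_n\}$; by the orbit–stabilizer theorem, the stabilizer of $\mathcal{D}$ is exactly $\mathsf{Aut}(\mathcal{D})$, so $|\D| = m = n!/|\mathsf{Aut}(\mathcal{D})|$. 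Each announcement is the block set of a $t$-$(n,a,1)$-design, hence by Corollary~\ref{info.corollary} (using $t \leq a-c$, which follows from $c \leq a-t$) the strategy is informative for Bob, and by Theorem~\ref{t-design security} (using $c \leq t-1$) it is perfectly $(t-c)$-secure against Cathy.

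The two remaining things to verify are that $\D$ is a genuine \emph{strategy} (i.e. $\bigcup_{\A \in \D} \A = \binom{X}{a}$) and that it is $\gamma$-equitable with the stated $\gamma$. For the covering property: given any $a$-subset $H_A$ of $X$, pick any block $B \in \mathcal{B}$; since $|B| = |H_A| = a$, there is a permutation $\pi$ with $\pi(B) = H_A$, so $H_A \in \pi(\mathcal{B}) \in \D$. For equitability, I would count, for a fixed $a$-subset $H_A$, the number $\gamma = |g(H_A)|$ of announcements containing $H_A$, i.e. the number of designs $\pi(\mathcal{D})$ with $H_A \in \pi(\mathcal{B})$. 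The key step is a double-counting argument over the set of pairs $(\A, H_A)$ with $\A \in \D$ and $H_A \in \A$: on one hand this equals $\sum_{\A \in \D} |\A| = m \cdot b^*$, where $b^* = \lambda_0 = \binom{n-t}{a-t}$ wait — more precisely $b^*$ is the number of blocks in a $t$-$(n,a,1)$-design, which by Theorem~\ref{block.containing.Y} with $s=0$ is $b^* = \binom{n}{t}/\binom{a}{t}$; on the other hand, by the symmetry of the $S_n$-action, every $a$-subset lies in the same number $\gamma$ of announcements, so the count also equals $\gamma \binom{n}{a}$. Solving, $\gamma = m b^* / \binom{n}{a} = m\binom{n}{t} / \left(\binom{a}{t}\binom{n}{a}\right)$, and a routine binomial identity rewrites $\binom{n}{t}/\left(\binom{a}{t}\binom{n}{a}\right) = 1/\binom{n-t}{a-t}$, giving $\gamma = m/\binom{n-t}{a-t}$ as claimed.

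The uniformity of the probability distribution $p_{H_A}$ on $g(H_A)$ is automatic once we declare the strategy to assign uniform weights; what needs justification is that $|g(H_A)| = \gamma$ is independent of $H_A$, and that is exactly the symmetry argument above: $S_n$ acts transitively on $\binom{X}{a}$ and permutes the announcement set $\D$ among itself (since $\sigma(\pi(\mathcal{D})) = (\sigma\pi)(\mathcal{D})$), so the map $H_A \mapsto g(H_A)$ is $S_n$-equivariant and all fibers have equal size. I expect the main obstacle to be bookkeeping rather than conceptual: one must be careful that distinct $\pi$ may yield the same design (handled by passing to the orbit and the stabilizer), that ``$\A$ is a $t$-$(n,a,1)$-design'' is preserved under relabeling points (immediate from the definition), and that the binomial manipulation collapsing $\binom{n}{t}/(\binom{a}{t}\binom{n}{a})$ to $1/\binom{n-t}{a-t}$ is done correctly — all of which are routine. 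No step poses a real difficulty, so the proof is essentially an orbit–stabilizer computation wrapped around Corollary~\ref{info.corollary} and Theorem~\ref{t-design security}.
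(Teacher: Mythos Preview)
Your proposal is correct and follows essentially the same approach as the paper: take the $S_n$-orbit of $\mathcal{D}$ as the announcement set, invoke orbit--stabilizer for $m$, and cite Theorem~\ref{t-design security} and the informativeness criterion for the security/informativeness claims. The only cosmetic difference is in computing $\gamma$: the paper fixes a $t$-subset $A$, notes that each of the $m$ designs contains exactly one of the $\binom{n-t}{a-t}$ $a$-sets through $A$, and uses symmetry to conclude each such $a$-set lies in $m/\binom{n-t}{a-t}$ designs, whereas you do an equivalent global double-count of incidences $(\A,H_A)$.
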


\begin{proof}
Let the symmetric group $S_n$ act on $\mathcal{D}$. We obtain a set of designs isomorphic to $\mathcal{D}$, which are the announcements in our strategy. Since each announcement is a $t$-$(n,a,1)$-design, the resulting scheme is perfectly $(t-c)$-secure against Cathy by Theorem~\ref{t-design security}. Furthermore, since $a-c \geq t$ and $\lambda =1$, no two blocks have more than $a-c -1$ points in common, so Theorem~\ref{infor.thm} implies the scheme is informative for Bob.

The total number of designs $m$ is equal to $n! / \! \left \lvert \mathsf{Aut}(\mathcal{D})\right \rvert $ (as this is the index of $\mathsf{Aut}(\mathcal{D})$ in $S_{n}$). To see that $\gamma = \left. m \middle / \!  \binom{n-t}{a-t} \right.$, consider a fixed $t$-subset $A$ of $X$. Then in particular, there are $\binom{n-t}{a-t}$ possible blocks of size $a$ that contain $A$. Now, every one of the $m$ designs contains exactly one of these $\binom{n-t}{a-t}$ blocks, and these $\binom{n-t}{a-t}$ blocks occur equally often among the $m$ designs. Thus, a given block $B$ occurs in $\left. m \middle / \!  \binom{n-t}{a-t} \right.$ of the designs, as desired.
\end{proof}

\begin{remark} Theorem~\ref{aut thm gen} is a generalization of a result in Swanson and Stinson~\cite{cardsSwansonS12}, in which the case $c = 1$ is treated.
\end{remark}

\begin{comment}
\begin{theorem}
\label{aut.thm}
Suppose that $\mathcal{D} = (X,\B)$ is a $t$-$(n,a,1)$-design with $t = a-1$.
Then there exists a $\gamma$-equitable $(a,n-a-1,1)$-strategy with $m$ announcements that is informative for
Bob and perfectly $1$-secure against Cathy, where $m = n! / \left \lvert \mathsf{Aut}(\mathcal{D}) \right \rvert $ and $\gamma = \frac{m}{n-t}$.
\end{theorem}

\begin{proof}
Let the symmetric group $S_n$ act on $\mathcal{D}$. We obtain a set of designs isomorphic to $\mathcal{D}$. Every one of these designs is a $2$-design because $a \geq 3$, so the resulting scheme is perfectly $1$-secure against Cathy by Lemma~\ref{2-design}. Every design is also an $(a-1)$-design with $\lambda = 1$, so Theorem~\ref{infor.thm} implies the scheme is informative for Bob.

The total number of designs $m$ is equal to $n! / \left \lvert \mathsf{Aut}(\mathcal{D})\right \rvert $ (as this is the index of $\mathsf{Aut}(\mathcal{D})$ in $S_{n}$). To see that $\gamma = \frac{m}{n-t}$, consider a fixed $t$-subset $A$ of $X$. Then in particular, there are $n-t$ possible blocks of size $a=t+1$ that contain $A$. Every one of the $m$ designs contains exactly one of these $n-t$ blocks, and these $n-t$ blocks occur equally often among the $m$ designs. Thus a given block $B$ occurs in $\frac{m}{n-t}$ of the designs, as desired.
\end{proof}
\end{comment}

\begin{remark}\label{starting design technique} The technique described in Theorem~\ref{aut thm gen} shows how to use a single ``starting design'' $\mathcal{D}$ on $n$ points to construct a strategy that inherits its properties from $\mathcal{D}$. That is, the strategy obtained by letting the symmetric group $S_n$ act on $\mathcal{D}$ will be informative and perfectly $\delta$-secure if $\mathcal{D}$ is an informative announcement that satisfies Condition~\ref{perfect delta sec. def} of Definition~\ref{security.gen.def} for the fixed announcement $\mathcal{D}$.
\end{remark}

We now discuss some other constructions of strategies using results from design theory, including some applications of Remark~\ref{starting design technique}. All constructions discussed may be found in Colbourn and Dinitz~\cite{CD}.

It is clear that we can use any Steiner triple system, or $2$-$(n,3,1)$-design, as a starting design to obtain an equitable $(3, n-4, 1)$-strategy that is informative for Bob and perfectly 1-secure against Cathy. It is known that an $\STS(n)$ exists if and only if $n \equiv 1,3 \bmod 6$, $n \geq 7$. We state this result in the following Corollary.

\begin{corollary} There exists an equitable $(3, n-4, 1)$-strategy for Alice that is informative for Bob and perfectly $1$-secure against Cathy for any integer $n$ such that $n \equiv 1,3 \bmod 6$, $n \geq 7$. 
\end{corollary}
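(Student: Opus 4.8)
The plan is to assemble this Corollary directly from two already-established facts, so the proof is essentially a one-line invocation. First I would recall Theorem~\ref{aut thm gen}: given a $t$-$(n,a,1)$-design $\mathcal{D}$, letting the symmetric group $S_n$ act on it produces an equitable $(a, n-a-c, c)$-strategy that is informative for Bob and perfectly $(t-c)$-secure against Cathy, valid for any $c \leq \min\{t-1, a-t\}$. Then I would instantiate this with the parameters of a Steiner triple system: $t = 2$, $a = 3$, $\lambda = 1$. Checking the constraint $c \leq \min\{t-1, a-t\} = \min\{1, 1\} = 1$ forces $c = 1$, which gives $\delta = t - c = 1$, and $b = n - a - c = n - 4$. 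So any $\STS(n)$ yields an equitable $(3, n-4, 1)$-strategy informative for Bob and perfectly $1$-secure against Cathy.

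Next I would combine this with the existence theorem for Steiner triple systems stated earlier in the excerpt: an $\STS(n)$ exists if and only if $n \equiv 1, 3 \bmod 6$ and $n \geq 7$. Restricting to those $n$ for which the starting design exists, we obtain the claimed family of strategies. That is the entire argument — nothing further is needed.

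The only thing resembling an obstacle is bookkeeping: making sure the side conditions of Theorem~\ref{aut thm gen} are genuinely met, in particular that $a > c$ (here $3 > 1$) so that informativeness is not vacuous, and that $a - c = 2 \geq t = 2$ so that the $\lambda = 1$ design gives blocks meeting in at most $a - c - 1 = 1$ point, as required by Theorem~\ref{infor.thm}. All of these hold with equality or strict inequality, so there is no real difficulty. I would write the proof as: ``Apply Theorem~\ref{aut thm gen} with $\mathcal{D}$ an $\STS(n)$, i.e., $t = 2$, $a = 3$, $c = 1$; the hypothesis $c \leq \min\{t-1, a-t\}$ holds since $\min\{1,1\} = 1$, and $t - c = 1$, $n - a - c = n - 4$. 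An $\STS(n)$ exists precisely when $n \equiv 1, 3 \bmod 6$ and $n \geq 7$, which gives the result.''
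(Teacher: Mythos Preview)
Your proposal is correct and mirrors the paper's own reasoning: the paper states the Corollary as an immediate consequence of applying the starting-design construction of Theorem~\ref{aut thm gen} to a Steiner triple system $\STS(n)$, together with the known existence condition $n \equiv 1,3 \bmod 6$, $n \geq 7$. Your parameter check $c \leq \min\{t-1, a-t\} = 1$ is exactly the verification the paper leaves implicit.
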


Similarly, \emph{Steiner quadruple systems}, or $3$-$(n,4,1)$-designs, exist if an only if $n \equiv 2, 4 \bmod 6$, which yields the following result:

\begin{corollary} There exists an equitable $(4, n-5, 1)$-strategy for Alice that is informative for Bob and perfectly $1$-secure against Cathy for any integer $n$ such that $n \equiv 2,4 \bmod 6$. 
\end{corollary}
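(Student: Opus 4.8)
The plan is to obtain this as a direct application of the ``starting design'' construction of Theorem~\ref{aut thm gen}, using a Steiner quadruple system as the input design. First I would recall that a Steiner quadruple system of order $n$ is, by definition, a $3$-$(n,4,1)$-design, and that such a system exists if and only if $n \equiv 2,4 \bmod 6$ (the existence theorem quoted just above this corollary). So for every such $n$, fix a Steiner quadruple system $\mathcal{D} = (X,\mathcal{B})$ on $n$ points.

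Next I would verify the hypotheses of Theorem~\ref{aut thm gen} with the parameters $a = 4$, $t = 3$, and $c = 1$. The theorem requires $c \leq \min\{t-1,\,a-t\} = \min\{2,1\} = 1$, which holds (with equality). Applying it therefore yields a $\gamma$-equitable $(4, n-5, 1)$-strategy (with $m = n!/\lvert\mathsf{Aut}(\mathcal{D})\rvert$ announcements and $\gamma = m/\binom{n-3}{1}$) that is informative for Bob and perfectly $(t-c) = 2$-secure against Cathy. The informativeness is exactly the point where the inequality $c \leq a-t$ is used: since $\lambda = 1$ and $a - c = 3 \geq t$, no two blocks of an announcement share as many as $a-c = 3$ points, so Theorem~\ref{infor.thm} applies.

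Finally, I would note that perfect $2$-security implies perfect $1$-security: by Definition~\ref{security.gen.def}, a perfectly $2$-secure strategy satisfies the prescribed probability for every $\delta'$-subset of $X \setminus H_C$ with $1 \leq \delta' \leq 2$, in particular for $\delta' = 1$. Hence the strategy is a fortiori perfectly $1$-secure against Cathy, which establishes the corollary.

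There is no genuine obstacle; the only items needing care are quoting the correct existence range $n \equiv 2,4 \bmod 6$ for Steiner quadruple systems and checking the tight parameter inequality $c \leq a-t$ so that Theorem~\ref{aut thm gen} both applies and delivers informativeness. As an alternative route that avoids Theorem~\ref{aut thm gen} entirely, one could let $S_n$ act on $\mathcal{D}$, observe that each resulting announcement is a $3$-design and hence in particular a $2$-$(n,4,\lambda)$-design, invoke Lemma~\ref{2-design} for perfect $1$-security, and invoke Theorem~\ref{infor.thm} (using $\lambda = 1$) for informativeness.
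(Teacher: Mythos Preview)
Your proposal is correct and follows exactly the route the paper intends: apply Theorem~\ref{aut thm gen} with a Steiner quadruple system $\SkS(3,4,n)$ as the starting design (parameters $a=4$, $t=3$, $c=1$), whose existence for $n \equiv 2,4 \bmod 6$ is the theorem quoted immediately before the corollary. Your observation that the construction actually yields perfect $2$-security (from which perfect $1$-security follows) is a nice sharpening of the stated conclusion, and your alternative route via Lemma~\ref{2-design} is also valid.
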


More generally, we can use any Steiner system $\SkS(t,a,n)$ as a starting design to obtain an equitable $(a, n-a-c, c)$-strategy that is perfectly $(t-c)$-secure against Cathy for $c \leq \min\{t-1,a-t\}$. Known infinite families of $\SkS(2,a,n)$ include \emph{affine geometries}, \emph{projective geometries}, \emph{unitals}, and \emph{Denniston designs}~\cite{CD}, which together give the following result:

\begin{corollary} Let $q$ be a prime power and $\ell \geq 2$. There exist the following equitable strategies that are perfectly $1$-secure against Cathy:
	\begin{enumerate}
		\item  A $(q, q^\ell -q-1,1)$-strategy (constructed from affine geometries);
		\item A $(q+1, q^\ell + \cdots + q^2  - 1, 1)$-strategy (constructed from projective geometries);
		\item A $(q+1, q^3-q-1, 1)$-strategy (constructed from unitals); and
		\item A $(2^r, 2^{r+s} - 2^s -1, 1)$-strategy, for $2 \leq r < s$ (constructed from Denniston designs).
	\end{enumerate}
\end{corollary}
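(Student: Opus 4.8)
The plan is to invoke the general machinery already built up in Theorem~\ref{aut thm gen}, specializing the starting design $\mathcal{D}$ to each of the four listed infinite families of Steiner systems with $t=2$, and then reading off the resulting strategy parameters. Since every $\SkS(2,a,n)$ is in particular a $2$-$(n,a,1)$-design, Theorem~\ref{aut thm gen} applies with $t=2$, and the condition $c \leq \min\{t-1,a-t\} = \min\{1, a-2\}$ forces $c=1$ (using $a \geq 3$, which holds in every case since $q \geq 2$ and $\ell \geq 2$). Thus each family yields an equitable $(a, n-a-1, 1)$-strategy that is informative for Bob and perfectly $(t-c) = 1$-secure against Cathy. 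The only remaining work is to substitute the known point counts $n$ for each family and simplify $b = n - a - 1$.

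Concretely, I would handle the four cases in turn. For affine geometries $\SkS(2,q,q^\ell)$ we have $a=q$, $n=q^\ell$, hence $b = q^\ell - q - 1$, giving the $(q, q^\ell - q - 1, 1)$-strategy. For projective geometries $\SkS(2, q+1, q^\ell + \cdots + q + 1)$ we have $a = q+1$, $n = q^\ell + \cdots + q + 1$, hence $b = n - (q+1) - 1 = q^\ell + \cdots + q^2 - 1$, giving the second strategy. For unitals $\SkS(2, q+1, q^3+1)$ we have $a = q+1$, $n = q^3 + 1$, hence $b = q^3 + 1 - (q+1) - 1 = q^3 - q - 1$, giving the third. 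For Denniston designs $\SkS(2, 2^r, 2^{r+s} + 2^r - 2^s)$ with $2 \leq r < s$ we have $a = 2^r$, $n = 2^{r+s} + 2^r - 2^s$, hence $b = n - 2^r - 1 = 2^{r+s} - 2^s - 1$, giving the fourth. In each case the existence of the underlying Steiner system is a known result from~\cite{CD}, quoted in the excerpt, so the strategy exists.

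There is essentially no obstacle here beyond careful bookkeeping: the substantive content — that a $t$-$(n,a,1)$-design gives an equitable, informative, perfectly $(t-c)$-secure strategy — is exactly Theorem~\ref{aut thm gen}, and the informativeness for Bob in particular uses $\lambda = 1$ together with $a - c = a - 1 \geq 2 = t$ so that Theorem~\ref{infor.thm} applies. The one point worth a sentence of care is checking $a - t \geq c = 1$, i.e. $a \geq 3$, in each family, which is immediate; and noting that for $t = 2$ the hypothesis $c \leq t-1$ pins $c$ to $1$, so these constructions only give security for a single card, consistent with the corollary statement. I would present the proof as a short case analysis, one line per family, citing Theorem~\ref{aut thm gen} for the general mechanism and~\cite{CD} for the existence of each family.
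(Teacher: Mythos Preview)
Your proposal is correct and matches the paper's approach exactly: the paper presents this corollary as an immediate consequence of applying Theorem~\ref{aut thm gen} (the ``starting design'' construction) to each of the four listed families of $\SkS(2,a,n)$, with $c=1$ forced by $c \le t-1 = 1$, and the $b$-values obtained by computing $n-a-1$. One tiny caveat: your claim that $a \ge 3$ ``is immediate'' fails for affine geometries with $q=2$ (where $a=q=2$), so either restrict to $q \ge 3$ in item~(1) or note that the perfect-security conclusion of Theorem~\ref{aut thm gen} actually only needs $c \le t-1$ (via Theorem~\ref{t-design security}); the paper's own implicit argument has the same edge case.
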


In fact, we can use the same method to construct equitable $(a,b,c)$-strategies that are perfectly $\delta$-secure against Cathy, informative for Bob, and \emph{allow Cathy to hold more than one card}. Such a solution to the generalized Russian cards problem has not been proven to exist in the literature. We next give an infinite class of equitable and perfectly 1-secure strategies where Cathy holds two cards. 

\begin{example}Consider the  \emph{inversive plane} with $q=2^3$; this is a $3$-$(65,9,1)$-design. The construction method in Theorem~\ref{aut thm gen} yields an equitable $(9,55,1)$-strategy that is perfectly 2-secure against Cathy and informative for Bob and (more interestingly) a $(9,54,2)$-strategy that is perfectly 1-secure against Cathy and informative for Bob.
\end{example}

It is known that $3$-$(q^2+1, q+1, 1)$-designs (or inversive planes) exist whenever $q$ is a prime power. This gives us the following result.

\begin{corollary} There exists an equitable $\left (q+1, q^2-q-2, 2 \right)$-strategy that is informative for Bob and perfectly $1$-secure against Cathy and an equitable $\left (q+1, q^2-q-1, 1 \right)$-strategy that is informative for Bob and perfectly $2$-secure against Cathy, for every prime power $q \geq 4$.
\end{corollary}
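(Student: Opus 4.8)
The plan is to apply Theorem~\ref{aut thm gen} directly, taking an inversive plane as the ``starting design.'' First I would recall that among the spherical geometries $\SkS(3,q+1,q^n+1)$ listed earlier, the case $n=2$ gives the inversive planes; equivalently, a $3$-$(q^2+1,q+1,1)$-design exists for every prime power $q$. So I would fix $\mathcal{D}$ to be such a design and feed it into Theorem~\ref{aut thm gen} with $t=3$, block size $a=q+1$, and $n=q^2+1$.

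For the first strategy, set $c=2$. The hypothesis of Theorem~\ref{aut thm gen} requires $c \leq \min\{t-1,\,a-t\} = \min\{2,\,q-2\}$; since $q \geq 4$, we have $q-2 \geq 2$, so this holds (with equality). Theorem~\ref{aut thm gen} then yields a $\gamma$-equitable $(a,\,n-a-c,\,c)$-strategy that is informative for Bob and perfectly $(t-c)$-secure against Cathy. Substituting, $n-a-c = (q^2+1)-(q+1)-2 = q^2-q-2$ and $t-c = 1$, giving the claimed equitable $(q+1,\,q^2-q-2,\,2)$-strategy that is informative for Bob and perfectly $1$-secure against Cathy.

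For the second strategy, instead set $c=1$. Now $c \leq \min\{2,\,q-2\}$ holds for every $q \geq 3$, hence for all $q \geq 4$, and Theorem~\ref{aut thm gen} again produces a $\gamma'$-equitable $(a,\,n-a-c,\,c)$-strategy that is informative for Bob and perfectly $(t-c)$-secure against Cathy. Here $n-a-c = (q^2+1)-(q+1)-1 = q^2-q-1$ and $t-c = 2$, which is exactly the claimed equitable $(q+1,\,q^2-q-1,\,1)$-strategy that is informative for Bob and perfectly $2$-secure against Cathy.

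Since the whole argument is just a parameter substitution into Theorem~\ref{aut thm gen}, there is no genuine obstacle; the only points needing care are (i) correctly invoking the existence of inversive planes as $3$-$(q^2+1,q+1,1)$-designs for all prime powers $q$, and (ii) checking the numerical constraint $c \leq \min\{t-1,a-t\}$, which is precisely what forces $q \geq 4$ in the first case (we need $q-2 \geq 2$), while the second case is unconstrained beyond $q \geq 3$. The explicit values of $\gamma$ and of the number of announcements come from the formulas in Theorem~\ref{aut thm gen} but are not needed for the statement.
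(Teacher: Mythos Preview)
Your proposal is correct and follows exactly the paper's approach: the paper simply notes that inversive planes (i.e., $3$-$(q^2+1,q+1,1)$-designs) exist for every prime power $q$ and states the corollary as an immediate consequence of Theorem~\ref{aut thm gen}. Your careful check of the constraint $c \leq \min\{t-1,a-t\} = \min\{2,q-2\}$, which is what forces $q \geq 4$ in the $c=2$ case, makes explicit what the paper leaves to the reader.
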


More generally, we can use \emph{spherical geometries}, which are $3$-$(q^n +1, q+1, 1)$-designs (or, equivalently, $\SkS(3, q+1, q^n+1)$) for $q$ a prime power and $n \geq 2$ to construct strategies allowing Cathy to hold two cards:

\begin{corollary} There exists an equitable $\left (q+1, q^n-q-2, 2 \right)$-strategy that is informative for Bob and perfectly $1$-secure against Cathy and an equitable $\left (q+1, q^n-q-1, 1 \right)$-strategy that is informative for Bob and perfectly $2$-secure against Cathy, for every prime power $q$ and $n \geq 2$.
\end{corollary}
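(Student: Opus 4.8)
The plan is to invoke the general ``starting design'' machinery from Theorem~\ref{aut thm gen} with the spherical geometries as the input design, exactly mirroring the proof of the inversive-plane corollary but with the parameter $n \geq 2$ left free. First I would recall that, by the listed infinite families of Steiner systems, an $\SkS(3, q+1, q^n+1)$ exists for every prime power $q$ and every $n \geq 2$; equivalently this is a $3$-$(q^n+1, q+1, 1)$-design. Write $\mathcal{D}$ for such a design, so in the notation of Theorem~\ref{aut thm gen} we have $t = 3$, block size $a = q+1$, and point set size $\nu := q^n + 1$.

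Next I would apply Theorem~\ref{aut thm gen} twice, once for each of the two claimed strategies. For the first strategy, set $c = 2$: I must check the hypothesis $c \leq \min\{t-1, a-t\}$, i.e. $2 \leq \min\{2,\, q+1-3\} = \min\{2, q-2\}$, which holds precisely when $q \geq 4$ — and since the corollary only needs $q$ a prime power and $n \geq 2$, I should note that the statement as written requires $q \geq 4$ as well (for $q \in \{2,3\}$ one would have $a - t < 2$); I would phrase this carefully or restrict to $q \geq 4$. Granting $c = 2$, Theorem~\ref{aut thm gen} produces an equitable $(a, \nu - a - c, c) = (q+1,\ q^n+1-(q+1)-2,\ 2) = (q+1,\ q^n - q - 2,\ 2)$-strategy that is informative for Bob and perfectly $(t-c) = 1$-secure against Cathy. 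For the second strategy, set $c = 1$: the hypothesis becomes $1 \leq \min\{2, a-3\} = \min\{2, q-2\}$, again satisfied for $q \geq 4$ (and in fact $q \geq 3$), and Theorem~\ref{aut thm gen} yields an equitable $(q+1,\ q^n+1-(q+1)-1,\ 1) = (q+1,\ q^n - q - 1,\ 1)$-strategy that is informative for Bob and perfectly $(t-c) = 2$-secure against Cathy. Arithmetic simplification of $b = \nu - a - c$ in each case is routine.

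There is essentially no hard part here: the corollary is a direct specialization of Theorem~\ref{aut thm gen} to a known existence result, so the proof is two lines of parameter-matching plus the existence citation. The only genuine point requiring attention is the side condition on $q$: verifying that $\min\{t-1, a-t\} \geq c$ holds, which forces $q \geq 4$ in the $c=2$ case. I would also remark, for completeness, that the number of announcements $m$ and the equitability parameter $\gamma$ are given by the formulas $m = (q^n+1)!/|\mathsf{Aut}(\mathcal{D})|$ and $\gamma = m / \binom{q^n - 2}{q - 2}$ from Theorem~\ref{aut thm gen}, though these are not needed for the statement. I expect the write-up to be at most a short paragraph.

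\begin{proof}
By the known infinite families of Steiner systems, for every prime power $q$ and every integer $n \geq 2$ there exists a spherical geometry $\SkS(3, q+1, q^n+1)$, i.e.\ a $3$-$(q^n+1, q+1, 1)$-design $\mathcal{D}$. We apply Theorem~\ref{aut thm gen} with this starting design, so that $t = 3$ and $a = q+1$. For $c = 2$ we have $c \leq \min\{t-1, a-t\} = \min\{2, q-2\}$ since $q \geq 4$, so Theorem~\ref{aut thm gen} yields an equitable $(q+1,\, q^n+1-(q+1)-2,\, 2) = (q+1,\, q^n - q - 2,\, 2)$-strategy that is informative for Bob and perfectly $(t-c) = 1$-secure against Cathy. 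For $c = 1$ we have $c \leq \min\{2, q-2\}$ as well, so Theorem~\ref{aut thm gen} yields an equitable $(q+1,\, q^n+1-(q+1)-1,\, 1) = (q+1,\, q^n - q - 1,\, 1)$-strategy that is informative for Bob and perfectly $(t-c) = 2$-secure against Cathy.
\end{proof}
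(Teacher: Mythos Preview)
Your proposal is correct and follows exactly the paper's (implicit) approach: the corollary is stated without proof in the paper, as a direct instantiation of Theorem~\ref{aut thm gen} with the spherical geometry $\SkS(3,q+1,q^n+1)$ as starting design. Your observation that the hypothesis $c \leq \min\{t-1,a-t\} = \min\{2,q-2\}$ forces $q \geq 4$ in the $c=2$ case (and $q \geq 3$ in the $c=1$ case) is a genuine refinement---the paper imposes $q \geq 4$ in the preceding inversive-plane corollary but omits it here.
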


However, only finitely many Steiner $t$-designs are known for $t > 3$ and none are known for $t > 5$. Table~\ref{steiner:list} lists strategies resulting from known Steiner $5$- and $4$-designs; see~\cite{CD} for examples of these designs. All known $\SkS(4, a, n)$ designs are \emph{derived designs} from $\SkS(5, a+1, n+1)$ designs, formed by choosing an element $x$, selecting all blocks containing $x$ and then deleting $x$ from these blocks.

\begin{example} As Table~\ref{steiner:list} indicates, a $\SkS(5,8,24)$ exists. This design and its derived $\SkS(4,7,23)$ are called the \emph{Witt designs}. In particular, the $\SkS(5,8,24)$ implies that for an $(8, 13, 3)$-deal, it is possible to achieve perfect $2$-security. This is the only construction of which the authors are aware that achieves perfect security for $c > 2$. 
\end{example}

\begin{table}
\caption{Perfectly $(t-c)$-secure strategies from Steiner $t$-designs for $t =4,5$}
\label{steiner:list}
\centering
\begin{tabular}{| r || r || c || r || r || c |} \hline
$5$-design		& $(a,b,c)$-strategy	& $5-c$	& Derived $4$-design	& $(a,b,c)$-strategy	& $4-c$\\ \hline \hline
$\SkS(5,8,24)$		& $(8, 15, 1)$		& $4$		& $\SkS(4,7,23)$		& $(7, 15, 1)$		& $3$\\
				& $(8,14, 2)$		& $3$ 		& 				& $(7, 14, 2)$		& $2$\\
				& $(8, 13, 3)$		& $2$		& 				& $(7, 13, 3)$		& $1$\\ \hline
$\SkS(5,7,28)$		& $(7, 20, 1)$		& $4$		& $\SkS(4,6,27)$		& $(6, 20, 1)$		& $3$\\
				& $(7, 19, 2)$		& $3$		& 				& $(6, 19, 2)$		& $2$\\ \hline
$\SkS(5, 6,12)$ 		& $(6, 5, 1)$		& $4$		& $\SkS(4, 5,11)$ 		& $(5, 5, 1)$		& $3$\\\hline
$\SkS(5, 6, 24)$		& $(6, 17, 1)$		& $4$		& $\SkS(4, 5, 23)$		& $(5, 17, 1)$		& $3$\\\hline
$\SkS(5, 6, 36)$		& $(6, 29, 1)$		& $4$		& $\SkS(4, 5, 35)$		& $(5, 29, 1)$		& $3$ \\\hline
$\SkS(5, 6, 48)$		& $(6, 41, 1)$		& $4$		& $\SkS(4, 5, 47)$		& $(5, 41, 1)$		& $3$\\\hline
$\SkS(5, 6, 72)$		& $(6, 65, 1)$		& $4$		& $\SkS(4, 5, 71)$		& $(5, 65, 1)$		& $3$\\\hline
$\SkS(5, 6, 84)$		& $(6, 77, 1)$		& $4$		& $\SkS(4, 5, 83)$		& $(5, 77, 1)$		& $3$ \\\hline
$\SkS(5, 6, 108)$		& $(6, 101, 1)$		& $4$		& $\SkS(4, 5, 107)$	& $(5, 101, 1)$		& $3$ \\\hline
$\SkS(5, 6, 132)$		& $(6, 125, 1)$		& $4$		& $\SkS(4, 5, 131)$	& $(5, 125, 1)$		& $3$ \\\hline
$\SkS(5, 6, 168)$		& $(6, 161, 1)$		& $4$		& $\SkS(4, 5, 167)$	& $(5, 161, 1)$		& $3$\\\hline
$\SkS(5, 6,  244)$		& $(6, 137, 1)$		& $4$		& $\SkS(4, 5,  243)$	& $(5, 137, 1)$		& $3$ \\\hline
\end{tabular}
\end{table}

We next discuss existence results for \emph{optimal} strategies. As shown in Swanson and Stinson~\cite{cardsSwansonS12}, the number of announcements $m$ in an informative $(a,b,c)$-strategy must satisfy $m \geq \binom{n-a+c}{c}$. A strategy is \emph{optimal} if $m = \binom{n-a+c}{c}$. The following result by Swanson and Stinson~\cite{cardsSwansonS12} follows immediately from the existence of large sets of Steiner triples, discussed in Remark~\ref{large STS existence}, and Lemma~\ref{2-design}.

\begin{theorem}\label{large steiner strategy} {\cite{cardsSwansonS12}}
Suppose $(a,b,c) = (3,n-4,1)$, where $n \equiv 1,3 \bmod 6$, $n > 7$. Then there exists an optimal strategy for Alice that is informative for Bob and perfectly $1$-secure against Cathy.
\end{theorem}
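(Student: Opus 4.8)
The plan is to build the strategy directly from a large set of Steiner triple systems on a point set $X$ with $|X| = n$. Since $n \equiv 1,3 \bmod 6$ together with $n > 7$ forces $n \geq 9$, Theorem~\ref{large STS existence} guarantees that such a large set exists: a collection $\{(X,\B_1),\dots,(X,\B_N)\}$ of $\STS(n)$'s (all on the same point set $X$) whose block sets partition $\binom{X}{3}$. By the remark following Definition~\ref{t-design large set}, we have $N = \binom{n-2}{1} = n-2$.

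I would take $\D = \{\B_1,\dots,\B_N\}$ as Alice's set of possible announcements. The first thing to verify is that this is a legitimate $(3,n-4,1)$-strategy. Because the $\B_i$ partition $\binom{X}{3} = \binom{X}{a}$, every hand $H_A \in \binom{X}{3}$ lies in exactly one $\B_i$. Hence $\bigcup_{i=1}^N \B_i = \binom{X}{a}$ as required, and $\lvert g(H_A) \rvert = 1$ for every $H_A$; the (necessarily uniform) strategy is thus $1$-equitable, hence equitable.

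Next I would invoke the design-theoretic hypotheses. Each announcement $\B_i$ is by construction a $2$-$(n,3,1)$-design, so Lemma~\ref{2-design} immediately yields that $\D$ is perfectly $1$-secure against Cathy. For the informative property, note that $a-c = 2$ and that each $\B_i$ is a $t$-$(n,a,1)$-design with $t = 2 \leq a-c$, so Corollary~\ref{info.corollary} shows that $\D$ is informative for Bob. (Equivalently, any two distinct blocks of an $\STS$ meet in at most one point, which is strictly less than $a-c = 2$, so Theorem~\ref{infor.thm} applies.)

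Finally I would check optimality: an informative $(a,b,c)$-strategy requires $m \geq \binom{n-a+c}{c}$ announcements, which here equals $\binom{n-2}{1} = n-2$. Our strategy has exactly $m = N = n-2$ announcements, so it meets this bound and is optimal. There is no genuine obstacle here beyond assembling the pieces; the only point requiring a moment's care is matching the hypotheses — observing that $n \equiv 1,3 \bmod 6$ with $n > 7$ is precisely the range $n \geq 9$ in which the large set is known to exist, and that the number $N = n-2$ of designs in the large set coincides with the optimal announcement count $\binom{n-a+c}{c}$, which is exactly why the partition structure of a large set (rather than an arbitrary family of $\STS$'s) is needed.
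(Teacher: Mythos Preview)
Your proposal is correct and follows exactly the approach the paper indicates: the paper states that the result ``follows immediately from the existence of large sets of Steiner triples\ldots and Lemma~\ref{2-design},'' and you have simply spelled out the details---taking the large set as the announcement set, invoking Lemma~\ref{2-design} for perfect $1$-security, Corollary~\ref{info.corollary} (or Theorem~\ref{infor.thm}) for informativeness, and verifying the announcement count $N=n-2$ matches the optimality bound $\binom{n-a+c}{c}=n-2$.
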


\begin{example} Consider the large set of $\STS(9)$ from Example~\ref{large.set}. This set of announcements is an optimal $(3,5,1)$ strategy that is perfectly 1-secure against Cathy and informative for Bob.
\end{example}

As discussed before Theorem~\ref{large steiner strategy}, if we can construct a large set of $2$-$(n,3,1)$-designs, this set forms an optimal strategy that is informative and perfectly $1$-secure, and a large set of $\STS(n)$ exists whenever $n \equiv 1,3 \bmod 6$ and $n > 7$.  However, there are certain choices of $n$ for which there is a particularly nice construction for a large set of $\STS(n)$, such that it would be easy for Alice and Bob to create this large set on their own. We forego the details of this construction, which is due to Schreiber~\cite{Schreiber73}, but remark that this construction method applies whenever each prime divisor $p$ of $n-2$ has the property that the order of $(-2)$ modulo $p$ is congruent to $2$ modulo $4$.

Two other types of designs that can be used to construct informative and perfectly 1-secure strategies where Cathy holds one card are \emph{hyperplanes} in projective spaces and \emph{Hadamard designs}. For a discussion of these constructions, we refer the reader to Stinson~\cite{S03}. We have the following results.

\begin{corollary} There exists an equitable $\left (\frac{q^d-1}{q-1}, q^d-1 , 1\right)$-strategy that is informative for Bob and perfectly $1$-secure against Cathy, where $q \geq 2$ is a prime power and $d \geq 2$ is an integer.
\end{corollary}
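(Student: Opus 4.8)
The plan is to realize the claimed strategy as the $S_n$-orbit of the hyperplane design of the projective space $PG(d,q)$, exactly in the spirit of the ``starting design'' construction of Theorem~\ref{aut thm gen} and Remark~\ref{starting design technique}, and then read off perfect $1$-security from Lemma~\ref{2-design} and informativeness from the block-intersection structure of a symmetric BIBD.

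First I would fix $n = (q^{d+1}-1)/(q-1)$ and do the bookkeeping: with $a = (q^d-1)/(q-1)$, $b = q^d-1$, and $c = 1$ one checks $a + b + c = n$, and $a = 1 + q + \cdots + q^{d-1} \geq 3 > 1 = c$, so Corollary~\ref{NC-cor} does not obstruct an informative strategy. The hyperplanes of $PG(d,q)$ form a symmetric $(n,a,\lambda)$-BIBD $\mathcal{D} = (X,\mathcal{B})$ with $\lambda = (q^{d-1}-1)/(q-1)$ (this is the first family of symmetric designs recalled in Section~\ref{sec:combinatorial_designs}), and in particular $\mathcal{D}$ is a $2$-$(n,a,\lambda)$-design on $n$ points with block size $a$.

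Next I would check that $\mathcal{D}$, viewed as a single announcement, is informative: by Theorem~\ref{symmetric intersection} any two distinct blocks of $\mathcal{D}$ meet in exactly $\lambda$ points, and since $a - \lambda = (q^d - q^{d-1})/(q-1) = q^{d-1} \geq q \geq 2$ (this is where $d \geq 2$ enters), we get $\lambda \leq a-2 < a-1 = a-c$, so no two blocks of $\mathcal{D}$ share $a-c$ or more points and Theorem~\ref{infor.thm} applies. Letting $S_n$ act on $\mathcal{D}$ yields a family of pairwise isomorphic designs $\{\pi(\mathcal{D}) : \pi \in S_n\}$, which I take to be the announcement set, equipped with the uniform distribution on $g(H_A)$ for each hand $H_A$. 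Since $S_n$ acts transitively on $\binom{X}{a}$ and $\mathcal{B} \neq \emptyset$, every $a$-subset of $X$ lies in at least one announcement, so this is a legitimate $(a,b,1)$-strategy; and the same transitivity shows every $a$-subset lies in the same number $\gamma$ of the distinct announcements, so the strategy is $\gamma$-equitable. Each announcement is isomorphic to $\mathcal{D}$, hence a $2$-$(n,a,\lambda)$-design, so Lemma~\ref{2-design} gives perfect $1$-security against Cathy; and each announcement inherits the bound $|H_A \cap H_A'| = \lambda < a-c$ on intersections of distinct blocks, so Theorem~\ref{infor.thm} makes each announcement informative, whence the whole strategy is informative for Bob.

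There is no serious obstacle here: the two points that require a moment's care are (i) the identity $a - \lambda = q^{d-1}$, which is precisely what keeps the symmetric-BIBD intersection number $\lambda$ strictly below the informativeness threshold $a-1$, and is where the hypothesis $d \geq 2$ is used, and (ii) the observation that the equitability half of the orbit construction of Theorem~\ref{aut thm gen} never used $\lambda = 1$ --- only transitivity of $S_n$ on $a$-subsets --- so the construction remains valid even though the hyperplane design has $\lambda > 1$ when $d \geq 3$ (for $d = 2$ it is the projective plane $\SkS(2,q+1,q^2+q+1)$ and Theorem~\ref{aut thm gen} applies verbatim). Everything else is arithmetic.
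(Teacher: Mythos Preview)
Your proposal is correct and follows essentially the same approach as the paper: both use the hyperplane design of $PG(d,q)$ as a starting symmetric BIBD, let $S_n$ act on it to obtain the announcement set, invoke Lemma~\ref{2-design} for perfect $1$-security, and appeal to the symmetric-BIBD intersection property (Theorem~\ref{symmetric intersection}) together with the computation $a-\lambda=q^{d-1}\geq 2>c$ for informativeness. Your exposition is in fact more careful than the paper's in one respect: you explicitly note that Theorem~\ref{aut thm gen} is stated only for $\lambda=1$, but that the equitability part of its proof---transitivity of $S_n$ on $\binom{X}{a}$---goes through unchanged for the $\lambda>1$ case, whereas the paper simply says ``as in the proof of Theorem~\ref{aut thm gen}'' without flagging this.
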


\begin{proof}It is known that there exists a symmetric  $\left ( \frac{q^{d+1}-1}{q-1}, \frac{q^d-1}{q-1},\frac{q^{d-1}-1}{q-1} \right )$-BIBD $\mathcal{D}$ for every prime power $q $ and integer $d \geq 2$. The design $\mathcal{D}$ is a hyperplane in a projective space (or, in the case $d=2$, a finite projective plane). Let the symmetric group $S_n$  act on $\mathcal{D}$ as in the proof of Theorem~\ref{aut thm gen}, where $n= (q^{d+1}-1)/(q-1)$, to obtain Alice's strategy.

Lemma~\ref{2-design} immediately implies that this strategy is perfectly 1-secure against Cathy. To see that this strategy is informative, recall that the intersection of two blocks in a symmetric BIBD has size $\lambda = (q^{d-1}-1)/(q-1)$. It is easy to see that the strategy will be informative provided $a - c > \lambda$, which is the case here.
\end{proof}

\begin{corollary}There exists an equitable $\left (\frac{q-1}{2}, \frac{q-1}{2} , 1 \right)$-strategy that is informative for Bob and perfectly $1$-secure against Cathy, where $q \equiv 3 \bmod 4$ is an odd prime power.
\end{corollary}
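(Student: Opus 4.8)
The plan is to mimic, essentially verbatim, the proof of the preceding corollary on hyperplanes, using a Hadamard design in place of a symmetric design arising from projective geometry. First I would record the parameters. A Hadamard design is a symmetric $\left(q, \frac{q-1}{2}, \frac{q-3}{4}\right)$-BIBD, which exists for every odd prime power $q \equiv 3 \bmod 4$; it is symmetric because it has exactly $q$ blocks. Setting $a = b = \frac{q-1}{2}$ and $c = 1$ gives $n = a + b + c = q$, so a Hadamard design lives on exactly the right number of points to serve as a starting design (in the sense of Remark~\ref{starting design technique}) for an $(a,b,1)$-strategy.

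Next I would apply the $S_n$-action construction of Theorem~\ref{aut thm gen}: let the symmetric group $S_n$ act on the Hadamard design $\mathcal{D}$, where $n = q$, and take the resulting orbit of isomorphic copies of $\mathcal{D}$ as Alice's set of announcements. This is by construction a $\gamma$-equitable strategy, and each announcement is a $2$-$\left(q, \frac{q-1}{2}, \frac{q-3}{4}\right)$-design, so Lemma~\ref{2-design} immediately yields that the strategy is perfectly $1$-secure against Cathy.

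It then remains to verify informativeness for Bob. Since each announcement is a symmetric BIBD, Theorem~\ref{symmetric intersection} gives that any two of its blocks meet in exactly $\lambda = \frac{q-3}{4}$ points, and by Theorem~\ref{infor.thm} the strategy is informative provided no two distinct hands in an announcement intersect in $a - c$ or more points, i.e.\ provided $\lambda < a - c$. Thus the only real step is the inequality $\frac{q-3}{4} < \frac{q-1}{2} - 1 = \frac{q-3}{2}$, which holds for all $q > 3$; for $q = 3$ the design degenerates ($\lambda = 0$, $a = c = 1$, contradicting $a > c$), so the statement is to be read for prime powers $q \equiv 3 \bmod 4$ with $q \geq 7$. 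This elementary inequality is the only obstacle worth mentioning; everything else is parameter bookkeeping together with direct appeals to Lemma~\ref{2-design}, Theorem~\ref{symmetric intersection}, and Theorem~\ref{infor.thm}. In effect, this corollary is the Hadamard-design analogue of the hyperplane corollary, obtained by substituting the Hadamard parameters into the same argument.
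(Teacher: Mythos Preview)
Your proposal is correct and follows essentially the same argument as the paper: take a Hadamard design as a starting design, apply the $S_q$-action construction of Theorem~\ref{aut thm gen}, invoke Lemma~\ref{2-design} for perfect $1$-security, and use the symmetric-BIBD intersection property (Theorem~\ref{symmetric intersection}) together with Theorem~\ref{infor.thm} to verify informativeness via $\lambda < a - c$. Your explicit handling of the degenerate case $q = 3$ is a small refinement the paper omits, but otherwise the two proofs coincide.
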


\begin{proof}
It is known that there exists a symmetric $\left (q, \frac{q-1}{2}, \frac{q-3}{4} \right)$-BIBD $\mathcal{D}$ for every odd prime power $q$ such that $q \equiv 3 \bmod 4$. The design $\mathcal{D}$ is a Hadamard design.  Let the symmetric group $S_q$ act on $\mathcal{D}$ as in the proof of Theorem~\ref{aut thm gen} to obtain Alice's strategy.

Lemma~\ref{2-design} immediately implies that this strategy is perfectly 1-secure against Cathy. To see that this strategy is informative, recall that the intersection of two blocks in a symmetric BIBD has size $\lambda = (q-3)/4$. It is easy to see that the strategy will be informative provided $a - c > \lambda$, which is the case here.
\end{proof}

\begin{remark} Any symmetric BIBD may be used to construct equitable strategies that are perfectly $1$-secure against Cathy for $c=1$. If $\mathcal{D}$ is a symmetric $2$-$(n,a,\lambda)$-design, the \emph{order} of $\mathcal{D}$ is $a - \lambda$. The block intersection property we need to guarantee that the strategy is informative is that the order is greater than 1, which will always be the case. Colbourn and Dinitz~\cite{CD} list known families of symmetric BIBDs.
\end{remark}

\subsection{Cord\'{o}n-Franco et al.\ geometric protocol}
\label{subsec: geom}
Cord\'{o}n-Franco et al.~\cite{CDFS13} present a protocol based on hyperplanes that yields informative and weakly $\delta$-secure equitable $(a,b,c)$-strategies for arbitrary $c, \delta > 0$ and appropriate parameters $a$ and $b$. The geometric protocol is stated as follows.

\begin{protocol}[Geometric Protocol~\cite{CDFS13}]
Let $p$ be a prime power and let $d$ and $s < p $ be positive integers. Let $X$ be a deck of $p^{d+1}$ cards  and suppose we have an $(a,b,c)$-deal such that $a = sp^d$. Given a hand $H_A \in \binom{X}{a}$, the set of possible announcements for Alice is the set of bijections from $X$ to $\AG_{d+1}(p)$ satisfying the condition that $H_A$ maps to the union of $s$ parallel hyperplanes in $\AG_{d+1}(p)$. For every $H_A \in \binom{X}{a}$, assume Alice picks uniformly at random from the set of possible bijections. 
\end{protocol}

In particular, the geometric protocol defines an equitable strategy in which Cathy may hold more than one card. We analyze when the geometric protocol achieves perfect, rather than weak, security, whereas Cord\'{o}n-Franco et al.~\cite{CDFS13} show that the general case achieves weak $s$-security for a card deck of size $p^{d+1}$, where $a=sp^d$, if $c < sp^d-s^2p^{d-1}$ and $\max \{c+s,cs\} \leq p$.

We now translate the geometric protocol into our model.

\begin{observation} Let $\D$ be the strategy defined by the geometric protocol. An announcement $\A_i \in \D$ is equivalent to the set of all possible unions of $s$ parallel hyperplanes.% In particular, there are $\binom{p}{s} ({p^{d+1}-1})/({p-1})$ possible hands for Alice in each $\A_i$.
\end{observation}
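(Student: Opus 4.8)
The plan is to unwind the two different formalisms and check that they describe the same object. In the paper's model an announcement is a subset of $\binom{X}{a}$, and when Alice broadcasts its index everyone learns exactly that $H_A$ lies in that subset; in the geometric protocol Alice instead broadcasts a bijection $f\colon X \to \AG_{d+1}(p)$. The first thing I would do is point out that, by Kerckhoffs' principle, $f$ is public data fixed in advance, so it simply serves as the name (index) of an announcement. What this broadcast reveals is precisely that $H_A$ is one of the hands $H$ for which $f$ is a \emph{legal} announcement, i.e.\ one of the hands $H$ with $f(H)$ equal to a union of $s$ distinct parallel hyperplanes of $\AG_{d+1}(p)$.

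Hence, writing $f = f_i$ for the bijection naming $\A_i$, we have by definition
\[
\A_i = \bigl\{\, H \in \binom{X}{a} \;:\; f(H) \text{ is a union of } s \text{ distinct parallel hyperplanes of } \AG_{d+1}(p) \,\bigr\}.
\]
Because $f$ is a bijection, $H \mapsto f(H)$ is a bijection from $\A_i$ onto $\{f(H) : H \in \A_i\}$, so it remains only to verify that this image is exactly the set $\mathcal{U}$ of all unions of $s$ distinct parallel hyperplanes of $\AG_{d+1}(p)$. One containment is immediate. For the other, take $U \in \mathcal{U}$: a union of $s$ distinct hyperplanes drawn from one parallel class (which contains $p > s$ hyperplanes) has exactly $sp^d = a$ points, so $f^{-1}(U) \in \binom{X}{a}$; and $f\bigl(f^{-1}(U)\bigr) = U \in \mathcal{U}$ shows $f^{-1}(U) \in \A_i$, so $U$ lies in the image. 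This gives the stated equivalence, and incidentally confirms that $\A_i$ is a genuine announcement, since $H_A \in \A_i$ by the protocol's defining condition.

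This observation is really a definitional translation, so I do not expect a serious obstacle. The two points that deserve care are: (i) being explicit that the broadcast bijection is public and plays the role of the announcement index, so that the only thing leaked is membership in $\A_i$; and (ii) the small counting fact that a union of $s$ distinct parallel hyperplanes has exactly $a = sp^d$ points, which is what makes every element of $\mathcal{U}$ the $f$-image of an actual $a$-card hand. I would also note, for completeness, that distinct bijections may name the same subset of $\binom{X}{a}$, but this does not affect the statement.
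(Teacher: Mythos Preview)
Your argument is correct and is exactly the natural translation the paper has in mind; the paper itself states this Observation without proof, treating it as a definitional unpacking, so you have simply written out the details the paper leaves implicit. One small phrasing issue: saying that ``by Kerckhoffs' principle, $f$ is public data fixed in advance'' is slightly off---the specific bijection $f$ is not fixed in advance but chosen by Alice and then broadcast; what Kerckhoffs' principle guarantees is that the \emph{strategy} (the rule associating each $f$ to its set $\A_i$) is public, so that once $f$ is announced everyone can compute $\A_i$. This does not affect the mathematics, and your two key checks---(i) that broadcasting $f$ reveals exactly membership in $\A_i$, and (ii) that every union of $s$ distinct parallel hyperplanes has $sp^d=a$ points and hence arises as $f(H)$ for some $H\in\binom{X}{a}$---are precisely what is needed.
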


We first consider general results from design theory with respect to an announcement in the above strategy $\D$. Let us view $X$ as the set of points in $\AG_{d+1}(p)$, and let $\B$ be denote the set of all hyperplanes in $\AG_{d+1}(p)$. It is well known that $(X, \B)$ is a resolvable $\left (p^{d+1}, p^d,  \lambda \right)$-BIBD, where $\lambda = {(p^d-1)}/{(p-1)}$. Moreover, each point has degree $r =({p^{d+1}-1})/({p-1})$, and there are $r$ equivalance classes of parallel hyperplanes, each of size $p$. Let $\Pi_1, \ldots, \Pi_r$ denote these equivalence classes.  %Note that in each such equivalence class, every point of $\mathbb{F}_p^{d+1}$ occurs exactly once. 
For each $i$, where $1 \leq i \leq r$, let the blocks in $\Pi_i$ be denoted $B_i^j$, for $1 \leq j \leq p$. 

We define the design $(X, \C)$ by forming a collection of all possible unions of $s$ parallel hyperplanes. Stated formally, let $\mathcal{D}$ be the set of all $s$-subsets of a set $Y$, where $\lvert Y \rvert = p$. For each $i$, where $1 \leq i \leq r$, and for each $D \in \mathcal{D}$, define
\[C_{i,D} = \bigcup_{j \in D} B_i^j.\]
We then let 
\[ \C = \{C_{i,D} : 1 \leq i \leq r, D \in \mathcal{D}\}.\]

As discussed in Stinson et al.~\cite{StinsonST13}, this construction $(X,\C)$ is a $\left (p^{d+1}, sp^d, \lambda^\prime \right )$-BIBD, where $\lambda^\prime = \binom{p-1}{s-1} \frac{sp^d-1}{p-1}$.  The above immediately implies the following observation:

\begin{observation}\label{thm: geom.2-design}Let $p$ be a prime power and let $d \geq 1$ be a positive integer. Let $X$ be a deck of $p^{d+1}$ cards and fix an $(a,b,c)$-deal with $a =sp^d$. Then in the strategy $\D$ defined by the geometric protocol, each announcement $\A$ is a $2$-$\left (p^{d+1}, sp^d, \lambda\right )$-design, where $\lambda = \binom{p-1}{s-1} \frac{sp^d-1}{p-1}$. In particular, there are $\binom{p}{s} ({p^{d+1}-1})/({p-1})$ possible hands for Alice in each $\A$.
\end{observation}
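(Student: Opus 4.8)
The plan is to show that each announcement $\A$ in the geometric strategy $\D$ is exactly the set $\C$ of all unions of $s$ parallel hyperplanes (this is the preceding Observation, which we may assume), and then simply invoke the cited fact that $(X,\C)$ is a $\left(p^{d+1}, sp^d, \lambda'\right)$-BIBD with $\lambda' = \binom{p-1}{s-1}\frac{sp^d-1}{p-1}$. Thus the only real content of the Observation is (i) pinning down the value of $\lambda'$ and (ii) counting the number of blocks, i.e.\ the number of possible hands for Alice in a fixed announcement.

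For part (i), I would verify $\lambda'$ by a direct double-counting: fix two distinct points $x,y \in X$. A block $C_{i,D}$ contains both $x$ and $y$ iff for the equivalence class $\Pi_i$ of parallel hyperplanes, both of the (unique) hyperplanes $B_i^{j_x} \ni x$ and $B_i^{j_y} \ni y$ have their indices in $D$. If $x,y$ lie on the same hyperplane of $\Pi_i$ (i.e.\ $j_x = j_y$), this needs $j_x \in D$, giving $\binom{p-1}{s-1}$ choices of $D$; if they lie on different hyperplanes, it needs $\{j_x,j_y\}\subseteq D$, giving $\binom{p-2}{s-2}$ choices. The number of classes $\Pi_i$ through which the line $xy$ is ``split'' versus ``preserved'' is controlled by the fact that $(X,\B)$ is a resolvable $2$-$(p^{d+1},p^d,\lambda)$-BIBD with $\lambda = (p^d-1)/(p-1)$ and $r = (p^{d+1}-1)/(p-1)$ classes: the pair $\{x,y\}$ lies in $\lambda$ common hyperplanes total, one per class that preserves it, so $\lambda$ classes preserve the pair and $r-\lambda$ classes split it. Summing gives $\lambda' = \lambda\binom{p-1}{s-1} + (r-\lambda)\binom{p-2}{s-2}$, and a short binomial identity reduces this to $\binom{p-1}{s-1}\frac{sp^d-1}{p-1}$ — which also matches the formula $r' = \lambda'(v-1)/(k-1)$ for the replication number of a $2$-$(p^{d+1},sp^d,\lambda')$-BIBD, providing an independent check. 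Since $\lambda'$ comes out constant (independent of $x,y$), $(X,\C)$ is indeed a $2$-design.

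For part (ii), the number of blocks of $\C$ is $|\C| = r \cdot |\mathcal{D}| = \frac{p^{d+1}-1}{p-1}\binom{p}{s}$, since for each of the $r$ equivalence classes $\Pi_i$ there are $\binom{p}{s}$ ways to choose the index set $D$, and these are all distinct (two unions of $s$ parallel hyperplanes from different classes cannot coincide, as a block determines the direction of its hyperplane pencil). This is exactly the claimed count $\binom{p}{s}(p^{d+1}-1)/(p-1)$ of possible hands for Alice.

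I expect the main obstacle to be purely bookkeeping: getting the split-vs-preserve count of parallel classes right for a fixed pair $\{x,y\}$ and then carrying out the binomial simplification of $\lambda\binom{p-1}{s-1} + (r-\lambda)\binom{p-2}{s-2}$ to the closed form. Everything else is either quoted verbatim from Stinson et al.~\cite{StinsonST13} or is an immediate consequence of the resolvability of the hyperplane design of $\AG_{d+1}(p)$; in particular no new ideas are needed beyond the preceding Observation identifying announcements with $\C$.
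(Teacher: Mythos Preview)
Your proposal is correct and follows the paper's approach exactly: the paper derives the Observation as an immediate consequence of the preceding Observation (announcements $=\C$) together with the fact, quoted from Stinson et al.~\cite{StinsonST13}, that $(X,\C)$ is a $\bigl(p^{d+1},sp^d,\lambda'\bigr)$-BIBD with $\lambda'=\binom{p-1}{s-1}\frac{sp^d-1}{p-1}$. The only difference is that you go further and supply a self-contained double-counting verification of the cited BIBD parameters (your split-vs-preserve count of parallel classes and the binomial simplification are both correct), whereas the paper is content to cite the result; your block count $r\binom{p}{s}$ likewise just makes explicit what the paper leaves implicit.
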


Observation~\ref{thm: geom.2-design} and Theorem~\ref{t-design security} imply that the Geometric Protocol achieves perfect $1$-security when Cathy holds one card, i.e., for $(sp^d,p^{d+1}-sp^d-1,1)$-deals where $p$ is a prime power and $s < p$.

Moreover, as shown by Stinson et al.~\cite{StinsonST13}, the design $(X, \C)$ is a $3$-design precisely when $p = 2s$, so $p$ must be an even prime power. In this case, $(X, \C)$ is a $3$-$\left (p^{d+1}, p^{d+1}/2, \lambda^{\prime\prime} \right )$-design, where 
\[\lambda^{\prime\prime} =  \binom{p-1}{p/2-1}\frac{p^{d+1} - 4}{4(p-1)}.
\]

That is, for card decks and deals satisfying certain parameters, the strategy defined by the geometric protocol is a $3$-design. This implies that we can sometimes achieve perfect $2$-security for deals in which Cathy holds one card, or perfect $1$-security for deals in which Cathy holds two cards. We state the result in the following theorem for clarity.

\begin{theorem} 
Let $p$ be a prime power and let $d \geq 1$ be a positive integer. Let $X$ be a deck of $p^{d+1}$ cards and fix an $(a,b,c)$-deal with $a =sp^d$. Then the geometric protocol gives perfect $1$-security with $c=2$ (and therefore also perfect $2$-security with $c=1$) if and only if $p = 2^\ell$ for some positive integer $\ell$ and $s = 2^{\ell-1}$.
\end{theorem}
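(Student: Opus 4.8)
The plan is to reduce both security statements to the single condition that every announcement of the geometric protocol is a $3$-design, and then invoke the characterization $p = 2s$ of Stinson et al.~\cite{StinsonST13} quoted above.

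First I would note that, by the translation of the geometric protocol into our model, each announcement $\A \in \D$ is, via the underlying bijection $\phi \colon X \to \AG_{d+1}(p)$, isomorphic to the design $(X,\C)$ formed by all unions of $s$ parallel hyperplanes; in particular $\A$ is a $t$-design exactly when $(X,\C)$ is. Since the geometric protocol is equitable, the two-sided correspondence between perfect security and designs then applies. On one hand, if every announcement is a $3$-$(p^{d+1}, sp^d, \lambda^{\prime\prime})$-design, then Theorem~\ref{t-design security} (with $t=3$ and $c \leq 2 = t-1$) shows the strategy is perfectly $(3-c)$-secure, which is perfect $1$-security when $c=2$ and perfect $2$-security when $c=1$. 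On the other hand, Theorem~\ref{new design thm.gen} shows that if the (equitable) strategy is perfectly $1$-secure with $c=2$, or perfectly $2$-secure with $c=1$, then every announcement must be a $(c+\delta)=3$-design. Hence ``the geometric protocol gives perfect $1$-security with $c=2$'', ``the geometric protocol gives perfect $2$-security with $c=1$'', and ``every announcement of $\D$ is a $3$-design'' are all equivalent.

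It then remains to apply the result of Stinson et al.~\cite{StinsonST13} that $(X,\C)$ is a $3$-design precisely when $p=2s$, and to finish with a short number-theoretic step: $p=2s$ forces $p$ to be an even prime power, hence $p=2^\ell$ for some positive integer $\ell$, and then $s=p/2=2^{\ell-1}$ (which is a positive integer satisfying $s<p$, as the protocol requires); conversely $p=2^\ell$ and $s=2^{\ell-1}$ plainly give $p=2s$. I do not anticipate a genuine obstacle: the argument is an assembly of the translation into $(X,\C)$, the two directions of the design/perfect-security equivalence (Theorems~\ref{t-design security} and~\ref{new design thm.gen}), and the externally supplied ``$p=2s$'' characterization. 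The only point needing a little care is that the ``precisely when'' in the Stinson et al.\ statement is genuinely two-sided, since it is exactly this that powers the ``only if'' direction.
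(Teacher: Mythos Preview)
Your proposal is correct and follows essentially the same approach as the paper: the paper does not give a separate formal proof but relies on precisely the ingredients you identify, namely the translation of each announcement into the design $(X,\C)$, the two-sided equivalence between perfect $\delta$-security and $(c+\delta)$-designs (Theorems~\ref{t-design security} and~\ref{new design thm.gen}), and the Stinson et al.\ characterization that $(X,\C)$ is a $3$-design precisely when $p=2s$. Your write-up is in fact more explicit than the paper's, which states the theorem ``for clarity'' after the informal discussion without spelling out the converse via Theorem~\ref{new design thm.gen}.
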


\section{The transversal Russian cards problem} \label{sec: Variant}

In this section, we consider a variation of the generalized Russian cards problem, which we name the \emph{transversal Russian cards problem}, in which we change the manner in which the cards are dealt. Our motivation for restricting the deal is to widen the solution space. Since the generalized Russian cards problem requires a suitable set of $t$-designs to maximize security against Cathy---and constructing $t$-designs for $t > 2$ is in general quite difficult---we explore certain types of deals where suitable constructions are more readily available. An added advantage of our deal restriction is that in this new framework, we can view Alice's hand as an $a$-tuple over an alphabet of size $v$. If Alice's hand represents a secret key, this variation is more in keeping with traditional key agreement schemes in cryptography, as typically secret keys are tuples rather than sets.

Suppose our deck $X$ consists of $n =va$ cards, where $v$ and $a$ are positive integers such that $v > a$. Rather than allowing Alice, Bob, and Cathy to have any hand of the appropriate size, we first split the deck $X$ into $a$ piles, each of size $v$. Alice is given a hand $H_A$ of $a$ cards, such that she holds exactly one card from each pile. Cathy's hand $H_C$ of $c$ cards is assumed to contain no more than one card from each pile. The remainder of the deck becomes Bob's hand, $H_B$. We will refer to this type of deal as a \emph{transversal $(a,b,c)$-deal} (or simply, a \emph{transversal deal}). Observe that we can use the same framework for this problem as for the original; we have only placed a limitation on the set of possible hands Alice, Bob, and Cathy might hold. The necessary modifications to the security definitions and the definition of an informative strategy are straightforward.

This variant admits a nice solution using \emph{transversal designs}; we refer the reader to Section~\ref{subsec: transversal designs} for the relevant definitions and a discussion of these designs. In the context of a transversal design $\TD_\lambda(t,a,v)$, we can view the piles of cards as the groups $G_1, \dotsc, G_a$ of the design. In this case, Alice's hand is a transversal and Cathy's hand is a partial transversal of $G_1, \dotsc, G_a$. Note that Cathy therefore only considers transversals as possible hands for Alice. When we discuss weak (or perfect) $\delta$-security, we are interested in the probability (from Cathy's point of view) that Alice holds partial transversals of order $\delta$.

We first show Theorem~\ref{infor.thm} holds for this variant of the Russian cards problem:
\begin{theorem}
\label{infor.thm.trans}
The announcement $\A$ is informative for Bob if and only if there do not
exist two distinct sets $H_A, H_A^\prime \in \A$ such that 
$\left \lvert H_A \cap  H_A^\prime \right \rvert  \geq a-c$.
\end{theorem}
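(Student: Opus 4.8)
The plan is to adapt the proof of Theorem~\ref{infor.thm} (due to Albert et al.~\cite{AAADH05}) to the transversal setting, checking that every step still goes through when hands are restricted to (partial) transversals of the fixed group partition $G_1,\dots,G_a$. The statement is an ``if and only if'', so I would prove the two directions separately, namely the contrapositive of each.

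For the ``only if'' direction (equivalently: if two such sets exist, $\A$ is not informative), suppose $H_A, H_A' \in \A$ are distinct with $\lvert H_A \cap H_A' \rvert \geq a-c$. Write $D = H_A \cap H_A'$, so $\lvert H_A \setminus D \rvert = \lvert H_A' \setminus D \rvert = a - \lvert D \rvert \leq c$. The key point requiring care is that I must exhibit a \emph{valid transversal hand} $H_C$ for Cathy that is disjoint from both $H_A$ and $H_A'$; in the original problem one just picks any $c$-set in $X \setminus (H_A \cup H_A')$, but here $H_C$ must be a partial transversal (at most one card per pile). I would build $H_C$ as follows: first include $H_A' \setminus D$ (which lies in $\leq c$ piles, one card each, since $H_A'$ is a transversal); then, since $H_A$ and $H_A'$ agree on the piles indexed by $D$ and since $v > a$ guarantees each pile has at least one card outside $H_A \cup H_A'$, pad $H_C$ up to size $c$ using cards from distinct unused piles, avoiding both hands. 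Then $H_C$ is a legal partial transversal with $H_C \cap H_A = \emptyset$, and both $H_A$ and $H_A'$ lie in $\mathcal{P}(H_C,\A)$, so Bob cannot distinguish them: $\A$ is not informative.

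For the ``if'' direction (if no two such sets exist, $\A$ is informative), suppose toward a contradiction that $\A$ is not informative, so there is a transversal hand $H_B$ for Bob and two distinct $H_A, H_A' \in \mathcal{P}(H_B,\A)$, i.e., $H_A \cap H_B = H_A' \cap H_B = \emptyset$. Since the deck decomposes as $X = H_A \sqcup H_B \sqcup (\text{Cathy's } c \text{ cards})$ for the underlying deal, having both $H_A, H_A'$ avoid $H_B$ forces $H_A \cup H_A' \subseteq X \setminus H_B$, and $\lvert X \setminus H_B \rvert = a + c$. Hence $\lvert H_A \cap H_A' \rvert = \lvert H_A \rvert + \lvert H_A' \rvert - \lvert H_A \cup H_A' \rvert \geq 2a - (a+c) = a - c$, contradicting the hypothesis. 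This half is essentially identical to the original argument and needs no special transversal structure beyond the cardinality bookkeeping.

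I expect the main obstacle to be the first direction: specifically, confirming that a legitimate partial-transversal hand $H_C$ disjoint from both $H_A$ and $H_A'$ always exists. The argument hinges on $v > a$ (so each of the $a$ piles contains a card outside $H_A \cup H_A'$, which occupies at most two cards per pile but the piles meeting $D$ only one) and on $c \leq a$ (Cathy's hand fits in distinct piles). I would make explicit that the $\leq c$ piles containing $H_A' \setminus D$ are available (they are disjoint from $D$, and on those piles $H_A$ sits elsewhere or not at all), and that any remaining slots can be filled from the piles not yet used, each of which has a card avoiding the $\leq 2a$ cards of $H_A \cup H_A'$ since $v > a \geq 2$ would need strengthening --- actually one only needs a free card in each chosen pile, and a pile contains at most two cards of $H_A \cup H_A'$, so $v \geq 3$ suffices, and $v > a$ with $a \geq 3$ (which holds since informativeness forces $a > c \geq 1$, and the $a - c = 2$, hmm) --- I would simply note $v > a$ together with the parameter constraints of the deal make the padding always possible, and spell out the one-line counting that guarantees a free card per pile.
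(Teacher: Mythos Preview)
Your approach is essentially the paper's: construct a deal with Alice holding $H_A$ and Cathy holding (at least) $H_A'\setminus H_A$ for one direction, and use $\lvert X\setminus H_B\rvert=a+c$ for the other. In fact you are \emph{more} careful than the paper on one point: the paper simply says ``let Cathy's hand contain all the cards in $H_A'$ that are not in $H_A$; this is possible since $c\geq a-\ell$'' and never checks that the remaining $c-(a-\ell)$ slots can be filled so that $H_C$ stays a partial transversal disjoint from $H_A$. Your padding argument is the right fix, and it can be made clean: the extra cards must come from the $\ell$ ``agreement'' piles (those meeting $D=H_A\cap H_A'$), you need $c-(a-\ell)\leq \ell$ of them, which is just $c\leq a$ (forced by the deal rules), and in each such pile $H_A$ and $H_A'$ occupy the \emph{same} single card, so any of the $v-1\geq 1$ other cards works. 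Your late worry about needing $v\geq 3$ is therefore misplaced---you never pad from a disagreement pile, so only one card per pile must be avoided.

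One small slip: in the converse direction you call $H_B$ a ``transversal hand for Bob''. In the transversal problem Bob receives all remaining $va-a-c$ cards and his hand is \emph{not} a transversal; fortunately your argument only uses $\lvert X\setminus H_B\rvert=a+c$, so just drop the word ``transversal'' there.
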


\begin{proof}
Suppose there exist two distinct sets $H_A, H_A^\prime \in \A$ such that 
$\left \lvert H_A \cap  H_A^\prime \right \rvert  \geq a-c$. We proceed by constructing a card deal consistent with the announcement $\A$ such that $ \{ H_A, H_A^\prime\} \subseteq  \mathcal{P}\left(H_B, \A \right),$ which implies the announcement is not informative for Bob.  

Write $\left \lvert H_A \cap  H_A^\prime \right \rvert   = \ell$. Let Alice's hand be $H_A$, so it is possible for Alice to announce $\A$. Let Cathy's hand contain all the cards in $H_A^\prime$ that are not also contained in $H_A$; this is possible since $c \geq a- \ell$. Then Bob's hand $H_B$ contains all the remaining cards. In particular, we have $H_B \cap (H_A \cup  H_A^\prime) = \emptyset$, so  $ \{ H_A, H_A^\prime\} \subseteq  \mathcal{P}\left(H_B,\A \right)$, as desired.

Conversely, suppose $\{ H_A, H_A^\prime\} \subseteq  \mathcal{P}\left(H_B,\A \right)$, where $H_A \neq H_A^\prime$.
Then $\left \lvert H_A \cup  H_A^\prime \right \rvert  \leq n-b = a+c$, and hence $\left \lvert H_A \cap  H_A^\prime \right \rvert  \geq a-c$.
\end{proof}

In light of Theorem~\ref{infor.thm.trans}, the following result is straightforward.

\begin{theorem}Consider a transversal $(a,b,c)$-deal and suppose that each announcement in an equitable $(a,b,c)$-strategy is a $\TD_1(t,a,v)$ satisfying $t \leq a-c$. Then the strategy is informative for Bob.
\end{theorem}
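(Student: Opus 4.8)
The plan is to apply the intersection characterization of informative announcements (Theorem~\ref{infor.thm.trans}) directly to the blocks of a $\TD_1(t,a,v)$, using the parameter bounds for transversal designs established in Section~\ref{subsec: transversal designs}. The key observation is that in the transversal variant, every possible hand for Alice is a transversal of the group partition $\G = \{G_1,\dots,G_a\}$, and the blocks of a $\TD_1(t,a,v)$ are exactly such transversals; so each announcement in the strategy consists of a set of transversals, and we only need to bound the pairwise intersection sizes of distinct blocks.

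First I would fix an announcement $\A$, which is a $\TD_1(t,a,v)$ with $t \le a-c$, and let $H_A, H_A'$ be two distinct blocks. Suppose for contradiction that $|H_A \cap H_A'| \ge a-c \ge t$. Then $H_A \cap H_A'$ contains a subset $Y$ of size exactly $t$; since both $H_A$ and $H_A'$ are transversals, $Y$ is a partial transversal of $\G$ with $|Y| = t$. By Theorem~\ref{transversal params} with $s = t$, the number of blocks of $\A$ containing $Y$ is exactly $\lambda_t = \lambda v^{t-t} = \lambda = 1$ (here $\lambda = 1$). But $H_A$ and $H_A'$ are two distinct blocks both containing $Y$, a contradiction. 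Hence no two distinct blocks of $\A$ intersect in $a-c$ or more points, and Theorem~\ref{infor.thm.trans} then gives that $\A$ is informative; since this holds for every announcement, the strategy is informative for Bob.

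The only genuinely delicate point — and the one I would be careful to state cleanly — is the verification that the hypotheses of Theorem~\ref{transversal params} are met: we need $t \le t$ (trivially true) and we need $Y$ to be a partial transversal of $\G$, which follows because $Y \subseteq H_A$ and $H_A$, being a block, meets each group in exactly one point, so $Y$ meets each group in at most one point. We also implicitly use $t \le a$ so that a $t$-subset of the $a$-set $H_A$ makes sense, and the chain $t \le a-c \le a-1 < a$ handles this. Everything else is a direct substitution into the already-established parameter formulas, so there is no real calculational obstacle; the proof is essentially the transversal-design analogue of Corollary~\ref{info.corollary}.

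\begin{proof}
Fix an announcement $\A$ in the strategy; by hypothesis $\A$ is a $\TD_1(t,a,v)$ with $t \le a-c$, where we identify the piles of cards with the groups $\G = \{G_1,\dots,G_a\}$. Suppose, for the sake of contradiction, that there exist two distinct blocks $H_A, H_A' \in \A$ with $\left\lvert H_A \cap H_A' \right\rvert \ge a-c$. Since $t \le a-c$, we may choose a subset $Y \subseteq H_A \cap H_A'$ with $\lvert Y \rvert = t$. Because $H_A$ is a block of the transversal design, $H_A$ meets each group $G_j$ in exactly one point, so $Y \subseteq H_A$ meets each $G_j$ in at most one point; that is, $Y$ is a partial transversal of $\G$ with $\lvert Y \rvert = t$. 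By Theorem~\ref{transversal params} (with $s = t$ and $\lambda = 1$), the number of blocks of $\A$ containing $Y$ is exactly $\lambda_t = \lambda v^{t-t} = 1$. But $H_A$ and $H_A'$ are two distinct blocks of $\A$, each containing $Y$, a contradiction. Hence no two distinct blocks of $\A$ intersect in $a-c$ or more points, so by Theorem~\ref{infor.thm.trans} the announcement $\A$ is informative. As $\A$ was an arbitrary announcement of the strategy, the strategy is informative for Bob.
\end{proof}
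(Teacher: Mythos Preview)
Your proof is correct and follows exactly the approach the paper intends: the paper states that the result is ``straightforward'' in light of Theorem~\ref{infor.thm.trans} and omits the proof, and your argument---bounding block intersections via the $\lambda=1$ property of a $\TD_1(t,a,v)$ and then invoking Theorem~\ref{infor.thm.trans}---is precisely the intended filling-in of that remark.
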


We can use an argument similar to that of Swanson and Stinson~\cite{cardsSwansonS12} to derive a lower bound on the size of Alice's announcement.

\begin{theorem}\label{bound.thm2}
Consider a transversal $(a,b,c)$-deal. Suppose $a >c$ and there exists a strategy for Alice that is informative for Bob. Then the number of announcements $m$ satisfies $m \geq v^c$.
\end{theorem}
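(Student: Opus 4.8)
The plan is to mimic the counting argument Swanson and Stinson use for the original problem, adapting it to the transversal setting. The key observation is that distinct announcements must be ``far apart'' on the hands that witness them, and then to count how many distinct announcements are forced by looking at all the hands Cathy could rule out. First I would fix an announcement $\A$ and a hand $H_A \in \A$. For each partial transversal $H_C$ of size $c$ that is group-disjoint from $H_A$ (there are exactly $v^c$ such hands, since $H_A$ meets $c$ of the $a$ groups and we choose one of $v$ points from each of the remaining $a-c$ groups... wait, that is not right either, since $H_C$ may reuse groups met by $H_A$). Let me reconsider: $H_C$ is any partial transversal of size $c$ disjoint from $H_A$ as a set, i.e.\ $H_C \cap H_A = \emptyset$ and $|H_C \cap G_j| \le 1$ for all $j$. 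The count of such $H_C$ is at least $v^c$ if we restrict to the $c$ groups met by some fixed subset, but more cleanly: there are at least $v^c$ hands $H_C$ of size $c$ with $H_C$ a partial transversal and $H_C \cap H_A = \emptyset$, obtained by fixing any $c$ of the groups and, from each, choosing any of the (at least $v-1 \ge v-a+c$, but we only need one valid choice) points not in $H_A$; more care is needed to get exactly $v^c$, so instead I would count pairs.

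The cleaner route, carrying out the Swanson--Stinson idea directly: consider all deals in which Alice holds some hand consistent with an announcement in $\D$, Cathy holds a partial transversal $H_C$ of size $c$, and Bob holds the rest. For a fixed announcement $\A$ and a fixed $H_A \in \A$, I would look at the $v^c$ possible ``completions'' of Cathy's hand drawn from a fixed set of $c$ groups disjoint from the $\le c$ groups... but $H_A$ meets $a > c$ groups, so there may be no $c$ groups disjoint from $H_A$. The correct formulation: pick $c$ groups arbitrarily; from each pick a point; this point may or may not lie in $H_A$. Among the $v^c$ resulting partial transversals $H_C$ on those $c$ groups, for each one with $H_C \cap H_A = \emptyset$ we have $H_A \in \mathcal{P}(H_B,\A)$ where $H_B = X \setminus (H_A \cup H_C)$, so informativity forces $\mathcal{P}(H_B,\A') $ for \emph{every other} announcement $\A'$ containing a hand disjoint from $H_B$ to be handled — and here is the crux — by Theorem~\ref{infor.thm.trans}, for a \emph{single} announcement $\A$, no two of its hands agree on $\ge a-c$ points, so the map sending each $H_C$ (group-disjoint from $H_A$, of size $c$) to the unique hand in $\A$ disjoint from $H_B = X\setminus(H_A\cup H_C)$ is well-defined; but this just recovers $H_A$. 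So a single announcement is not enough, and the point must be that \emph{different} Cathy-hands force \emph{different} announcements.

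Here is the argument I would actually write. Fix any $H_A$ that appears in some announcement. Choose $c$ groups, say $G_1,\dots,G_c$, and form the $v^c$ partial transversals $H_C$ with $H_C \subseteq G_1 \cup \dots \cup G_c$; replace the point of $H_C$ in any group also met by $H_A$ so that $H_C\cap H_A=\emptyset$ — more simply, restrict attention to the $(v-1)^c \ge$ (something) hands avoiding $H_A$, but to reach the bound $v^c$ exactly I would instead argue: for each of the $v^c$ functions $f:\{1,\dots,c\}\to\{1,\dots,v\}$ choosing one card per group, let $H_C(f)$ be the resulting hand and $H_B(f) = X\setminus(H_A\cup H_C(f))$ when $H_C(f)\cap H_A = \emptyset$. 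The deal $(H_A, H_B(f), H_C(f))$ is legal, Alice makes some announcement $\A_f$, and since the strategy is informative, $H_A$ is the \emph{only} hand of $\A_f$ disjoint from $H_B(f)$. Now I claim $f\mapsto \A_f$ has the property that $\A_f = \A_{f'}$ forces $f$ and $f'$ to agree — because if $\A_f = \A_{f'} = \A$ then both $H_C(f)$ and $H_C(f')$ are consistent-with-$\A$ hands for Cathy, and... this is where the real work is, and I expect it to be the main obstacle: extracting, from informativity plus the transversal structure, enough collision-freeness to conclude $m \ge v^c$. I would resolve it exactly as in \cite{cardsSwansonS12}: suppose $\A_f = \A_{f'}$ with $H_C(f)\neq H_C(f')$; then $H_C(f)\cup H_C(f')$ is a partial transversal of size $> c$ missing $H_A$, and one can find a Bob-hand $H_B^*$ disjoint from two distinct $a$-subsets of $\A_f$ (namely $H_A$ and a second hand obtained by swapping a card of $H_A$ for one in $H_C(f)\triangle H_C(f')$), contradicting $|\mathcal{P}(H_B^*,\A_f)|\le 1$. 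Hence all $v^c$ announcements $\A_f$ are distinct, giving $m \ge v^c$. I would double-check the swap produces a genuine element of the announcement and a genuine partial transversal; that bookkeeping, not the idea, is the delicate part.
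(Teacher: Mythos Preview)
Your approach has a genuine gap. You fix Alice's hand $H_A$ and vary Cathy's hand $H_C(f)$ over $v^c$ choices, then try to argue that the map $f \mapsto \A_f$ is injective. But Alice's announcement is chosen knowing only $H_A$, not $H_C$; with $H_A$ fixed, nothing in the problem forces $\A_f$ to depend on $f$ at all, so there is no reason this map should be injective. Your attempted contradiction (``swap a card of $H_A$ for one in $H_C(f)\triangle H_C(f')$ to get a second hand of $\A_f$'') assumes the swapped transversal actually lies in $\A_f$, which is entirely unjustified --- an announcement need not be closed under such swaps. There is also a secondary issue: since $H_A$ is a full transversal, it meets each of your chosen groups $G_1,\dots,G_c$, so only $(v-1)^c$ of your $v^c$ functions $f$ give $H_C(f)\cap H_A=\emptyset$; you acknowledge this tension but never resolve it.

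The paper's argument is both simpler and dual to yours: instead of fixing \emph{Alice's} hand and varying Cathy's, one fixes a partial transversal $X'$ of size $a-c$ (one card from each of $a-c$ piles) and varies \emph{Alice's} hand over the $v^c$ full transversals extending $X'$. Any two of these extensions intersect in at least $|X'|=a-c$ points, so by Theorem~\ref{infor.thm.trans} no informative announcement can contain two of them. Since every transversal must lie in some announcement, this forces $m \ge v^c$. The key object to fix is a large common \emph{core} of Alice-hands, not a single Alice-hand.
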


\begin{proof}
Fix a set of cards $X^\prime$ of size $a-c$, no two of which are from the same pile. There are $v^c$ possible hands for Alice that contain $X^\prime$. These hands must occur in different announcements, by Theorem \ref{infor.thm.trans}. Therefore $m \geq v^c$.
\end{proof}

As before, we refer to a strategy that meets this bound as \emph{optimal}. We have the following result.
\begin{theorem}
\label{LS.thm variant}
Consider a transversal $(a,b,c)$-deal and suppose that $a > c$. An optimal $(a,b,c)$-strategy for Alice that is informative for Bob is equivalent to a 
large set of $\TD_1(t,a,v)$, where $t = a-c$. 
\end{theorem}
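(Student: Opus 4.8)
The plan is to prove the equivalence in both directions, using the lower bound $m \ge v^c$ from Theorem~\ref{bound.thm2} as the pivot. The key observation is that optimality ($m = v^c$) forces each announcement to be a transversal design of the right parameters, and that the announcements collectively partition the set of all transversals of $\G$ — which is exactly the definition of a large set of $\TD_1(a-c,a,v)$.

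First I would show that a large set of $\TD_1(t,a,v)$ with $t = a-c$ yields an optimal informative strategy. By the remark following Definition~\ref{def: large TD}, the number of designs in such a large set is $N = v^a / (1 \cdot v^t) = v^a/v^{a-c} = v^c$, so taking the block sets $\B_1,\dots,\B_N$ as Alice's announcements gives a strategy with $m = v^c$ announcements. Since the large set covers every transversal of $\G$ exactly once, every possible hand $H_A$ for Alice lies in exactly one announcement, so $\gamma = 1$ and the strategy is (trivially) equitable; it is informative for Bob because each announcement is a $\TD_1(t,a,v)$ with $t = a-c$, so by Theorem~\ref{infor.thm.trans} (no two blocks meet in $\ge a-c = t$ points, since $\lambda = 1$ means a $t$-subset lies in only one block) the condition is met. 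Optimality is immediate since $m = v^c$ matches the bound of Theorem~\ref{bound.thm2}.

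Conversely, suppose $\D$ is an optimal informative $(a,b,c)$-strategy, so $m = v^c$. I would fix any partial transversal $X'$ of $\G$ with $|X'| = a-c$ and count: there are exactly $v^c$ transversals of $\G$ containing $X'$, and by the argument in the proof of Theorem~\ref{bound.thm2} these $v^c$ hands lie in pairwise distinct announcements. Since there are only $m = v^c$ announcements, each such hand lies in exactly one announcement and conversely each announcement contains exactly one transversal extending $X'$. In other words, for every $(a-c)$-subset $X'$ from distinct groups, every announcement $\A \in \D$ contains precisely one block containing $X'$. Translating: each announcement $\A$, viewed as a set of transversals of $\G$, has the property that every partial transversal of size $a-c$ is contained in exactly one block — this says $(X,\G,\A)$ is a $\TD_1(a-c,a,v)$ (the group and block axioms hold by construction, and the defining $t$-balance condition is exactly what we just verified). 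Moreover, since $\bigcup_i \A_i$ must be all of $\binom{X}{a}$ restricted to transversals (the constraint $\bigcup \A_i = \binom{X}{a}$ in the transversal variant), and we have $v^c$ designs each containing $v^{a}/v^{a-c} = v^c$ blocks... wait, more carefully: a transversal $H_A$ lies in some announcement; if it lay in two, say $\A_i$ and $\A_j$, then picking any $(a-c)$-subset $X' \subseteq H_A$ from distinct groups, both $\A_i$ and $\A_j$ would be among the $v^c$ distinct announcements hit by extensions of $X'$, fine — but $H_A$ itself would be the unique extension in each, and there are $v^c$ transversals extending $X'$ distributed one-per-announcement across all $v^c$ announcements, so $H_A$ sits in exactly one. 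Hence the $\A_i$ partition the transversals of $\G$, which is the large set condition.

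**The main obstacle** I anticipate is the bookkeeping in the converse direction: one must be careful that the double-counting (partial transversals of size $a-c$ versus announcements) genuinely forces *both* that each announcement is a $\TD_1$ *and* that the announcements are disjoint (the partition property). The cleanest route is to first establish, via the pigeonhole argument on the $v^c$ extensions of a fixed $X'$, that each announcement meets the extension-set of $X'$ in exactly one block; then let $X'$ range to get the $t$-design property per announcement; then separately argue disjointness from the fact that each transversal's own $(a-c)$-subsets pin it to a single announcement. One should also double-check the edge interaction with $\gamma$-equitability — optimality plus informativeness forces $\gamma = 1$, so ``equitable'' is automatic here and need not be assumed separately, which is worth a remark. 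The forward direction is routine once the count $N = v^c$ is in hand.
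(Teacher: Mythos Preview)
Your proposal is correct and follows essentially the same approach as the paper: both directions hinge on the pigeonhole argument applied to the $v^c$ transversals extending a fixed partial transversal $X'$ of size $a-c$, forcing each of the $v^c$ announcements to contain exactly one such extension and hence to be a $\TD_1(a-c,a,v)$. You are more explicit than the paper about the partition (disjointness) property in the converse direction and about why $\gamma = 1$ is forced, but these are elaborations of the same argument rather than a different route.
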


\begin{proof}
Suppose there exists a large set of $\TD_1(a-c,a,v)$. Recall from Definition~\ref{def: large TD} that the set of  all blocks sets (i.e., possible announcements) in this large set form a partition of the set of all transversals and that there are precisely $v^c$ designs in such a set. Then it is easy to see that this immediately yields an optimal $(a,b,c)$-strategy for Alice that is informative for Bob.

Conversely, suppose there is an optimal $(a,b,c)$-strategy for Alice that is informative for Bob.
We need to show that every announcement is a $\TD_1(a-c,a,v)$. As in the proof of Theorem \ref{bound.thm2}, fix a set of cards $X^\prime$ of size $a-c$, no two of which are from the same pile. The $v^c$ possible hands for Alice that contain $X^\prime$ must occur in different
announcements. However, there are a total of $v^c$ announcements, so every announcement must
contain exactly one block that contains $X^\prime$.
\end{proof}

The following result shows how transversal designs with arbitrary $t$ can be used to achieve weak $\delta$-security for permissible parameters $\delta \leq t-c$. As in Definition~\ref{def: group disjoint}, for a transversal design $\TD_\lambda(t,a,v)$, say  $(X, \G, \B)$, and a partial transversal $Y$ of $\G$, we let $G_Y$ denote the set of groups of the transversal design that have nonempty intersection with the partial transversal $Y$.

\begin{theorem}\label{transversal design security} Consider a transversal $(a,b,c)$-deal and suppose that each announcement in an equitable $(a,b,c)$-strategy is a $\TD_\lambda(t,a,v)$ for some $t$ and $\lambda$, where $c \leq t-1$. Then the strategy is weakly $(t-c)$-secure against Cathy.
\end{theorem}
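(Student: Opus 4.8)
The plan is to verify the combinatorial condition for weak $(t-c)$-security given in Theorem~\ref{equitable1}, part~\ref{weak1}. That is, fix an announcement $\A$, which by hypothesis is a $\TD_\lambda(t,a,v)$ with group set $\G = \{G_1,\dots,G_a\}$ (the $a$ piles of the deal), fix a possible hand $H_C$ for Cathy with $\mathcal{P}(H_C,\A)\neq\emptyset$, and fix a $\delta'$-subset $Y \subseteq X\setminus H_C$ with $1\le \delta'\le t-c$. Since $H_C$ is a partial transversal of $\G$ with $\lvert H_C\rvert = c$, and since the only hands $\A$ contains are transversals of $\G$, $Y$ must also be a partial transversal of $\G$ (if $Y$ had two cards from the same pile, no transversal could contain both, and we will see below this forces the count to vanish, which is exactly what we need to rule out). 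I would first observe that in fact $Y$ and $H_C$ are both partial transversals, but $Y$ and $H_C$ need not be group disjoint; so the relevant quantities are counts of blocks containing certain partial transversals and avoiding others, controlled by Theorems~\ref{transversal params} and~\ref{transversal security}.

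The key computation proceeds in two steps. First, $\lvert \mathcal{P}(H_C,\A)\rvert$ is the number of blocks (transversals) avoiding all $c$ points of $H_C$; since $H_C$ is a partial transversal with $c \le t$, Theorem~\ref{transversal security} (with $i=0$, $j=c$) gives $\lvert \mathcal{P}(H_C,\A)\rvert = \lambda v^{t-c}(v-1)^c > 0$. Second, I need $\lvert\{H_A \in \mathcal{P}(H_C,\A) : Y\subseteq H_A\}\rvert$, the number of transversals containing all of $Y$ and none of $H_C$. Here I would split $H_C = H_C' \cup H_C''$, where $H_C'$ consists of the cards of $H_C$ lying in piles met by $Y$ (call this set of piles $G_Y$) and $H_C''$ the rest, which is group disjoint from $Y$. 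A transversal containing $Y$ automatically misses every card of $H_C'$ (each such pile already contributes its unique card, namely the one in $Y$, which differs from the $H_C'$-card since $Y\cap H_C=\emptyset$). So the count equals the number of transversals containing $Y$ and avoiding $H_C''$; applying Theorem~\ref{transversal security} with $i = \delta' = \lvert Y\rvert$ and $j = \lvert H_C''\rvert \le c$ — noting $i+j \le \delta' + c \le t$ — this equals $\lambda v^{t-\delta'-j}(v-1)^j$, which is strictly positive. This establishes the lower bound $\lvert\{H_A\in\mathcal{P}(H_C,\A):Y\subseteq H_A\}\rvert \ge 1$.

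For the upper bound I must show this count is strictly less than $\lvert\mathcal{P}(H_C,\A)\rvert$, i.e., that not every hand in $\mathcal{P}(H_C,\A)$ contains $Y$. Pick any point $y \in Y$ lying in pile $G_{j_0}$, and pick a point $y' \in G_{j_0}$ with $y' \notin Y$ and $y' \notin H_C$ (such a $y'$ exists because $v > a \ge c + \delta' \ge \lvert (Y\cup H_C)\cap G_{j_0}\rvert$, so $G_{j_0}$ is not exhausted — I would state the cardinality bound carefully, using $v$ large relative to the number of forbidden cards in that pile). Then I claim there is a transversal in $\mathcal{P}(H_C,\A)$ passing through $y'$ and avoiding $H_C$: by Theorem~\ref{transversal security} applied to $\{y'\}$ versus the at-most-$c$ group-disjoint cards of $H_C$ outside $G_{j_0}$, the number of such transversals is positive. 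Any such transversal contains $y' \in G_{j_0}$, hence cannot contain $y \in Y$, so it witnesses that the inclusion is proper.

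The main obstacle I anticipate is the bookkeeping around whether $Y$ and $H_C$ share piles — getting the group-disjointness hypotheses of Theorems~\ref{transversal params} and~\ref{transversal security} to apply requires the $H_C = H_C' \cup H_C''$ decomposition and the small argument that containing $Y$ forces avoiding $H_C'$ for free, rather than any deep difficulty. A secondary point requiring care is verifying that $v$ is genuinely large enough to guarantee the existence of the auxiliary point $y'$ and, more basically, that a $\TD_\lambda(t,a,v)$ with $t \ge c+1$ and $v>a$ actually makes the relevant $t$-subsets realizable; all the needed inequalities ($\delta' + c \le t$, $v > a$) are in the hypotheses, so this is routine once stated. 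I would close by noting that the choices of $\A$, $H_C$, $Y$, and $\delta'$ were arbitrary, so by Theorem~\ref{equitable1}.\ref{weak1} the strategy is weakly $(t-c)$-secure.
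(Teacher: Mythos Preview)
Your proof is correct and follows essentially the same approach as the paper: both compute $\lvert\mathcal{P}(H_C,\A)\rvert = \lambda v^{t-c}(v-1)^c$ via Theorem~\ref{transversal security}, then decompose $H_C$ according to which of its piles are met by $Y$ (your $j = \lvert H_C''\rvert$ is exactly the paper's $\ell = \lvert G_{H_C}\setminus G_Y\rvert$) to obtain the count $\lambda v^{t-\delta'-j}(v-1)^j$.

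The only noteworthy difference is in the upper bound. The paper simply records the ratio
\[
\frac{\lvert\{H_A\in\mathcal{P}(H_C,\A):Y\subseteq H_A\}\rvert}{\lvert\mathcal{P}(H_C,\A)\rvert}=\frac{1}{v^{\delta'+\ell-c}(v-1)^{c-\ell}}
\]
and (implicitly) observes it lies strictly between $0$ and $1$, whereas you construct an explicit witness transversal through an auxiliary point $y'$. Both work; the paper's route is shorter, while yours makes the strict inequality concrete. One small cleanup: your cardinality bound for the existence of $y'$ is overcomplicated --- since $Y$ and $H_C$ are each partial transversals, $\lvert(Y\cup H_C)\cap G_{j_0}\rvert\le 2$, so you only need $v\ge 3$, which follows from $v>a\ge t\ge 2$. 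Also, the paper's adapted security definition for the transversal variant already restricts $Y$ to partial transversals, so your parenthetical about what happens otherwise is unnecessary.
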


\begin{proof} Fix an announcement $\A$ for Alice. Suppose $\A$ is a $\TD_\lambda(t,a,v)$, say  $(X, \G, \B)$. Consider a possible hand $H_C$ for Cathy.  In particular, $H_C$ is a partial transversal of the groups $G_1, \dotsc, G_a \in \G$. 

Since $c \leq t$, Theorem~\ref{transversal security} implies there are 
\[
\left \lvert \mathcal{P}\left(H_C,\A \right)\right \rvert  = \lambda v^{t-c}(v-1)^c 
\]
blocks in 
$\A$ that do not contain any of the points of $H_C$.

Consider a partial transversal $Y$ of order $\delta \leq t-c$. Since $Y$ is not necessarily group disjoint from $H_C$, we must consider the number of groups which intersect both $Y$ and $H_C$. In particular, the $\delta$-subset $Y$ never occurs with any other cards from $G_Y \cap G_{H_C}$, by definition of transversal designs.

Let $\ell = \left \lvert G_{H_C} \middle \backslash G_{Y} \right \rvert $. That is, $\ell$ is the number of groups that do not intersect $Y$, but from which Cathy has cards.  Write $z_1,\dotsc,z_\ell$ for Cathy's cards from these $\ell$ groups. We wish to compute the number of blocks which contain all the points in $Y$ but miss all of the points of $H_C$. This is the same as the number of blocks that contain all the points in $Y$ but miss all the points in $\{z_1,\dotsc,z_\ell\}$. Since $\ell + \delta \leq t$, by Theorem~\ref{transversal security}, we have $\lambda v^{t-\ell-\delta}(v-1)^\ell$ such blocks.

That is, a given set of points $x_1,\dotsc,x_\delta \in X \backslash H_C$ that might be held by Alice is contained in precisely 
\[ \left \lvert \{ H_A \in \mathcal{P}\left(H_C, \A \right) : x_1,\dotsc,x_{\delta} \in H_A \}\right \rvert  = \lambda v^{t-\ell-\delta}(v-1)^\ell\] of the blocks in $\mathcal{P}\left(H_C, \A \right)$, where $\ell = \left \lvert G_{H_C} \middle \backslash G_{\{x_1,\dotsc,x_\delta\}} \right \rvert .$

Thus, for any partial transversal of $\delta$ distinct points $x_1,\dotsc,x_\delta \in X \backslash H_C$, we have  
\[
\frac{\left \lvert \{ H_A \in \mathcal{P}\left(H_C, \A \right) : x_1,\dotsc,x_{\delta} \in H_A \}\right \rvert }{\left \lvert \mathcal{P}\left(H_C, \A \right)\right \rvert  } = \frac{\lambda v^{t-\ell-\delta}(v-1)^\ell}{\lambda v^{t-c}(v-1)^c }
= \frac{1}{v^{\delta+\ell-c}(v-1)^{c-\ell}},
\]
so Condition~\ref{weak1}~of~Theorem~\ref{equitable1} is satisfied.
\end{proof}

\begin{remark}  We do not achieve perfect $(t-c)$-security in Theorem~\ref{transversal design security} because the number of hands of $\mathcal{P}\left(H_C, \A \right)$ containing a given partial transversal $Y$ of $\delta$ distinct points, where $\delta \leq t-c$, depends on $\ell = \left \lvert G_{H_C} \middle \backslash G_{Y} \right \rvert$. In fact, we cannot expect to achieve better security than that of the construction given in Theorem~\ref{transversal design security} for this variant of the generalized Russian cards problem. This is because the rules for the deal imply that for each pile from which Cathy holds a card, Cathy knows that Alice holds one of the other $(v-1)$ cards, and for every other pile, Cathy knows only that Alice holds one of the other $v$ cards.
\end{remark}

As discussed in Section~\ref{subsec: transversal designs}, large sets of transversal designs $\TD_\lambda(t,k,v)$ are easy to construct when you have a linear  $\TD_\lambda(t,k,v)$ ``starting design''. As stated in Theorem~\ref{linear OA p.p.thm}, a linear $\TD_1(t,q,q)$ exists whenever the point set $X = (\mathbb{F}_q)^2$ and $q$ is a prime power. The construction method for such a transversal design is simple; we refer the reader to the relevant discussion in Section~\ref{subsec: transversal designs} on Theorem~\ref{linear OA thm} and Corollaries~\ref{linear OA p.p.thm}~and~\ref{linear TD p.p. thm}.

In particular, we can construct a linear $\TD_1(t,a,q)$ for a prime power $q \geq a$ by first constructing a $\TD_1(t,q,q)$ and then (if necessary) deleting $q-a$ groups.  This yields a wide range of informative and weakly $(t-c)$-secure $(a, n-a-c, c)$-strategies for card decks of size $n=aq$ and any choice of $c$ satisfying $c \leq \min \{t-1, a-t\}$. If we take $t = a-c$, these strategies are optimal. We summarize this result in the following theorem.

\begin{theorem} Consider the transversal Russian cards problem. Let $q$ be a prime power such that $q \geq a$ and $c \leq \frac{a-1}{2}$. Then there exists an equitable $(a,aq-a-c,c)$-strategy that is optimal,  informative for Bob, and weakly $(a-2c)$-secure against Cathy.
\end{theorem}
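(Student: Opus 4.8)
The plan is to assemble this result directly from the machinery already built up in the section, treating it as a combination of three earlier theorems together with the standard Reed--Solomon / linear transversal design construction. The target claim has three assertions about the strategy: it is informative for Bob, it is weakly $(a-2c)$-secure against Cathy, and it is optimal. I would first set $t = a-c$, and observe that the hypothesis $c \leq \frac{a-1}{2}$ is exactly what guarantees $c \leq t-1$ and $c \leq a-t$, so that both Theorem~\ref{transversal design security} (with $\lambda=1$) and the informativeness criterion of Theorem~\ref{infor.thm.trans} are applicable. Thus the job reduces to exhibiting an equitable $(a,aq-a-c,c)$-strategy each of whose announcements is a $\TD_1(a-c,a,q)$, and whose announcement count is exactly $v^c = q^c$.

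First I would construct the starting design. By Corollary~\ref{linear TD p.p. thm} there exists a linear $\TD_1(t,q,q)$ on point set $(\mathbb{F}_q)^2$ whenever $q$ is a prime power (this is the Reed--Solomon construction of Theorem~\ref{linear OA thm} applied to the $t \times q$ Vandermonde-type matrix $M$ with columns $[1,x,\dots,x^{t-1}]^T$). Since $q \geq a$, deleting $q-a$ of the $q$ groups yields a linear $\TD_1(t,a,q)$: deleting a group simply deletes one coordinate from every block, and the $t$-balance property is preserved because it only concerns $t$-subsets lying in $t$ distinct surviving groups, which were already balanced in the larger design. Next I would pass to the large set: because this $\TD_1(t,a,q)$ is linear, its block set corresponds to a subspace $D$ of $(\mathbb{F}_q)^a$, and the cosets of $D$ partition $(\mathbb{F}_q)^a$ into $q^a / q^t = q^{a-t} = q^c$ translates, each of which is again a $\TD_1(t,a,q)$ (this is exactly the cosets-of-a-linear-array construction discussed at the end of Section~\ref{subsec: transversal designs}, and it produces a large set in the sense of Definition~\ref{def: large TD}). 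Taking these $q^c$ transversal designs as Alice's announcements gives an equitable $(a,aq-a-c,c)$-strategy with $m = q^c$ announcements.

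Now I would invoke the three earlier results to finish. Informativeness for Bob: each announcement is a $\TD_1(t,a,q)$ with $t = a-c \leq a-c$, so by the transversal analogue just preceding (the theorem stating that a $\TD_1(t,a,v)$ announcement with $t \leq a-c$ is informative), the strategy is informative for Bob. Weak security: each announcement is a $\TD_\lambda(t,a,q)$ with $\lambda=1$ and $c \leq t-1$, so Theorem~\ref{transversal design security} gives weak $(t-c)$-security, i.e.\ weak $(a-2c)$-security. Optimality: by Theorem~\ref{bound.thm2} any informative strategy needs $m \geq v^c = q^c$ announcements, and we have exactly $q^c$; equivalently, Theorem~\ref{LS.thm variant} identifies optimal informative strategies with large sets of $\TD_1(a-c,a,v)$, which is precisely what we constructed.

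The only genuinely non-mechanical point — and the step I would write out most carefully — is verifying that truncating a linear $\TD_1(t,q,q)$ to $a$ groups still yields a $\TD_1(t,a,q)$, and in particular still corresponds to a \emph{subspace} of $(\mathbb{F}_q)^a$ so that the coset construction applies; this is where one must be slightly careful that deleting columns of the generator matrix $M$ keeps every set of $t$ remaining columns linearly independent (true, since any $t$ columns of the full Vandermonde-type $M$ are independent) and that the row space of the truncated $M$ has full dimension $t$. Everything else is a direct citation of Theorems~\ref{infor.thm.trans}, \ref{bound.thm2}, \ref{LS.thm variant}, and~\ref{transversal design security}, plus arithmetic with the constraint $c \leq \frac{a-1}{2}$.
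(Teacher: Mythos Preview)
Your proposal is correct and follows essentially the same approach as the paper: the paper presents this theorem without a separate proof block, treating it as a summary of the immediately preceding discussion, which does exactly what you describe---construct a linear $\TD_1(t,q,q)$ via the Reed--Solomon construction, truncate to $a$ groups, pass to the large set via cosets, set $t=a-c$, and then invoke Theorems~\ref{LS.thm variant} and~\ref{transversal design security}. Your write-up is in fact more explicit than the paper's, particularly in checking that $c \leq \tfrac{a-1}{2}$ yields both $c \leq t-1$ and $c \leq a-t$, and in flagging the (minor) verification that truncating the Vandermonde generator matrix preserves rank $t$ and the $t$-column independence property.
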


\section{Discussion and comparison with related work}
\label{sec: Related Work}

The Russian cards problem and variants of it has received a fair amount of attention in the literature, with focus ranging from possible applications to key generation~\cite{FPR91,FW91,FW93,FW93_2,FW96,MSN02,KMN04,ACDFJS11,ACDFJS11_ext}, to analyses based on epistemic logic\cite{D03,D05,DHMR06, CK08}, to card deals with more than three players~\cite{DY10,HD11}. Of more relevance to our work is the recent research that takes a combinatorial approach~\cite{AAADH05,ADR09,ACDFJS11_ext,ACDFJS11,CDFJS,cardsSwansonS12}, on which we now focus. 

Many useful results concerning parameter bounds and announcement sizes for weak 1-security, some of which we use in this paper, are given by Albert et al.~\cite{AAADH05}. Albert et al.~\cite{ACDFJS11,ACDFJS11_ext} and Cord\'{o}n-Franco et al.~\cite{CDFJS} discuss protocols for card deals of a particular form that achieve weak 1-security, using card sums  modulo an appropriate parameter for announcements. Atkinson et al.~\cite{ADR09} is the only work of which we are aware that treats security notions stronger than weak 1-security, other than work by Swanson and Stinson~\cite{cardsSwansonS12} and subsequent work by Cord\'{o}n-Franco et al.~\cite{CDFS13}.

In addition, there has been recent work~\cite{DS11,CDFS12} in which protocols consisting of more than one announcement by Alice and Bob are considered, which is a generalization of the problem which we consider here. Van Ditmarsch and Soler-Toscano~\cite{DS11} show that no good announcement exists for card deals of the form $(4,4,2)$ using bounds from Albert et al.~\cite{AAADH05}. The authors instead give an interactive protocol that requires at least three rounds of communication in order for  Alice and Bob to learn each other's hands; their protocol uses combinatorial designs to determine the initial announcement by Alice and the protocol analysis is done using epistemic logic. 

Cord\'{o}n-Franco et al.~\cite{CDFS12} consider four-step solutions that achieve weak $1$-security for the generalized Russian cards problem with parameters $(a, b, c)$ such that $c > a$; this is the first work that shows it is possible to achieve weak $1$-security in cases where Cathy holds more cards than one of the other players. The authors demonstrate the existence of a necessary construction for Bob's announcement when the card deal parameters satisfy specific conditions and briefly address the feasibility of finding such constructions in practice. In particular, the authors leave as an interesting open problem efficient algorithms for producing Bob's announcement.

In this paper, we build extensively on results by Swanson and Stinson~\cite{cardsSwansonS12}. In particular,  we greatly simplify the proofs for results connecting certain types of perfectly $\delta$-secure deals and Steiner systems, originally shown in Swanson and Stinson~\cite{cardsSwansonS12}. The construction technique using a ``starting design'', given in Theorem~\ref{aut thm gen} is a generalization of the technique given by Swanson and Stinson~\cite{cardsSwansonS12}. This generalized construction technique allows us to answer in the affirmative the question on the existence of perfectly secure and informative strategies for deals in which Cathy holds more than one card. 

Cord\'{o}n-Franco et al.~\cite{CDFS13} further elaborate on protocols of length two and the notion of weak $\delta$-security. The authors present a geometric protocol, discussed in Section~\ref{subsec: geom}, based on hyperplanes that yields informative and weakly $\delta$-secure equitable $(a,b,c)$-strategies for appropriate parameters. In particular, this protocol allows Cathy to hold more than one card. In certain card deals, this protocol achieves perfect $\delta$-security for $\delta$ equal to one or two. We remark that with the exception of Section~\ref{subsec: geom}, our results were completed independently of Cord\'{o}n-Franco et al.~\cite{CDFS13}.

\section{Concluding remarks and future work}
\label{sec: Conclusion}

We give a characterization for solutions to the generalized Russian cards problem that are perfectly $\delta$-secure. That is, we show an equivalence between a $\gamma$-equitable strategy that is perfectly $\delta$-secure for some $\delta$ and a set of $(c+\delta)$-designs on $n$ points with block size $a$, where this set must satisfy the additional property that every $a$-subset of $X$ occurs in precisely $\gamma$ of these designs.

Building on the results of Swanson and Stinson~\cite{cardsSwansonS12}, we show how to use a ``starting'' $t$-$(n,a,1)$-design to construct equitable $(a,b,c)$-strategies that are informative and perfectly $(t-c)$-secure against Cathy for any choice of $c$ satisfying $c \leq \min \{t-1,a-t\}$. In particular, this indicates that if an appropriate $t$-design exists, it is possible to achieve perfect security for deals where Cathy holds more than one card.  We present an example construction, based on inversive planes, for $(q+1, q^2-q-2, 2)$-strategies which are perfectly 1-secure against Cathy and informative for Bob, where $q$ is a prime power. We also analyze the security properties of Cord\'{o}n-Franco et al.'s~\cite{CDFS13} geometric protocol, remarking that this protocol yields a nice construction for a 3-design for certain parameters.

In addition, we discuss a variation of the Russian cards problem which admits nice solutions using transversal designs. The variant changes the manner in which the cards are dealt, but the resulting problem can be solved using large sets of transversal designs with $\lambda=1$ and arbitrary $t$, which are easy to construct. In particular, this solution is optimal in terms of the number of announcements and provides the strongest possible security for appropriate parameters. That is, for card decks of size $aq$, where $q \geq a$ is a prime power, we achieve $(a,aq-a-c,c)$-strategies that are optimal,  informative for Bob, and weakly $(a-2c)$-secure against Cathy for $c \leq \frac{a-1}{2}$.

There are many open problems in the area, especially for deals with $c > 1$. Given the general difficulty of constructing $t$-designs for $t > 2$ and $\lambda =1$, we see that constructing perfectly $\delta$-secure and informative strategies for $c >1$ is a difficult combinatorial problem. A more promising direction for the case $c >1$ may be strategies that are weakly $\delta$-secure for $\delta >1$, a concept first introduced by Swanson and Stinson~\cite{cardsSwansonS12}, which has received some attention in current literature~\cite{CDFS13}. In particular, further characterizing such strategies using combinatorial notions might prove informative.

\section*{Acknowledgments}
We would like to thank the anonymous referee for their valuable comments.

\bibliographystyle{splncs03}
\bibliography{cards}
\end{document}